\documentclass[a4paper]{article}

\usepackage{Lyapunov}

\usepackage{fancyhdr}
\pagestyle{fancy}
\lhead{}
\chead{}
\rhead{}
\lfoot{}
\cfoot{\thepage}
\rfoot{}

\usepackage[round,comma,authoryear]{natbib}

\usepackage{graphicx,url}

\usepackage[english]{babel}
\usepackage[utf8]{inputenc}

\usepackage{indentfirst}
\setlength{\parindent}{1em}

\usepackage{enumitem}

\usepackage{amsmath,amsfonts,amssymb,amscd,amsthm,xspace}
\theoremstyle{plain}
\newtheorem{theorem}{Theorem}

\newtheorem{lemma}{Lemma}
\newtheorem{proposition}{Proposition}

\theoremstyle{definition}
\newtheorem{definition}{Definition}

\newtheorem{example}{Example}

\theoremstyle{remark}
\newtheorem{remark}{Remark}

\sloppy

\title{Generalised Lyapunov Functions and Functionally Generated Trading Strategies\footnote{We are grateful to Andrea Macrina and Daniel Schwarz for their detailed reading and helpful comments.}}

\author{{\sc Johannes Ruf}\inst{1} ~~~ {\sc Kangjianan Xie}\inst{2}}

\address{Department of Mathematics, The London School of Economics and Political\\
  Science, Houghton Street, London WC2A 2AE, United Kingdom\\
  E-mail: \textit{j.ruf@lse.ac.uk}
\nextinstitute
Department of Mathematics, University College London\\
  Gower Street, London WC1E 6BT, United Kingdom\\
  E-mail: \textit{kangjianan.xie.14@ucl.ac.uk}
}

\begin{document}

\vspace{5mm}

\maketitle

\centerline{\today}

\begin{abstract}
This paper investigates the dependence of functional portfolio generation, introduced by \cite{F_generating}, on an extra finite variation process. The framework of \cite{MR3663643} is used to formulate conditions on trading strategies to be strong arbitrage relative to the market over sufficiently large time horizons. A mollification argument and Koml\'{o}s theorem yield a general class of potential arbitrage strategies. These theoretical results are complemented by several empirical examples using data from the S\&P 500 stocks.

\vspace{0.5cm}
\noindent
{\em \textbf{Keywords}}: Additive generation; Lyapunov function; market diversity; multiplicative generation; portfolio analysis; portfolio generating function; regular function; S\&P 500; Stochastic Portfolio Theory
\end{abstract}

\section{Introduction}

E.R.~Fernholz established Stochastic Portfolio Theory (SPT) to provide a theoretical tool for applications in equity markets, and for analyzing portfolios with controlled behavior; see \cite{F_generating} and \cite{FK_survey}, for example. SPT studies so called functionally generated portfolios. The value of a functionally generated portfolio relative to the total market capitalization is merely a function, known as the so called \textit{master formula}, of the market weights. This formula does not involve stochastic integration or drifts, which makes the analysis very easy as the need for estimation is reduced.

One very interesting topic following up this construction is the study of relative arbitrage opportunities between functionally generated portfolios and the market portfolio. \cite{F_generating, MR1861997, MR1894767} states conditions for such relative arbitrage to exist over sufficiently large time horizons. To implement this relative arbitrage, trading strategies generated by suitable portfolio generating functions are required. \cite{MR3663643} interpret portfolio generating functions as Lyapunov functions. More precisely, the supermartingale property of the corresponding wealth processes after an appropriate change of measures is utilized to study the performance of functionally generated trading strategies. Relative arbitrage over arbitrary time horizons under appropriate conditions is also studied by \cite{Volatility}.

One offspring of portfolio generating functions is a generalized portfolio generating function, which depends on an additional argument with continuous path and finite variation. This is inspired by the fact that in practice, people tend to take historical data, such as past performance of stocks, or statistical estimates, into consideration when constructing portfolios. Besides, this generalization provides additional flexibility in choosing portfolio generating functions. Section~3.2 of \cite{MR1894767} formulates the concept of time-dependent generating functions, and presents the master formula under this situation. In the same framework, \cite{MR3246899} shows an extension of the master formula to portfolios generated by functions that also depend on the current state of some continuous path process of finite variation. Also based on Fernholz's structure, \cite{Schied_2016} provide a pathwise version of the relevant master formula. They also analyze examples where the additional process is chosen to be the moving average of the market weights.

All the above mentioned papers (\cite{MR1894767}, \cite{MR3246899}, and \cite{Schied_2016}) make assumptions on the smoothness of the portfolio generating function with respect to both the finite variation process and the market weights. In this paper, we weaken these assumptions such that the choice for the portfolio generating function is less restricted. To this end, we use a mollification argument and the Koml\'{o}s theorem. Then we study several examples empirically, using data from the S\&P 500 index.\footnote{As the constituent list of the stocks in the S\&P 500 index changes over time, we avoid a survivorship bias by not restricting the analysis to the current stocks in the S\&P 500 index. Instead, we reconstruct the historical constituent list of the S\&P 500 index and adjust the portfolios appropriately when the constituent list changes.}

An outline of the paper is as follows. Section~\ref{sec 2} specifies the market model and recalls the definitions of trading strategies and relative arbitrage. Section~\ref{sec 3} first gives the definitions of regular functions and Lyapunov functions, and then presents sufficient conditions for a function to be regular and Lyapunov, respectively. The appendix presents the proofs of these results. Section~\ref{sec 4} defines additive and multiplicative generation, and the corresponding trading strategies and wealth processes. Section~ \ref{sec 4} also gives conditions for arbitrage relative to the market portfolio to exist. Section~\ref{sec 5} describes the data involved and the processing method to implement the empirical analysis. Section~\ref{sec 6} contains several examples of portfolio generating functions and discusses empirical results. Section~\ref{sec 7} concludes.

\vspace{0.5cm}

\section{Model setup}\label{sec 2}

We fix a filtered probability space $\big(\Omega,\mathcal{F}(\infty),\mathcal{F}(\cdot),\textsf{P}\big)$ with $\mathcal{F}(0)=\{\emptyset,\Omega\}$ and write
\[
\Delta^{d}=\left\{(x_{1},\cdots,x_{d})'\in[0,1]^{d}:\sum_{i=1}^{d}x_{i}=1\right\}\quad\text{and}\quad\Delta_{+}^{d}=\Delta^{d}\cap(0,1)^{d}.
\]

We consider an equity market with $d\geq2$ companies. We denote the market weights process by $\mu(\cdot)=\big(\mu_{1}(\cdot),\cdots,\mu_{d}(\cdot)\big)'$. Here, $\mu_{i}(\cdot)$ is the market weight process of company $i$ computed by dividing the capitalization of company $i$ by the total capitalization of all $d$ companies in the market, for all $i\in\{1,\cdots,d\}$. We assume that $\mu(\cdot)$ is $\Delta^{d}$-valued with $\mu(0)\in\Delta_{+}^{d}$, and that $\mu_{i}(\cdot)$ is a continuous, non-negative semimartingale, for all $i\in\{1,\cdots,d\}$.

To define a trading strategy for $\mu(\cdot)$, let us consider a process $\vartheta(\cdot)=\big(\vartheta_{1}(\cdot),\cdots,\vartheta_{d}(\cdot)\big)'$ in $\mathbb{R}^{d}$, which is predictable and integrable with respect to $\mu(\cdot)$. We denote the collection of all such processes by $\mathcal{L}(\mu)$.

For such a process $\vartheta(\cdot)\in\mathcal{L}(\mu)$, we interpret $\vartheta_{i}(t)$ as the number of shares in the stock of company $i$ held at time $t\geq0$, for all $i\in\{1,\cdots,d\}$. Then
\[
V^{\vartheta}(\cdot)=\sum_{i=1}^{d}\vartheta_{i}(\cdot)\mu_{i}(\cdot)
\]
can be interpreted as the wealth process corresponding to $\vartheta(\cdot)$.
\\
\begin{definition}\label{Definition 2.2} (Trading strategies). A process $\varphi(\cdot)\in\mathcal{L}(\mu)$ is called a \textit{trading strategy} if
\[
V^{\varphi}(\cdot)-V^{\varphi}(0)=\int_{0}^{\cdot}\sum_{i=1}^{d}\varphi_{i}(t)\mathrm{d}\mu_{i}(t).
\]
\qed
\end{definition}
\vspace{0.5cm}
\begin{remark}\label{Remark nonself}
To convert a predictable process $\vartheta(\cdot)\in\mathcal{L}(\mu)$ into a trading strategy $\varphi(\cdot)$, we adapt the measure of the ``defect of self-financibility'' of $\vartheta(\cdot)$, introduced in Proposition~2.4 in \cite{MR3663643} and defined as
\begin{equation}\label{eq 2.7}
Q^{\vartheta}(\cdot)=V^{\vartheta}(\cdot)-V^{\vartheta}(0)-\int_{0}^{\cdot}\sum_{i=1}^{d}\vartheta_{i}(t)\mathrm{d}\mu_{i}(t).
\end{equation}
As a result, the process $\varphi(\cdot)$ with components
\begin{equation}\label{eq 2.8}
\varphi_{i}(\cdot)=\vartheta_{i}(\cdot)-Q^{\vartheta}(\cdot)+C,\quad i\in\{1,\cdots,d\},
\end{equation}
where $C$ can be any real constant, is a trading strategy for $\mu(\cdot)$.
\qed
\end{remark}

We are interested in the performances of different portfolios. Especially, we focus on studying the conditions for the existence of so called relative arbitrage.
\\
\begin{definition}\label{Definition realtive_arbitrage} (Arbitrage relative to the market). A trading strategy $\varphi(\cdot)$ is said to be \textit{relative arbitrage} with respect to the market over a given time horizon $[0,T]$, for $T\geq0$, if
\[
V^{\varphi}(\cdot)\geq0\quad\text{and}\quad V^{\varphi}(0)=1,
\]
along with
\begin{equation}\label{eq 2.13}
\textsf{P}\big[V^{\varphi}(T)\geq1\big]=1\quad\text{and}\quad\textsf{P}\big[V^{\varphi}(T)>1\big]>0.
\end{equation}

If $\textsf{P}\big[V^{\varphi}(T)>1\big]=1$ holds, we say that the relative arbitrage is strong over $[0,T]$.
\qed
\end{definition}
\vspace{0.5cm}
\begin{remark}\label{Remark 2.5}
Definition~\ref{Definition realtive_arbitrage} makes sense due to the fact that the wealth process of the market portfolio at any time is given by
\[
V^{(1,\cdots,1)}(\cdot)=\sum_{i=1}^{d}\mu_{i}(\cdot)=1.
\]
Then relative arbitrage exists over a given time horizon $[0,T]$ when a non-negative wealth process $V^{\varphi}(\cdot)$ has the same initial wealth as the market portfolio, the probability for $V^{\varphi}(T)$ to be greater than the market portfolio is strictly positive, and $V^{\varphi}(T)$ is not lower than the market portfolio.
\qed
\end{remark}

In the following sections, we study portfolio generating functions that depend on some $\mathbb{R}^{m}$-valued continuous process of finite variation on $[0,T]$, for $T\geq0$ and some $m\in\mathbb{N}$. We use $\Lambda(\cdot)$ to denote such a process. This process allows for more flexibility in selecting portfolio generating functions. To this end, let $\mathcal{W}$ be some open subset of $\mathbb{R}^{m}\times\mathbb{R}^{d}$ such that
\begin{equation}\label{eq W}
\textsf{P}\big[\big(\Lambda(t),\mu(t)\big)\in\mathcal{W},~\forall~t\geq0\big]=1.
\end{equation}

The following notations are introduced for the ranked market weights, which are studied in Theorem~\ref{Theorem new2} and Example~\ref{Example Clip}. For a vector $x=(x_{1},\cdots,x_{d})'\in\Delta^{d}$, denote its corresponding ranked vector as $\boldsymbol{x}=\big(x_{(1)},\cdots,x_{(d)}\big)'$, where
\[
\max_{i\in\{1,\cdots,d\}}x_{i}=x_{(1)}\geq x_{(2)}\geq\cdots\geq x_{(d-1)}\geq x_{(d)}=\min_{i\in\{1,\cdots,d\}}x_{i}
\]
are the components of $x$ in descending order. Denote
\[
\mathbb{W}^{d}=\left\{\left(x_{(1)},\cdots,x_{(d)}\right)'\in\Delta^{d}:1\geq x_{(1)}\geq x_{(2)}\geq\cdots\geq x_{(d-1)}\geq x_{(d)}\geq0\right\};
\]
then the rank operator $\mathfrak{R}:\Delta^{d}\rightarrow\mathbb{W}^{d}$ is a mapping such that $\mathfrak{R}(x)=\boldsymbol{x}$. Moreover, denote $\mathbb{W}_{+}^{d}=\mathbb{W}^{d}\cap(0,1)^{d}$.

The ranked market weights process $\boldsymbol{\mu}(\cdot)$ is given by
\begin{equation}\label{eq rank mu}
\boldsymbol{\mu}(\cdot)=\mathfrak{R}\big(\mu(\cdot)\big)=\big(\mu_{(1)}(\cdot),\cdots,\mu_{(d)}(\cdot)\big)',
\end{equation}
which is itself a continuous, $\Delta^{d}$-valued semimartingale whenever $\mu(\cdot)$ is a continuous, $\Delta^{d}$-valued semimartingale (see Theorem~2.2 in \cite{MR2428716}). At last, let $\boldsymbol{W}$ be some open subset of $\mathbb{R}^{m}\times\mathbb{R}^{d}$ such that
\begin{equation}\label{eq W1}
\textsf{P}\big[\big(\Lambda(t),\boldsymbol{\mu}(t)\big)\in\boldsymbol{W},~\forall~t\geq0\big]=1.
\end{equation}

\vspace{0.5cm}

\section{Generalized regular and Lyapunov functions}\label{sec 3}

In this section, we consider two classes of portfolio generating functions, regular and Lyapunov functions, which are introduced in \cite{MR3663643}. We generalize these notions here to allow for the additional process $\Lambda(\cdot)$.
\\
\begin{definition}\label{Definition 3.1}
(Regular function). A continuous function $G:\mathcal{W}\rightarrow\mathbb{R}$ is said to be \textit{regular} for $\Lambda(\cdot)$ and $\mu(\cdot)$ if
\begin{enumerate}
  \item there exists a measurable function $DG=(D_{1}G,\cdots,D_{d}G)':\mathcal{W}\rightarrow\mathbb{R}^{d}$ such that the process $\boldsymbol{\vartheta}(\cdot)=\big(\boldsymbol{\vartheta}_{1}(\cdot),\cdots,\boldsymbol{\vartheta}_{d}(\cdot)\big)'$ with components
      \begin{equation}\label{eq 3.1}
      \boldsymbol{\vartheta}_{i}(\cdot)=D_{i}G\big(\Lambda(\cdot),\mu(\cdot)\big),\quad i\in\{1,\cdots,d\},
      \end{equation}
      is in $\mathcal{L}(\mu)$; and
  \item the continuous, adapted process
  \begin{equation}\label{eq 3.2}
  \Gamma^{G}(\cdot)=G\big(\Lambda(0),\mu(0)\big)-G\big(\Lambda(\cdot),\mu(\cdot)\big)+\int_{0}^{\cdot}\sum_{i=1}^{d}\vartheta_{i}(t)\mathrm{d}\mu_{i}(t)
  \end{equation}
  is of finite variation on the interval $[0,T]$, for all $T\geq0$.
\end{enumerate}
\qed
\end{definition}
\vspace{0.5cm}
\begin{definition}\label{Definition Generalized Lyapunov functions}
(Lyapunov function). A regular function $G:\mathcal{W}\rightarrow\mathbb{R}$ is said to be a {\textit{Lyapunov function}} for $\Lambda(\cdot)$ and $\mu(\cdot)$ if, for some function $DG$ as in Definition~\ref{Definition 3.1}, the finite variation process $\Gamma^{G}(\cdot)$ of \eqref{eq 3.2} is non-decreasing.
\qed
\end{definition}

In the next example, we discuss sufficient conditions for a smooth function to be regular or Lyapunov.
\\
\begin{example}\label{Example 3.5}
Consider a $C^{1,2}$ function $G:\mathcal{W}\rightarrow\mathbb{R}$. Then It\^{o}'s lemma yields
\[
\begin{aligned}
G\big(\Lambda(\cdot),\mu(\cdot)\big)&=G\big(\Lambda(0),\mu(0)\big)+\int_{0}^{\cdot}\sum_{v=1}^{m}\frac{\partial G}{\partial \lambda_{v}}\big(\Lambda(t),\mu(t)\big)\mathrm{d}\Lambda_{v}(t)&\\
&+\int_{0}^{\cdot}\sum_{i=1}^{d}\frac{\partial G}{\partial x_{i}}\big(\Lambda(t),\mu(t)\big)\mathrm{d}\mu_{i}(t)&\\
&+\frac{1}{2}\sum_{i,j=1}^{d}\int_{0}^{\cdot}\frac{\partial^{2}G}{\partial x_{i}\partial x_{j}}\big(\Lambda(t),\mu(t)\big)\mathrm{d}\big\langle\mu_{i},\mu_{j}\big\rangle(t).&
\end{aligned}
\]

Set now $\boldsymbol{\vartheta}_{i}(\cdot)=\frac{\partial G(\Lambda(\cdot),\mu(\cdot))}{\partial x_{i}}$ and
\begin{equation}\label{eq 3.4}
\begin{aligned}
\Gamma^{G}(\cdot)=&-\int_{0}^{\cdot}\sum_{v=1}^{m}\frac{\partial G}{\partial \lambda_{v}}\big(\Lambda(t),\mu(t)\big)\mathrm{d}\Lambda_{v}(t)
&\\
&-\frac{1}{2}\sum_{i,j=1}^{d}\int_{0}^{\cdot}\frac{\partial^{2}G}{\partial x_{i}\partial x_{j}}\big(\Lambda(t),\mu(t)\big)\mathrm{d}\big\langle\mu_{i},\mu_{j}\big\rangle(t).&
\end{aligned}
\end{equation}
Obviously, the process $\Gamma^{G}(\cdot)$ has finite variation on $[0,T]$, for all $T\geq0$. Hence $G$ is a regular function.

Moreover, if the process $\Gamma^{G}(\cdot)$ is non-decreasing, then $G$ is not only a regular function, but also a Lyapunov function. For instance, this holds if $G$ is non-decreasing in every dimension with respect to the first argument and $\Lambda(\cdot)$ is decreasing in every dimension, and $G$ is concave with respect to the second argument.
\qed
\end{example}

Below we give sufficient conditions for a function $G$ to be regular (Lyapunov). To this end, recall the open set $\mathcal{W}$ from \eqref{eq W}.
\\
\begin{theorem}\label{Theorem new1}
For a continuous function $G:\mathcal{W}\rightarrow\mathbb{R}$, consider the following conditions.
\begin{enumerate}
\item[(ai)] On any compact set $\mathcal{\overline{V}}\subset\mathcal{W}$, there exists a constant $L=L(\mathcal{\overline{V}})\geq0$ such that, for all $(\lambda_{1},x),(\lambda_{2},x)\in\mathcal{\overline{V}}$,
    \[
    \left|G(\lambda_{1},x)-G(\lambda_{2},x)\right|\leq L\|\lambda_{1}-\lambda_{2}\|_{2}.
    \]
\item[(aii)] $G(\cdot,x)$ is non-increasing, for fixed $x$, and $\Lambda(\cdot)$ is non-decreasing in every dimension.
\item[(bi)] $G$ is differentiable in the second argument. Moreover, on any compact set $\mathcal{\overline{V}}\subset\mathcal{W}$, there exists a constant $L=L(\mathcal{\overline{V}})\geq0$ such that, for all $(\lambda,x_{1}),(\lambda,x_{2})\in\mathcal{\overline{V}}$,
    \[
    \left\|\frac{\partial G}{\partial x}(\lambda,x_{1})-\frac{\partial G}{\partial x}(\lambda,x_{2})\right\|_{2}\leq L\|x_{1}-x_{2}\|_{2}.
    \]
\item[(bii)] $G(\lambda,\cdot)$ is concave, for fixed $\lambda$.
\end{enumerate}
If one of the conditions (ai) or (aii) holds and one of the conditions (bi) or (bii) holds, $G$ is a regular function for $\Lambda(\cdot)$ and $\mu(\cdot)$. Moreover, in the case that (aii) and (bii) hold, $G$ is Lyapunov.
\end{theorem}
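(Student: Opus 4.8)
The plan is to reduce the statement to the smooth situation of Example~\ref{Example 3.5} by a localisation and a mollification argument, and then to pass to the limit. First I would localise: since $\mathcal{W}$ is open and \eqref{eq W} holds, pick open sets $\mathcal{V}_{n}$ with $\overline{\mathcal{V}_{n}}$ compact, $\overline{\mathcal{V}_{n}}\subset\mathcal{V}_{n+1}\subset\mathcal{W}$ and $\bigcup_{n}\mathcal{V}_{n}=\mathcal{W}$, and set $\tau_{n}=\inf\{t\geq 0:(\Lambda(t),\mu(t))\notin\mathcal{V}_{n}\}\wedge\inf\{t\geq 0:\sum_{i}\langle\mu_{i}\rangle(t)+\mathrm{Var}(\Lambda)(t)\geq n\}$, so that $\tau_{n}\uparrow\infty$. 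It suffices to establish regularity (respectively the Lyapunov property) on each interval $[0,\tau_{n}]$, where only the restriction of $G$ to $\overline{\mathcal{V}_{n}}$ is seen; this restriction may be extended to a function on $\mathbb{R}^{m}\times\mathbb{R}^{d}$ retaining whichever of the four structural hypotheses are imposed (the Lipschitz constants in (ai) and (bi), the coordinatewise monotonicity in (aii), and the concavity in (bii) are all preservable under a suitable extension), and we may assume that $\mu(\cdot)$, the brackets $\langle\mu_{i},\mu_{j}\rangle(\cdot)$ and the variation of $\Lambda(\cdot)$ are bounded.

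Next I would mollify with a product kernel, $G^{(n)}=G*\phi_{n}$ with $\phi_{n}(\lambda,x)=\psi_{n}(\lambda)\chi_{n}(x)$; then $G^{(n)}\in C^{\infty}$, $G^{(n)}\to G$ locally uniformly, and $G^{(n)}$ inherits each structural property of $G$: under (ai) one has $\|\partial G^{(n)}/\partial\lambda\|_{2}\leq L$; under (aii) the map $G^{(n)}(\cdot,x)$ stays coordinatewise non-increasing; under (bi) the gradient $\partial G^{(n)}/\partial x$ is $L$-Lipschitz in $x$, so $\|\partial^{2}G^{(n)}/\partial x^{2}\|\leq L$; under (bii) the map $G^{(n)}(\lambda,\cdot)$ stays concave. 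Applying the It\^{o} expansion of Example~\ref{Example 3.5} to $G^{(n)}$ yields
\begin{equation*}
G^{(n)}\big(\Lambda(\cdot),\mu(\cdot)\big)=G^{(n)}\big(\Lambda(0),\mu(0)\big)+\int_{0}^{\cdot}\sum_{i=1}^{d}\frac{\partial G^{(n)}}{\partial x_{i}}\big(\Lambda(t),\mu(t)\big)\,\mathrm{d}\mu_{i}(t)+C^{(n)}(\cdot),
\end{equation*}
where $C^{(n)}$ gathers the $\mathrm{d}\Lambda$- and $\mathrm{d}\langle\mu\rangle$-integrals and is of finite variation. The structural hypotheses let us split, on $[0,\tau_{n}]$, $C^{(n)}=P^{(n)}+Q^{(n)}$ with $\sup_{n}\mathrm{Var}(P^{(n)})<\infty$ a.s.\ (the $\mathrm{d}\Lambda$-part is bounded by $L\,\mathrm{Var}(\Lambda)$ under (ai), the $\mathrm{d}\langle\mu\rangle$-part by $\tfrac{dL}{2}\sum_{i}\langle\mu_{i}\rangle$ under (bi)) and with $Q^{(n)}$ non-increasing (the $\mathrm{d}\Lambda$-part under (aii), as $\partial G^{(n)}/\partial\lambda_{v}\leq 0$ and $\Lambda_{v}$ is non-decreasing; the $\mathrm{d}\langle\mu\rangle$-part under (bii), as the Hessian is negative semidefinite while $\mathrm{d}\langle\mu_{i},\mu_{j}\rangle$ is a nonnegative-definite matrix measure).

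The remaining---and main---step is the passage $n\to\infty$. The left-hand side converges uniformly to $G(\Lambda(\cdot),\mu(\cdot))$. Under (bi) there is nothing subtle: $\partial G^{(n)}/\partial x\to\partial G/\partial x$ locally uniformly, so $\int_{0}^{\cdot}\sum_{i}(\partial G^{(n)}/\partial x_{i})(\Lambda,\mu)\,\mathrm{d}\mu_{i}$ converges in the ucp sense and one takes $DG=\partial G/\partial x$. Under (bii) the Hessians generally blow up, so $C^{(n)}$ is not controlled pointwise; instead I would use that the integrands $\partial G^{(n)}/\partial x$ are uniformly bounded (a concave function is locally Lipschitz and mollification does not enlarge the Lipschitz constant). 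Splitting $\mu=M+A$ into local-martingale and finite-variation parts, the martingale integrals $\int_{0}^{\cdot}(\partial G^{(n)}/\partial x)(\Lambda,\mu)\,\mathrm{d}M$ then form a bounded sequence in $\mathcal{H}^{2}$; by weak compactness and a Mazur/Koml\'{o}s-type argument---passing to forward convex combinations, which remain $x$-gradients of smooth concave functions still converging to $G$---and to a common subsequence, these converge in $\mathcal{H}^{2}$ and the finite-variation integrals $\int_{0}^{\cdot}(\partial G^{(n)}/\partial x)(\Lambda,\mu)\,\mathrm{d}A$ converge in $L^{1}(|\mathrm{d}A|\otimes\mathrm{d}\textsf{P})$. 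Since the approximants are a deterministic sequence of functions, $\lim_{n}(\partial G^{(n)}/\partial x)(\lambda,x)$ exists on a Borel set containing every point of differentiability of $G(\lambda,\cdot)$, is there a Borel selection of the superdifferential $\partial_{x}^{+}G$, and the path $(\Lambda(\cdot),\mu(\cdot))$ hits that set for $\mathrm{d}\langle M\rangle\otimes\mathrm{d}\textsf{P}$- and $|\mathrm{d}A|\otimes\mathrm{d}\textsf{P}$-almost every $(\omega,t)$; defining $DG$ as that limit there and as an arbitrary Borel selection of $\partial_{x}^{+}G$ elsewhere identifies the limit of $\int_{0}^{\cdot}(\partial G^{(n)}/\partial x)(\Lambda,\mu)\,\mathrm{d}\mu$ as $\int_{0}^{\cdot}\sum_{i}D_{i}G(\Lambda,\mu)\,\mathrm{d}\mu_{i}$. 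In either case $\Gamma^{G}$ of \eqref{eq 3.2} is the ucp-limit of $-C^{(n)}$, and since $\mathrm{Var}(C^{(n)})$ on $[0,\tau_{n}]$ is bounded by a fixed a.s.-finite random variable (from the splitting $C^{(n)}=P^{(n)}+Q^{(n)}$), $\Gamma^{G}$ has finite variation; as $D_{i}G(\Lambda,\mu)$ is predictable and locally bounded, it lies in $\mathcal{L}(\mu)$, so $G$ is regular. Finally, when (aii) and (bii) both hold, $C^{(n)}$ is itself non-increasing, so $-\Gamma^{G}=\lim_{n}C^{(n)}$ is non-increasing, i.e.\ $\Gamma^{G}$ is non-decreasing and $G$ is a Lyapunov function.

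The localisation, the extension and the verification that mollification preserves the hypotheses are routine. I expect the main obstacle to be the limit passage in case (bii): since the second derivatives of the mollifications are uncontrolled, convergence of the approximating stochastic integrals has to be extracted through the weak-compactness/Koml\'{o}s argument rather than by dominated convergence, and the limit must then be recognised as a legitimate trading-strategy integrand $D_{i}G(\Lambda,\mu)$ for a measurable selection $DG$ of the superdifferential---with the extra care, in the Lyapunov case, of carrying the finite-variation part of $\mu$ along so that the sign of $\Gamma^{G}$ survives in the limit.
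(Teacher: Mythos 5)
Your overall architecture (localisation, mollification, It\^{o} on the smooth approximants, splitting the drift into a uniformly-bounded-variation part and a monotone part, then a limit passage) matches the paper's proof, and your treatment of the cases (ai)/(aii)/(bi)/(bii) and of the Lyapunov conclusion is in the right spirit. The problem is the identification step in the concave case (bii). You assert that $\lim_{n}\frac{\partial G^{(n)}}{\partial x}(\lambda,x)$ exists on a Borel set containing every point of differentiability of $G(\lambda,\cdot)$ and that the path $\big(\Lambda(\cdot),\mu(\cdot)\big)$ spends $\mathrm{d}\langle M\rangle\otimes\mathrm{d}\textsf{P}$- and $|\mathrm{d}A|\otimes\mathrm{d}\textsf{P}$-almost all of its time in that set. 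The only available generic fact is that a concave function is differentiable Lebesgue-almost everywhere in $x$, and that is useless here: $\mu(\cdot)$ takes values in the simplex $\Delta^{d}$, a Lebesgue-null subset of $\mathbb{R}^{d}$, no non-degeneracy of $\langle\mu\rangle$ is assumed, and nothing forces the occupation measures of $\big(\Lambda(\cdot),\mu(\cdot)\big)$ to avoid the non-differentiability set (take, e.g., a concave $G$ that is kinked exactly on the hyperplane $\sum_{i}x_{i}=1$). Without that hitting claim you cannot identify the $\mathcal{H}^{2}$-limit of $\int\frac{\partial G^{(n)}}{\partial x}(\Lambda,\mu)\,\mathrm{d}\mu$ as $\int\sum_{i}D_{i}G(\Lambda,\mu)\,\mathrm{d}\mu_{i}$ for a measurable function $DG$ on $\mathcal{W}$, which is exactly what Definition~\ref{Definition 3.1} requires; your finite-variation and Lyapunov conclusions are then statements about the wrong object. (Your Mazur/Koml\'{o}s step is applied to the stochastic integrals; to rescue the argument this way you would have to apply Koml\'{o}s to the integrands on the product space $\Omega\times[0,\infty)$ equipped with $\mathrm{d}\langle M\rangle\otimes\mathrm{d}\textsf{P}$ and $|\mathrm{d}A|\otimes\mathrm{d}\textsf{P}$ and note that Ces\`{a}ro means of deterministic functions of $(\Lambda,\mu)$ are again deterministic functions of $(\Lambda,\mu)$ — a step you do not carry out.)

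The paper closes this hole differently and more simply: Lemma~\ref{Lemma 4} shows that the (iterated) limit of the mollified $x$-gradients exists at \emph{every} point of $\mathcal{W}$, by rewriting $\frac{\partial g_{n_{1},n_{2}}}{\partial x_{i}}$ as an average of difference quotients against $\frac{\partial\eta_{1}}{\partial y_{i}}$ and using that one-sided directional derivatives of a concave function exist everywhere and are locally bounded. This produces an everywhere-defined measurable $DG=f$ with no occupation-measure argument at all; convergence of the stochastic integrals then follows from uniform boundedness, pointwise convergence and It\^{o}'s isometry (Lemma~\ref{Lemma 2}), and the Koml\'{o}s theorem is reserved for the genuinely non-convergent but bounded drift integrands (Lemma~\ref{Lemma 3}). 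Two smaller points: under (bi) you claim locally uniform convergence of $\frac{\partial G^{(n)}}{\partial x}$ to $\frac{\partial G}{\partial x}$, but joint continuity of the gradient in $(\lambda,x)$ is not assumed — pointwise convergence plus boundedness (as in Lemma~\ref{Lemma 1}) is what is available and is enough; and your global extension of $G$ preserving concavity/monotonicity off $\mathcal{W}$ is not needed — a zero extension together with kernels of small support, as in the paper, avoids that technicality.
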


\vspace{0.5cm}

The proof of Theorem~\ref{Theorem new1} is given in the appendix. A generalized version of It\^{o}'s formula studied in \cite{MR2723141} is related but can only be applied in a Markovian setting.

\vspace{0.5cm}

\begin{remark}\label{Remark 3.6}
In contrast to Theorem~$3.7$ in \cite{MR3663643}, even if $G$ can be extended to a continuous function concave in the second argument, $G$ may not be Lyapunov. A counterexample is given in Example~\ref{Example 3.7}. Therefore, for the generalized case, Theorem~$3.7$ in \cite{MR3663643} cannot be applied, and instead we have to use modified conditions such as given by Theorem~\ref{Theorem new1}.
\qed
\end{remark}

\vspace{0.5cm}

\begin{example}\label{Example 3.7}
Assume that $\mu(\cdot)\in\Delta^{d}$ with $\langle\mu_{1},\mu_{1}\rangle(t)>0$, for all $t>0$, and that
\[
\Lambda(\cdot)=\gamma\sum_{i=1}^{d}\langle\mu_{i},\mu_{i}\rangle(\cdot),
\]
where $\gamma$ is a constant.

Define the concave quadratic function
\[
G(\lambda,x)=\lambda-\sum_{i=1}^{d}x^{2}_{i},\quad\lambda\in\mathbb{R},~x\in\Delta^{d}.
\]
Then from \eqref{eq 3.4} we have
\[
\Gamma^{G}(\cdot)=-\int_{0}^{\cdot}\mathrm{d}\Lambda(t)+\sum_{i=1}^{d}\int_{0}^{\cdot}\mathrm{d}\langle\mu_{i},\mu_{i}\rangle(t)
=(1-\gamma)\sum_{i=1}^{d}\langle\mu_{i},\mu_{i}\rangle(\cdot).
\]
Observe that $\Gamma^{G}(\cdot)$ is decreasing for $\gamma>1$; hence $G$ is not a Lyapunov function for $\Lambda(\cdot)$ and $\mu(\cdot)$, although it is concave in its second argument.

Define now $\overline{G}(\lambda,x)=-G(\lambda,x)$. Then we have $\Gamma^{\overline{G}}(\cdot)=-\Gamma^{G}(\cdot)$. Therefore, if $\gamma>1$ holds, $\Gamma^{\overline{G}}(\cdot)$ is increasing; hence $\overline{G}$ is Lyapunov although convex in its second argument.
\qed
\end{example}

Recall the ranked market weights process $\boldsymbol{\mu}(\cdot)$ defined in \eqref{eq rank mu} and the open set $\boldsymbol{W}$ from \eqref{eq W1}.
\\
\begin{theorem}\label{Theorem new2}
If a function $\boldsymbol{G}:\boldsymbol{W}\rightarrow\mathbb{R}$ is regular for $\Lambda(\cdot)$ and $\boldsymbol{\mu}(\cdot)=\mathfrak{R}\big(\mu(\cdot)\big)$, then the composition $G=\boldsymbol{G}\circ\mathfrak{R}$ is regular for $\Lambda(\cdot)$ and $\mu(\cdot)$.
\end{theorem}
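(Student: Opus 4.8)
The plan is to transfer the two defining properties of a regular function (Definition~\ref{Definition 3.1}) from the ranked weights $\boldsymbol{\mu}(\cdot)$ to the unranked weights $\mu(\cdot)$ via the rank map $\mathfrak{R}$. Since $\boldsymbol{G}$ is regular for $\Lambda(\cdot)$ and $\boldsymbol{\mu}(\cdot)$, there is a measurable $D\boldsymbol{G}=(D_1\boldsymbol{G},\dots,D_d\boldsymbol{G})':\boldsymbol{W}\to\mathbb{R}^d$ such that $\boldsymbol{\vartheta}_k(\cdot)=D_k\boldsymbol{G}\big(\Lambda(\cdot),\boldsymbol{\mu}(\cdot)\big)\in\mathcal{L}(\boldsymbol{\mu})$ and the process $\Gamma^{\boldsymbol{G}}(\cdot)$ built from $\boldsymbol{\mu}(\cdot)$ is of finite variation on every $[0,T]$. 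First I would define the candidate derivative $DG$ for the composition $G=\boldsymbol{G}\circ\mathfrak{R}$ by pulling $D\boldsymbol{G}$ back through the rank permutation: for $(\lambda,x)\in\mathcal{W}$, if $p_{x}$ denotes a (measurably chosen) permutation with $x_{(k)}=x_{p_x(k)}$, set $D_iG(\lambda,x)=D_kG(\lambda,\mathfrak{R}(x))$ for the rank $k$ with $p_x(k)=i$; equivalently $D_iG(\lambda,x)=\sum_{k}\mathbf{1}\{p_x(k)=i\}\,D_k\boldsymbol{G}(\lambda,\mathfrak{R}(x))$, which is measurable since both $\mathfrak{R}$ and the rank-assignment are measurable. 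The key identity to establish is the integration-by-parts / ``rank swapping'' relation
\[
\int_{0}^{\cdot}\sum_{i=1}^{d}D_iG\big(\Lambda(t),\mu(t)\big)\,\mathrm{d}\mu_{i}(t)
=\int_{0}^{\cdot}\sum_{k=1}^{d}D_k\boldsymbol{G}\big(\Lambda(t),\boldsymbol{\mu}(t)\big)\,\mathrm{d}\boldsymbol{\mu}_{k}(t),
\]
which lets us conclude both that $\boldsymbol{\vartheta}(\cdot)$ with components $D_iG(\Lambda(\cdot),\mu(\cdot))$ lies in $\mathcal{L}(\mu)$ and that $\Gamma^{G}(\cdot)=\Gamma^{\boldsymbol{G}}(\cdot)$, hence is of finite variation on $[0,T]$ for all $T\ge0$. (Note also $G(\Lambda(0),\mu(0))=\boldsymbol{G}(\Lambda(0),\boldsymbol{\mu}(0))$ and $G(\Lambda(\cdot),\mu(\cdot))=\boldsymbol{G}(\Lambda(\cdot),\boldsymbol{\mu}(\cdot))$, so the non-integral terms in \eqref{eq 3.2} also match.)

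To prove this identity I would exploit that, because $\sum_i\mu_i\equiv1$ and $\sum_k\boldsymbol{\mu}_k\equiv1$, everything is invariant under adding a constant to all components of $DG$ and of $D\boldsymbol{G}$; so I may normalise, e.g.\ by subtracting the last component, and reduce to a statement about the finitely many ranks. The crucial structural fact is the local behaviour of ranks: on the (predictable) time set where the ranking is ``frozen'' — i.e.\ no two coordinates of $\mu(\cdot)$ are tied at distinct ranks — the rank permutation is locally constant, so $\mathrm{d}\boldsymbol{\mu}_k(t)=\mathrm{d}\mu_{p(k)}(t)$ for the prevailing permutation $p$, and the two integrands agree coordinate-by-coordinate after relabelling. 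On the collision set $\{t:\mu_{(k)}(t)=\mu_{(k+1)}(t)\text{ for some }k\}$ one uses that $\langle\boldsymbol{\mu}_k-\boldsymbol{\mu}_{k+1},\,\cdot\,\rangle$ and the relevant local times control the contribution, and — more simply — that on this set the semimartingale dynamics of $\boldsymbol{\mu}_k$ and $\boldsymbol{\mu}_{k+1}$ are symmetric in the two tied names, so swapping which name carries rank $k$ versus $k+1$ does not change the value of the integral $\int D_k\boldsymbol{G}\,\mathrm{d}\boldsymbol{\mu}_k+\int D_{k+1}\boldsymbol{G}\,\mathrm{d}\boldsymbol{\mu}_{k+1}$. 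I would make this rigorous by partitioning $[0,t]$ according to which permutation realises the ranking (using that, by Theorem~2.2 in \cite{MR2428716}, $\boldsymbol{\mu}(\cdot)$ is again a continuous semimartingale and the occupation-time / collision structure is well behaved), and summing the contributions.

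The main obstacle I expect is the careful bookkeeping at coincidence points of the ranked processes: the permutation $p_x$ is not continuous in $x$, and on the set where ranks collide the naive pathwise identification $\mathrm{d}\boldsymbol{\mu}_k=\mathrm{d}\mu_{p(k)}$ need not hold name-by-name. The way around this is to work not with individual coordinates but with the two full sums $\sum_i D_iG(\Lambda,\mu)\,\mathrm{d}\mu_i$ and $\sum_k D_k\boldsymbol{G}(\Lambda,\boldsymbol{\mu})\,\mathrm{d}\boldsymbol{\mu}_k$, which are manifestly permutation-symmetric objects, and to verify measurability and integrability of the pulled-back derivative directly from measurability of $\mathfrak{R}$ together with $D\boldsymbol{G}\in\mathcal{L}(\boldsymbol{\mu})$; the integrability of $\boldsymbol{\vartheta}(\cdot)$ against $\mu(\cdot)$ then follows from the identity above once it is known the right-hand side is a well-defined finite-variation-plus-local-martingale process. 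A clean alternative, if one prefers to avoid local-time arguments entirely, is to first prove the claim under the extra assumption that $\mu(\cdot)$ has no triple collisions and no coordinate crossings on a dense set, then pass to the general case by a localisation/approximation argument; but I would try the direct route first, since the sum-level symmetry makes the collision set contributions cancel without needing finer estimates.
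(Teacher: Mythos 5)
There is a genuine gap: your central claimed identity
\[
\int_{0}^{\cdot}\sum_{i=1}^{d}D_iG\big(\Lambda(t),\mu(t)\big)\,\mathrm{d}\mu_{i}(t)
=\int_{0}^{\cdot}\sum_{k=1}^{d}D_k\boldsymbol{G}\big(\Lambda(t),\boldsymbol{\mu}(t)\big)\,\mathrm{d}\boldsymbol{\mu}_{k}(t),
\]
and hence the conclusion $\Gamma^{G}(\cdot)=\Gamma^{\boldsymbol{G}}(\cdot)$, is false in general. By Theorem~2.3 of \cite{MR2428716} (see \eqref{eq mu1}), each ranked weight satisfies $\mathrm{d}\boldsymbol{\mu}_{l}(t)=\sum_{i}\boldsymbol{1}_{\{\mu_{i}(t)=\mu_{(l)}(t)\}}N_{l}^{-1}\big(\mu(t)\big)\,\mathrm{d}\mu_{i}(t)+\mathrm{d}\boldsymbol{B}_{l}(t)$, where $\boldsymbol{B}_{l}(\cdot)$ is built from the local times $\boldsymbol{\Lambda}^{(k,l)}$ accumulated when ranked processes collide. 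So the two integrals differ by $\int_{0}^{\cdot}\sum_{l}D_{l}\boldsymbol{G}\big(\Lambda(t),\boldsymbol{\mu}(t)\big)\,\mathrm{d}\boldsymbol{B}_{l}(t)$. Your argument that at the sum level the collision contributions ``cancel'' would require the adjacent-rank integrands $D_{k}\boldsymbol{G}$ and $D_{k+1}\boldsymbol{G}$ to agree whenever $\mu_{(k)}=\mu_{(k+1)}$, which is not true for a general regular $\boldsymbol{G}$: in Example~\ref{Example Clip}, at the boundary rank $d_{1}$ one has $D_{d_{1}}\boldsymbol{G}(\lambda,\boldsymbol{x})=-\lambda\log x_{(d_{1})}-\lambda$ but $D_{d_{1}+1}\boldsymbol{G}(\lambda,\boldsymbol{x})=-2x_{(d_{1}+1)}$, and indeed there $\Gamma^{\mathcal{G}}$ explicitly contains extra local-time integrals beyond $\Gamma^{\boldsymbol{G}}$. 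So the step where you dismiss the collision set ``without needing finer estimates'' is exactly where the proof breaks.

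The theorem nevertheless holds, because regularity only requires $\Gamma^{G}(\cdot)$ in \eqref{eq 3.2} to be of finite variation, not equal to $\Gamma^{\boldsymbol{G}}(\cdot)$. The paper's proof takes precisely the route you should take: insert the Banner--Ghomrasni decomposition into the regularity identity for $\boldsymbol{G}$, define
\[
D_{i}G(\lambda,x)=\sum_{l=1}^{d}D_{l}\boldsymbol{G}\big(\lambda,\mathfrak{R}(x)\big)\,\boldsymbol{h}_{l}(x)\,\boldsymbol{1}_{\{x_{(l)}=x_{i}\}}
\]
(note the weight $\boldsymbol{h}_{l}=1/N_{l}$ splitting the derivative among tied names, rather than your single measurably chosen permutation), and set $\Gamma^{G}(\cdot)=\Gamma^{\boldsymbol{G}}(\cdot)-\int_{0}^{\cdot}\sum_{l}D_{l}\boldsymbol{G}\big(\Lambda(t),\boldsymbol{\mu}(t)\big)\,\mathrm{d}\boldsymbol{B}_{l}(t)$, which is of finite variation since $\boldsymbol{B}_{l}(\cdot)$ is. No cancellation argument or case distinction on collision structure is needed; the local-time terms are simply absorbed into $\Gamma^{G}(\cdot)$.
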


\vspace{5mm}

We refer to the appendix for the proof of Theorem~\ref{Theorem new2}.

The following example concerns a function $G$ which is not in $C^{1,2}$.
\\
\begin{example}\label{Example Clip}
Assume that $\mu(\cdot)\in\Delta_{+}^{d}$ and consider the $C^{1,2}$ function
\[
\boldsymbol{G}(\lambda,\boldsymbol{x})=-\lambda\sum_{l=1}^{d_{1}}x_{(l)}\log x_{(l)}+1-\sum_{l=d_{1}+1}^{d_{2}}x_{(l)}^{2},\quad\lambda\in\mathbb{R},~\boldsymbol{x}\in\mathbb{W}_{+}^{d},
\]
where $d_{1}$ and $d_{2}$ are positive integers with $d_{1}<d_{2}\leq d$. According to Example~\ref{Example 3.5}, $\boldsymbol{G}$ is regular for $\Lambda(\cdot)$ and $\boldsymbol{\mu}(\cdot)$. In particular, the corresponding measurable function $D\boldsymbol{G}$ as in Definition~\ref{Definition 3.1} can be chosen with components
\begin{equation}\label{eq DG1}
\begin{aligned}
D_{l}\boldsymbol{G}(\lambda,\boldsymbol{x})=
\begin{cases}
-\lambda\log x_{(l)}-\lambda,\quad&\text{if}~l\in\{1,\cdots,d_{1}\}\\
-2x_{(l)},\quad&\text{if}~l\in\{d_{1}+1,\cdots,d_{2}\}\\
0,\quad&\text{otherwise}
\end{cases}
\end{aligned}.
\end{equation}
In this case, It\^{o}'s lemma yields
\begin{equation}\label{eq G2}
\begin{aligned}
\boldsymbol{G}\big(\Lambda(\cdot),\boldsymbol{\mu}(\cdot)\big)=&\boldsymbol{G}\big(\Lambda(0),\boldsymbol{\mu}(0)\big)+
\int_{0}^{\cdot}\sum_{l=1}^{d}D_{l}\boldsymbol{G}\big(\Lambda(t),\boldsymbol{\mu}(t)\big)\mathrm{d}\mu_{(l)}(t)-\Gamma^{\boldsymbol{G}}(\cdot)\\
&+\int_{0}^{\cdot}\sum_{l=1}^{d_{1}}\mu_{(l)}(t)\log\mu_{(l)}(t)\mathrm{d}\Lambda(t)
\end{aligned}
\end{equation}
with $D_{l}\boldsymbol{G}$ given in \eqref{eq DG1} and
\begin{equation}\label{eq Gamma2}
\Gamma^{\boldsymbol{G}}(\cdot)=\frac{1}{2}\int_{0}^{\cdot}\sum_{l=1}^{d_{1}}\frac{\Lambda(t)}{\mu_{(l)}(t)}\mathrm{d}\left\langle\mu_{(l)},\mu_{(l)}\right\rangle(t)
+\int_{0}^{\cdot}\sum_{l=d_{1}+1}^{d_{2}}\mathrm{d}\left\langle\mu_{(l)},\mu_{(l)}\right\rangle(t).
\end{equation}

Denote the number of components of $x=(x_{1},\cdots,x_{d})'\in\Delta^{d}$ that coalesce at a given rank $l\in\{1,\cdots,d\}$ by
\[
N_{l}(x)=\sum_{i=1}^{d}\boldsymbol{1}_{\left\{x_{i}=x_{(l)}\right\}}.
\]
Then by Theorem~2.3 in \cite{MR2428716}, the ranked market weights process $\boldsymbol{\mu}(\cdot)$ has components
\begin{equation}\label{eq mu1}
\begin{aligned}
\mu_{(l)}(\cdot)=&\mu_{(l)}(0)+\int_{0}^{\cdot}\sum_{i=1}^{d}\frac{\boldsymbol{1}_{\left\{\mu_{i}(t)=\mu_{(l)}(t)\right\}}}{N_{l}\big(\mu(t)\big)}
\mathrm{d}\mu_{i}(t)+\sum_{k=l+1}^{d}\int_{0}^{\cdot}\frac{\mathrm{d}\boldsymbol{\Lambda}^{(l,k)}(t)}{N_{l}\big(\mu(t)\big)}\\
&-\sum_{k=1}^{l-1}\int_{0}^{\cdot}\frac{\mathrm{d}\boldsymbol{\Lambda}^{(k,l)}(t)}{N_{l}\big(\mu(t)\big)},\quad l\in\{1,\cdots,d\},
\end{aligned}
\end{equation}
where $\boldsymbol{\Lambda}^{(i,j)}(\cdot)$ with $1\leq i<j\leq d$ is the local time process of the continuous semimartingale $\mu_{(i)}(\cdot)-\mu_{(j)}(\cdot)\geq0$ at the origin.

By Theorem~\ref{Theorem new2}, the function
\[
G(\lambda,x)=\boldsymbol{G}\big(\lambda,\mathfrak{R}(x)\big)=-\lambda\sum_{l=1}^{d_{1}}\sum_{i=1}^{d}\frac{\boldsymbol{1}_{\left\{x_{i}=x_{(l)}\right\}}}{N_{l}(x)}x_{i}\log x_{i}+1-\sum_{l=d_{1}+1}^{d_{2}}\sum_{i=1}^{d}\frac{\boldsymbol{1}_{\left\{x_{i}=x_{(l)}\right\}}}{N_{l}(x)}x_{i}^{2}
\]
is regular for $\Lambda(\cdot)$ and $\mu(\cdot)$, since $\boldsymbol{G}$ is regular for $\Lambda(\cdot)$ and $\boldsymbol{\mu}(\cdot)$.

Now, let us assume that $\Lambda(\cdot)$ is of the form
\[
\Lambda(\cdot)=\overline{\xi}\wedge\left(\underline{\xi}\vee\Lambda'(\cdot)\right),
\]
where $\overline{\xi}$ and $\underline{\xi}$ are two positive constants with $\underline{\xi}<\overline{\xi}$, and the process $\Lambda'(\cdot)$ is of finite variation. Let $\mathcal{G}(\lambda',x)=G\left(\overline{\xi}\wedge\left(\underline{\xi}\vee\lambda'\right),x\right)$, for all $\lambda'\in\mathbb{R}$ and $x\in\Delta^{d}_{+}$. Then with $D_{l}\boldsymbol{G}$ and $\Gamma^{\boldsymbol{G}}(\cdot)$ given in \eqref{eq DG1} and \eqref{eq Gamma2}, respectively, inserting \eqref{eq mu1} into \eqref{eq G2} yields
\[
\mathcal{G}\big(\Lambda'(\cdot),\mu(\cdot)\big)=\mathcal{G}\big(\Lambda'(0),\mu(0)\big)+
\int_{0}^{\cdot}\sum_{i=1}^{d}D_{i}\mathcal{G}\big(\Lambda'(t),\mu(t)\big)\mathrm{d}\mu_{i}(t)-\Gamma^{\mathcal{G}}(\cdot),
\]
where
\[
D_{i}\mathcal{G}(\lambda',x)=\sum_{l=1}^{d}\frac{\boldsymbol{1}_{\left\{x_{i}= x_{(l)}\right\}}}{N_{l}(x)}
D_{l}\boldsymbol{G}\big(\overline{\xi}\wedge\left(\underline{\xi}\vee\lambda'\right),\mathfrak{R}(x)\big),\quad i\in\{1,\cdots,d\},
\]
and
\[
\begin{aligned}
\Gamma^{\mathcal{G}}(\cdot)=&\Gamma^{\boldsymbol{G}}(\cdot)-\int_{0}^{\cdot}\sum_{l=1}^{d_{1}}\mu_{(l)}(t)\log\mu_{(l)}(t)
\boldsymbol{1}_{\left\{\underline{\xi}\leq\Lambda'(t)\leq\overline{\xi}\right\}}\mathrm{d}\Lambda'(t)\\
&-
\sum_{l=1}^{d-1}\sum_{k=l+1}^{d}\int_{0}^{\cdot}\frac{D_{l}\boldsymbol{G}\big(\Lambda(t),\mathfrak{R}(\mu(t))\big)}{N_{l}\big(\mu(t)\big)}\mathrm{d}\boldsymbol{\Lambda}^{(l,k)}(t)\\
&+\sum_{l=2}^{d}\sum_{k=1}^{l-1}\int_{0}^{\cdot}\frac{D_{l}\boldsymbol{G}\big(\Lambda(t),\mathfrak{R}(\mu(t))\big)}{N_{l}\big(\mu(t)\big)}\mathrm{d}\boldsymbol{\Lambda}^{(k,l)}(t)
.
\end{aligned}
\]
Observe that $\mathcal{G}$ is regular for $\Lambda'(\cdot)$ and $\mu(\cdot)$, yet it is not in $C^{1,2}$.
\qed
\end{example}

\vspace{0.5cm}

\section{Functional generation and relative arbitrage}\label{sec 4}

In \cite{MR3663643}, two types of functional generation, additive and multiplicative generation, are constructed to study the properties of relative values of functionally generated portfolios. In this section, we first discuss the generalized versions of these functional generations and corresponding properties. Then we consider sufficient conditions for strong arbitrage relative to the market to exist.

\subsection{Additive generation}

Recall the open set $\mathcal{W}$ from \eqref{eq W}.
\vspace{5mm}
\begin{definition}\label{Definition 4.1}
(Additive generation). For a regular function $G:\mathcal{W}\rightarrow\mathbb{R}$ and the process $\boldsymbol{\vartheta}(\cdot)$ given in \eqref{eq 3.1}, the trading strategy $\boldsymbol{\varphi}(\cdot)$ with components
\[
\boldsymbol{\varphi}_{i}(\cdot)=\boldsymbol{\vartheta}_{i}(\cdot)-Q^{\boldsymbol{\vartheta}}(\cdot)+C,\quad i\in\{1,\cdots,d\},
\]
in the manner of \eqref{eq 2.8} and \eqref{eq 2.7}, and with the real constant
\begin{equation}\label{eq 4.2}
C=G\big(\Lambda(0),\mu(0)\big)-\sum_{j=1}^{d}\mu_{j}(0)D_{j}G\big(\Lambda(0),\mu(0)\big),
\end{equation}
is said to be \textit{additively generated} by the regular function $G$.
\qed
\end{definition}

\vspace{0.5cm}

\begin{proposition}\label{Proposition 4.3}
The trading strategy $\boldsymbol{\varphi}(\cdot)$, generated additively by a regular function $G:\mathcal{W}\rightarrow\mathbb{R}$, has components
\begin{equation}\label{eq 4.4}
\begin{aligned}
\boldsymbol{\varphi}_{i}(\cdot)=&D_{i}G\big(\Lambda(\cdot),\mu(\cdot)\big)+\Gamma^{G}(\cdot)+G\big(\Lambda(\cdot),\mu(\cdot)\big)
-\sum_{j=1}^{d}\mu_{j}(\cdot)D_{j}G\big(\Lambda(\cdot),\mu(\cdot)\big),
\end{aligned}
\end{equation}
for all $i\in\{1,\cdots,d\}$. Moreover, the wealth process of $\boldsymbol{\varphi}(\cdot)$ is given by
\begin{equation}\label{eq 4.3}
V^{\boldsymbol{\varphi}}(\cdot)=G\big(\Lambda(\cdot),\mu(\cdot)\big)+\Gamma^{G}(\cdot).
\end{equation}
\end{proposition}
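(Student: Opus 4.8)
The plan is to verify the two claims by unwinding the definitions and using the self-financing constraint built into the construction of $\boldsymbol{\varphi}(\cdot)$. The key observation is that the additive generation recipe starts from the predictable process $\boldsymbol{\vartheta}(\cdot)$ of \eqref{eq 3.1}, corrects it by subtracting the defect of self-financibility $Q^{\boldsymbol{\vartheta}}(\cdot)$ of \eqref{eq 2.7}, and shifts by the constant $C$ of \eqref{eq 4.2}. So the whole proof reduces to computing $Q^{\boldsymbol{\vartheta}}(\cdot)$ explicitly in terms of $G$, $\Gamma^{G}$, and the market weights, and then reading off both $\boldsymbol{\varphi}_{i}(\cdot)$ and $V^{\boldsymbol{\varphi}}(\cdot)$.

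First I would write down $Q^{\boldsymbol{\vartheta}}(\cdot)=V^{\boldsymbol{\vartheta}}(\cdot)-V^{\boldsymbol{\vartheta}}(0)-\int_{0}^{\cdot}\sum_{i=1}^{d}\boldsymbol{\vartheta}_{i}(t)\mathrm{d}\mu_{i}(t)$, where $V^{\boldsymbol{\vartheta}}(\cdot)=\sum_{i=1}^{d}\boldsymbol{\vartheta}_{i}(\cdot)\mu_{i}(\cdot)=\sum_{i=1}^{d}\mu_{i}(\cdot)D_{i}G(\Lambda(\cdot),\mu(\cdot))$. Using the defining equation \eqref{eq 3.2} for $\Gamma^{G}(\cdot)$, namely $\int_{0}^{\cdot}\sum_{i=1}^{d}\boldsymbol{\vartheta}_{i}(t)\mathrm{d}\mu_{i}(t)=\Gamma^{G}(\cdot)-G(\Lambda(0),\mu(0))+G(\Lambda(\cdot),\mu(\cdot))$, substitution gives
\[
Q^{\boldsymbol{\vartheta}}(\cdot)=\sum_{i=1}^{d}\mu_{i}(\cdot)D_{i}G\big(\Lambda(\cdot),\mu(\cdot)\big)-\sum_{i=1}^{d}\mu_{i}(0)D_{i}G\big(\Lambda(0),\mu(0)\big)-\Gamma^{G}(\cdot)+G\big(\Lambda(0),\mu(0)\big)-G\big(\Lambda(\cdot),\mu(\cdot)\big).
\]
Then I would insert this together with $C$ from \eqref{eq 4.2} into $\boldsymbol{\varphi}_{i}(\cdot)=\boldsymbol{\vartheta}_{i}(\cdot)-Q^{\boldsymbol{\vartheta}}(\cdot)+C$; the terms $-\sum_{i}\mu_{i}(0)D_{i}G(\Lambda(0),\mu(0))$ and $G(\Lambda(0),\mu(0))$ appearing in $-Q^{\boldsymbol{\vartheta}}(\cdot)$ cancel against the corresponding pieces of $C$, leaving exactly \eqref{eq 4.4}. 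For the wealth process, I would compute $V^{\boldsymbol{\varphi}}(\cdot)=\sum_{i=1}^{d}\boldsymbol{\varphi}_{i}(\cdot)\mu_{i}(\cdot)$; since each $\boldsymbol{\varphi}_{i}(\cdot)$ equals $D_{i}G(\Lambda(\cdot),\mu(\cdot))$ plus a term independent of $i$, and $\sum_{i=1}^{d}\mu_{i}(\cdot)=1$, the sum collapses to $\sum_{i}\mu_{i}(\cdot)D_{i}G(\Lambda(\cdot),\mu(\cdot))+\Gamma^{G}(\cdot)+G(\Lambda(\cdot),\mu(\cdot))-\sum_{j}\mu_{j}(\cdot)D_{j}G(\Lambda(\cdot),\mu(\cdot))=G(\Lambda(\cdot),\mu(\cdot))+\Gamma^{G}(\cdot)$, which is \eqref{eq 4.3}.

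There is essentially no hard analytic obstacle here; the argument is a bookkeeping computation. The one point deserving care is consistency of the measurable selection $DG$: Definition~\ref{Definition 3.1} only guarantees existence of \emph{some} such $DG$, and the constant $C$ in \eqref{eq 4.2}, the process $\boldsymbol{\vartheta}(\cdot)$ in \eqref{eq 3.1}, and $\Gamma^{G}(\cdot)$ in \eqref{eq 3.2} must all be understood with respect to the \emph{same} choice of $DG$; I would state this explicitly at the outset. I would also note in passing that $\boldsymbol{\varphi}(\cdot)\in\mathcal{L}(\mu)$ and is genuinely a trading strategy — this follows from Remark~\ref{Remark nonself}, since \eqref{eq 2.8} with any real constant produces a trading strategy and \eqref{eq 4.2} is just one admissible choice of that constant — so that the relations \eqref{eq 4.4} and \eqref{eq 4.3} are statements about a bona fide trading strategy. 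Finally, I would remark that \eqref{eq 4.3} makes the role of the Lyapunov condition transparent: when $\Gamma^{G}(\cdot)$ is non-decreasing, $V^{\boldsymbol{\varphi}}(\cdot)$ dominates $G(\Lambda(\cdot),\mu(\cdot))$ shifted by an increasing process, which is the mechanism exploited in Section~\ref{sec 4} to produce relative arbitrage.
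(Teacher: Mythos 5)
Your computation is correct and is exactly the argument the paper invokes by deferring to Proposition~4.3 of \cite{MR3663643}: unwind $Q^{\boldsymbol{\vartheta}}(\cdot)$ via \eqref{eq 3.2}, cancel against $C$ from \eqref{eq 4.2}, and use $\sum_{i=1}^{d}\mu_{i}(\cdot)=1$ to collapse the wealth process. Your added remarks on fixing one measurable selection $DG$ throughout and on Remark~\ref{Remark nonself} guaranteeing that $\boldsymbol{\varphi}(\cdot)$ is a bona fide trading strategy are appropriate and consistent with the cited proof.
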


\begin{proof}
The proposition is proven exactly like Proposition~$4.3$ in \cite{MR3663643}.
\end{proof}

\subsection{Multiplicative generation}

\begin{definition}\label{Definition 4.6}
(Multiplicative generation). For a regular function $G:\mathcal{W}\rightarrow(0,\infty)$, let the process $\boldsymbol{\vartheta}(\cdot)$ be given in \eqref{eq 3.1} and assume that $1/G\big(\Lambda(\cdot),\mu(\cdot)\big)$ is locally bounded. Consider the process $\widetilde{\boldsymbol{\vartheta}}(\cdot)\in\mathcal{L}(\mu)$ with components
\begin{equation}\label{eq theta2}
\widetilde{\boldsymbol{\vartheta}}_{i}(\cdot)=\boldsymbol{\vartheta}_{i}(\cdot)\times\exp\left(\int_{0}^{\cdot}\frac{\mathrm{d}\Gamma^{G}(t)}{G\big(\Lambda(t),\mu(t)\big)}\right)
,\quad i\in\{1,\cdots,d\}.
\end{equation}
Then the trading strategy $\boldsymbol{\psi}(\cdot)$ with components
\[
\boldsymbol{\psi}_{i}(\cdot)=\widetilde{\boldsymbol{\vartheta}}_{i}(\cdot)-Q^{\widetilde{\boldsymbol{\vartheta}}}(\cdot)+C,\quad i\in\{1,\cdots,d\},
\]
in the manner of \eqref{eq 2.8} and \eqref{eq 2.7}, and with $C$ given in \eqref{eq 4.2}, is said to be \textit{multiplicatively generated} by the regular function $G$.
\qed
\end{definition}

\vspace{0.5cm}

\begin{proposition}\label{Proposition 4.7}
The trading strategy $\boldsymbol{\psi}(\cdot)$, generated multiplicatively by a regular function $G:\mathcal{W}\rightarrow(0,\infty)$ with $1/G\big(\Lambda(\cdot),\mu(\cdot)\big)$ locally bounded, has components
\begin{equation}\label{eq 4.10}
\boldsymbol{\psi}_{i}(\cdot)=V^{\boldsymbol{\psi}}(\cdot)\left(1+\frac{D_{i}G\big(\Lambda(\cdot),\mu(\cdot)\big)-
\sum_{j=1}^{d}\mu_{j}(\cdot)D_{j}G\big(\Lambda(\cdot),\mu(\cdot)\big)}{G\big(\Lambda(\cdot),\mu(\cdot)\big)}\right),
\end{equation}
for all $i\in\{1,\cdots,d\}$, where the wealth process of $\boldsymbol{\psi}(\cdot)$ is given by
\begin{equation}\label{eq 4.9}
V^{\boldsymbol{\psi}}(\cdot)=G\big(\Lambda(\cdot),\mu(\cdot)\big)\exp\left(\int_{0}^{\cdot}\frac{\mathrm{d}\Gamma^{G}(t)}{G\big(\Lambda(t),\mu(t)\big)}\right)>0.
\end{equation}
\end{proposition}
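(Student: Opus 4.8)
The plan is to imitate the proof of Proposition~\ref{Proposition 4.3}, carrying the extra finite-variation factor $E(\cdot):=\exp\!\big(\int_{0}^{\cdot}\mathrm{d}\Gamma^{G}(t)/G(\Lambda(t),\mu(t))\big)$ of \eqref{eq theta2} through the computation. First I would record that $E(\cdot)$ is well defined, continuous, strictly positive and of finite variation: the local boundedness of $1/G(\Lambda(\cdot),\mu(\cdot))$ makes $\int_{0}^{\cdot}\mathrm{d}\Gamma^{G}(t)/G(\Lambda(t),\mu(t))$ a continuous process of finite variation, and $G(\Lambda(t),\mu(t))\,\mathrm{d}E(t)=E(t)\,\mathrm{d}\Gamma^{G}(t)$. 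Moreover, by \eqref{eq 3.2}, $G(\Lambda(\cdot),\mu(\cdot))$ is a semimartingale, so integration by parts is available for it.

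Next I would compute the wealth process. Since $\sum_{i=1}^{d}\mu_{i}(\cdot)\equiv1$, the definition $\boldsymbol{\psi}_{i}=\widetilde{\boldsymbol{\vartheta}}_{i}-Q^{\widetilde{\boldsymbol{\vartheta}}}+C$ together with \eqref{eq 2.7} gives $V^{\boldsymbol{\psi}}(\cdot)=V^{\widetilde{\boldsymbol{\vartheta}}}(\cdot)-Q^{\widetilde{\boldsymbol{\vartheta}}}(\cdot)+C=V^{\widetilde{\boldsymbol{\vartheta}}}(0)+C+\int_{0}^{\cdot}\sum_{i=1}^{d}\widetilde{\boldsymbol{\vartheta}}_{i}(t)\,\mathrm{d}\mu_{i}(t)$. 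As $E(0)=1$ and $\widetilde{\boldsymbol{\vartheta}}_{i}=E\,\boldsymbol{\vartheta}_{i}$ with $\boldsymbol{\vartheta}_{i}=D_{i}G(\Lambda(\cdot),\mu(\cdot))$, the choice \eqref{eq 4.2} of $C$ yields $V^{\widetilde{\boldsymbol{\vartheta}}}(0)+C=G(\Lambda(0),\mu(0))$. I would then use the regular-function identity \eqref{eq 3.2} in the form $\int_{0}^{\cdot}\sum_{i}\boldsymbol{\vartheta}_{i}\,\mathrm{d}\mu_{i}=G(\Lambda(\cdot),\mu(\cdot))-G(\Lambda(0),\mu(0))+\Gamma^{G}(\cdot)$, so that $\int_{0}^{\cdot}\sum_{i}\widetilde{\boldsymbol{\vartheta}}_{i}\,\mathrm{d}\mu_{i}=\int_{0}^{\cdot}E(t)\,\mathrm{d}\big(G(\Lambda(t),\mu(t))+\Gamma^{G}(t)\big)$.

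The heart of the argument is one integration-by-parts step. Because $E(\cdot)$ is continuous and of finite variation, the product rule for $E(\cdot)G(\Lambda(\cdot),\mu(\cdot))$ carries no covariation term, giving $E(\cdot)G(\Lambda(\cdot),\mu(\cdot))-G(\Lambda(0),\mu(0))=\int_{0}^{\cdot}E(t)\,\mathrm{d}G(\Lambda(t),\mu(t))+\int_{0}^{\cdot}G(\Lambda(t),\mu(t))\,\mathrm{d}E(t)$; substituting $G(\Lambda(t),\mu(t))\,\mathrm{d}E(t)=E(t)\,\mathrm{d}\Gamma^{G}(t)$ shows the right-hand side equals $\int_{0}^{\cdot}E(t)\,\mathrm{d}\big(G(\Lambda(t),\mu(t))+\Gamma^{G}(t)\big)$, and comparison with the previous paragraph yields $V^{\boldsymbol{\psi}}(\cdot)=G(\Lambda(\cdot),\mu(\cdot))E(\cdot)$, which is \eqref{eq 4.9}; strict positivity is immediate from $G>0$ and $E>0$.

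Finally, for the component formula \eqref{eq 4.10}, I would subtract: from $V^{\boldsymbol{\psi}}=V^{\widetilde{\boldsymbol{\vartheta}}}-Q^{\widetilde{\boldsymbol{\vartheta}}}+C$ one reads off $-Q^{\widetilde{\boldsymbol{\vartheta}}}(\cdot)+C=V^{\boldsymbol{\psi}}(\cdot)-V^{\widetilde{\boldsymbol{\vartheta}}}(\cdot)$, hence $\boldsymbol{\psi}_{i}(\cdot)=\widetilde{\boldsymbol{\vartheta}}_{i}(\cdot)+V^{\boldsymbol{\psi}}(\cdot)-V^{\widetilde{\boldsymbol{\vartheta}}}(\cdot)=E(\cdot)D_{i}G(\Lambda(\cdot),\mu(\cdot))+V^{\boldsymbol{\psi}}(\cdot)-E(\cdot)\sum_{j=1}^{d}\mu_{j}(\cdot)D_{j}G(\Lambda(\cdot),\mu(\cdot))$; replacing $E(\cdot)$ by $V^{\boldsymbol{\psi}}(\cdot)/G(\Lambda(\cdot),\mu(\cdot))$ from \eqref{eq 4.9} and factoring out $V^{\boldsymbol{\psi}}(\cdot)$ gives exactly \eqref{eq 4.10}. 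I expect the only genuinely delicate point to be the verification that $E(\cdot)$ is of finite variation — which is precisely what the hypothesis that $1/G(\Lambda(\cdot),\mu(\cdot))$ be locally bounded delivers — since this is what legitimizes all the Lebesgue--Stieltjes integrals and the bracket-free integration by parts; the remainder is bookkeeping identical to the additive case and, as with Proposition~\ref{Proposition 4.3}, could simply be deferred to the proof of Proposition~4.7 in \cite{MR3663643}.
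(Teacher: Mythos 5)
Your proposal is correct and follows essentially the same route as the paper, which simply invokes the argument of Proposition~4.7 in \cite{MR3663643}: that argument is precisely the bracket-free integration by parts of $E(\cdot)G\big(\Lambda(\cdot),\mu(\cdot)\big)$ together with the identity $G\big(\Lambda(t),\mu(t)\big)\,\mathrm{d}E(t)=E(t)\,\mathrm{d}\Gamma^{G}(t)$ that you carry out, here with the extra argument $\Lambda(\cdot)$ riding along harmlessly. Your identification of the local boundedness of $1/G\big(\Lambda(\cdot),\mu(\cdot)\big)$ as what makes $E(\cdot)$ a well-defined finite-variation process is exactly the point the paper's hypothesis is there for, so nothing is missing.
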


\begin{proof}
The same argument as in Proposition~$4.7$ in \cite{MR3663643} applies.
\end{proof}

\subsection{Sufficient conditions for arbitrage relative to the market}

In \cite{MR3663643}, Theorem~5.1 and Theorem~5.2 give sufficient conditions for strong arbitrage relative to the market to exist for both additively and multiplicatively generated portfolios, respectively. These results still hold for a regular / Lyapunov function $G:\mathcal{W}\rightarrow[0,\infty)$ under specific conditions.

To be consistent with the conditions of arbitrage relative to the market in \eqref{eq 2.13}, we normalize $G\big(\Lambda(0),\mu(0)\big)=1$ such that both of the wealth processes in \eqref{eq 4.3} and \eqref{eq 4.9} have initial values $1$. This normalization is guaranteed by replacing $G$ with $G+1$ when $G\big(\Lambda(0),\mu(0)\big)=0$, or with $G/G\big(\Lambda(0),\mu(0)\big)$ when $G\big(\Lambda(0),\mu(0)\big)>0$.
\\
\begin{theorem}\label{Theorem 5.1}
Fix a Lyapunov function $G:\mathcal{W}\rightarrow[0,\infty)$ with $G\big(\Lambda(0),\mu(0)\big)=1$. For some real number $T_{\ast}>0$, suppose that
\[
\mathrm{\textsf{P}}\big[\Gamma^{G}(T_{\ast})>1\big]=1.
\]
Then the additively generated trading strategy $\boldsymbol{\varphi}(\cdot)$ of Definition~\ref{Definition 4.1} is strong arbitrage relative to the market over every time horizon $[0,T]$ with $T\geq T_{\ast}$.
\end{theorem}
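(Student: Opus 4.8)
The plan is to read off everything from the explicit formula for the wealth process of the additively generated strategy, namely \eqref{eq 4.3}, which gives $V^{\boldsymbol{\varphi}}(\cdot)=G\big(\Lambda(\cdot),\mu(\cdot)\big)+\Gamma^{G}(\cdot)$. First I would check the two standing requirements in Definition~\ref{Definition realtive_arbitrage}: the initial normalization and non-negativity of the wealth. Since $G$ takes values in $[0,\infty)$ and, because $G$ is Lyapunov, $\Gamma^{G}(\cdot)$ is non-decreasing with $\Gamma^{G}(0)=0$ (directly from \eqref{eq 3.2}), we get $V^{\boldsymbol{\varphi}}(t)=G\big(\Lambda(t),\mu(t)\big)+\Gamma^{G}(t)\geq 0$ for all $t\geq0$. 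Moreover $V^{\boldsymbol{\varphi}}(0)=G\big(\Lambda(0),\mu(0)\big)+\Gamma^{G}(0)=1+0=1$ by the imposed normalization. That $\boldsymbol{\varphi}(\cdot)$ is a genuine trading strategy in the sense of Definition~\ref{Definition 2.2} is already built into Definition~\ref{Definition 4.1} and Proposition~\ref{Proposition 4.3}.

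Next I would establish the strong-arbitrage inequality at the horizon. For any $T\geq T_{\ast}$, monotonicity of $\Gamma^{G}$ gives $\Gamma^{G}(T)\geq\Gamma^{G}(T_{\ast})$, so on the almost-sure event $\{\Gamma^{G}(T_{\ast})>1\}$ we have $\Gamma^{G}(T)>1$ as well. Combining this with $G\big(\Lambda(T),\mu(T)\big)\geq0$ yields
\[
V^{\boldsymbol{\varphi}}(T)=G\big(\Lambda(T),\mu(T)\big)+\Gamma^{G}(T)\geq \Gamma^{G}(T)>1
\]
with probability one. Hence $\textsf{P}\big[V^{\boldsymbol{\varphi}}(T)>1\big]=1$, which is exactly the definition of strong relative arbitrage over $[0,T]$, and in particular it implies the weaker conditions in \eqref{eq 2.13}.

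There is essentially no hard step here; the argument is a direct bookkeeping exercise once Proposition~\ref{Proposition 4.3} supplies the wealth formula \eqref{eq 4.3}. The only point that needs a moment's care is that the non-negativity of the \emph{wealth} process on all of $[0,\infty)$ — not merely at the terminal time — relies on using the hypothesis $G\geq 0$ together with the monotonicity of $\Gamma^{G}$ at every intermediate time, rather than just the terminal bound; this is what distinguishes the present statement from one that would only require $\Gamma^{G}$ to be large at $T_{\ast}$. One should also note that the normalization discussed before the theorem (replacing $G$ by $G+1$ or by $G/G(\Lambda(0),\mu(0))$) preserves both the property $G\geq0$ and the Lyapunov property, since it only adds a non-negative constant or multiplies by a positive one, and in the multiplicative-by-constant case $\Gamma^{G}$ is merely rescaled by the same positive factor, so the hypothesis $\textsf{P}[\Gamma^{G}(T_{\ast})>1]=1$ is to be read relative to the already-normalized $G$.
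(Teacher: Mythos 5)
Your proposal is correct and follows essentially the same route as the paper, which simply invokes the argument of Theorem~5.1 in \cite{MR3663643}: from the wealth formula \eqref{eq 4.3} one gets $V^{\boldsymbol{\varphi}}(\cdot)\geq0$ and $V^{\boldsymbol{\varphi}}(0)=1$ since $G\geq0$ and $\Gamma^{G}(0)=0$, and then $V^{\boldsymbol{\varphi}}(T)\geq\Gamma^{G}(T)\geq\Gamma^{G}(T_{\ast})>1$ almost surely for every $T\geq T_{\ast}$ by monotonicity of $\Gamma^{G}$. Nothing further is needed.
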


\begin{proof}
Use the same reasoning as in the proof of Theorem~5.1 in \cite{MR3663643}.
\end{proof}

\begin{theorem}\label{Theorem 5.2}
Assume that $|\Lambda(\cdot)|$ is uniformly bounded. Fix a regular function $G:\mathcal{W}\rightarrow[0,\infty)$ with $G\big(\Lambda(0),\mu(0)\big)=1$. For some real numbers $T_{\ast}>0$, suppose that we can find an $\varepsilon=\varepsilon(T_{\ast})>0$ such that
\[
\textsf{P}\big[\Gamma^{G}(T_{\ast})>1+\varepsilon\big]=1.
\]
Then there exists a constant $c=c(T_{\ast},\varepsilon)>0$ such that the trading strategy $\boldsymbol{\psi}^{(c)}(\cdot)$, generated multiplicatively by the regular function
\[
G^{(c)}=\frac{G+c}{1+c}
\]
as in Definition~\ref{Definition 4.6}, is strong arbitrage relative to the market over the time horizon $[0,T_{\ast}]$. Moreover, if $G$ is a Lyapunov function, then $\boldsymbol{\psi}^{(c)}(\cdot)$ is also a strong relative arbitrage over every time horizon $[0,T]$ with $T\geq T_{\ast}$.
\end{theorem}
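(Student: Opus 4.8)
The plan is to follow the proof of Theorem~5.2 in \cite{MR3663643}, adapting for the extra argument $\Lambda(\cdot)$. First I would check that $G^{(c)}$ is admissible for Definition~\ref{Definition 4.6}. Being an affine image of the regular function $G$, it is itself regular for $\Lambda(\cdot)$ and $\mu(\cdot)$, with $DG^{(c)}=(1+c)^{-1}DG$ and hence $\Gamma^{G^{(c)}}(\cdot)=(1+c)^{-1}\Gamma^{G}(\cdot)$; moreover $G^{(c)}\big(\Lambda(0),\mu(0)\big)=1$, and since $G\geq0$ we have $G^{(c)}\geq c/(1+c)>0$, so $1/G^{(c)}\big(\Lambda(\cdot),\mu(\cdot)\big)$ is bounded by $(1+c)/c$. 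Proposition~\ref{Proposition 4.7} then gives
\[
V^{\boldsymbol{\psi}^{(c)}}(\cdot)=\frac{G\big(\Lambda(\cdot),\mu(\cdot)\big)+c}{1+c}\,\exp\left(\int_{0}^{\cdot}\frac{\mathrm{d}\Gamma^{G}(t)}{G\big(\Lambda(t),\mu(t)\big)+c}\right)>0,\qquad V^{\boldsymbol{\psi}^{(c)}}(0)=1,
\]
so the non-negativity and initial-value requirements of Definition~\ref{Definition realtive_arbitrage} hold for every $c>0$; the substance of the theorem is to choose $c$ so that also $V^{\boldsymbol{\psi}^{(c)}}(T_{\ast})>1$ almost surely.

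For that, note that $|\Lambda(\cdot)|$ uniformly bounded together with $\mu(\cdot)$ being $\Delta^{d}$-valued confines the continuous process $\big(\Lambda(\cdot),\mu(\cdot)\big)$ to a bounded subset of $\mathcal{W}$, on which the continuous function $G$ is bounded by a deterministic constant $K$. Writing $\Gamma^{G}=A-B$ for the Jordan decomposition into non-decreasing processes vanishing at the origin, the bounds $0\leq G\big(\Lambda(t),\mu(t)\big)\leq K$ give
\[
V^{\boldsymbol{\psi}^{(c)}}(T_{\ast})\geq\frac{c}{1+c}\,\exp\left(\frac{A(T_{\ast})}{K+c}-\frac{B(T_{\ast})}{c}\right),
\]
and one then uses $A(T_{\ast})-B(T_{\ast})=\Gamma^{G}(T_{\ast})>1+\varepsilon$ to conclude that the right-hand side exceeds $1$ for $c$ large. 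The delicate point, which I expect to be the main obstacle, is that the threshold for $c$ must be \emph{deterministic} (independent of $\omega$): this is exactly where uniform boundedness of $|\Lambda(\cdot)|$, hence the deterministic $K$, is needed, and in the general regular case one must argue in addition that the negative-variation term $B(T_{\ast})$ is controlled uniformly. When $G$ is Lyapunov all of this is automatic, since then $\Gamma^{G}$ is non-decreasing and $B\equiv0$, whence, using $\log(1+1/c)<1/c$,
\[
V^{\boldsymbol{\psi}^{(c)}}(T_{\ast})\geq\frac{c}{1+c}\,\exp\left(\frac{\Gamma^{G}(T_{\ast})}{K+c}\right)>\frac{c}{1+c}\,\exp\left(\frac{1+\varepsilon}{K+c}\right)>1\qquad\text{whenever }c>K/\varepsilon,
\]
and Definition~\ref{Definition realtive_arbitrage} yields strong arbitrage relative to the market over $[0,T_{\ast}]$.

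Finally, for a Lyapunov $G$ and a horizon $T\geq T_{\ast}$: since $\Gamma^{G^{(c)}}=(1+c)^{-1}\Gamma^{G}$ is non-decreasing, $G^{(c)}$ is again Lyapunov and $t\mapsto\int_{0}^{t}\mathrm{d}\Gamma^{G}(s)/\big(G(\Lambda(s),\mu(s))+c\big)$ is non-decreasing; combining this with $G\geq0$ gives
\[
V^{\boldsymbol{\psi}^{(c)}}(T)\geq\frac{c}{1+c}\,\exp\left(\int_{0}^{T}\frac{\mathrm{d}\Gamma^{G}(t)}{G\big(\Lambda(t),\mu(t)\big)+c}\right)\geq\frac{c}{1+c}\,\exp\left(\int_{0}^{T_{\ast}}\frac{\mathrm{d}\Gamma^{G}(t)}{G\big(\Lambda(t),\mu(t)\big)+c}\right)>1
\]
by the bound just established, so $\boldsymbol{\psi}^{(c)}(\cdot)$ is strong relative arbitrage over every such $[0,T]$, which completes the plan.
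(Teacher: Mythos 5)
Your handling of the Lyapunov case is correct and is essentially the argument the paper intends: the paper's own proof is a one-line pointer to the proof of Theorem~5.2 in \cite{MR3663643}, supplemented only by the observation---which you also make and use correctly---that the uniform bound on $|\Lambda(\cdot)|$ yields a deterministic constant $K$ with $0\leq G\big(\Lambda(\cdot),\mu(\cdot)\big)\leq K$, so that for $\Gamma^{G}$ non-decreasing one gets $V^{\boldsymbol{\psi}^{(c)}}(T)\geq\frac{c}{1+c}\exp\big((1+\varepsilon)/(K+c)\big)>1$ for all $T\geq T_{\ast}$ once $c>K/\varepsilon$. Your verification that $G^{(c)}$ is regular, normalized, bounded below by $c/(1+c)$, with $\Gamma^{G^{(c)}}=(1+c)^{-1}\Gamma^{G}$, and your monotonicity argument for extending from $T_{\ast}$ to $T\geq T_{\ast}$, are all fine.

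The genuine gap is the first assertion of the theorem, which concerns a merely regular $G$, and you flag it without closing it. The Jordan-decomposition estimate $V^{\boldsymbol{\psi}^{(c)}}(T_{\ast})\geq\frac{c}{1+c}\exp\big(A(T_{\ast})/(K+c)-B(T_{\ast})/c\big)$ cannot deliver a deterministic threshold for $c$: rewriting the exponent as $\Gamma^{G}(T_{\ast})/(K+c)-K\,B(T_{\ast})/\big(c(K+c)\big)$ shows that the $c$ needed to make it exceed $\log(1+1/c)$ grows with the negative variation $B(T_{\ast})$, a random quantity that no hypothesis of the theorem controls (uniform boundedness of $|\Lambda(\cdot)|$ bounds $G$, not the total variation of $\Gamma^{G}$). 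Hence ``take $c$ large'' only works $\omega$-by-$\omega$, which does not define a single trading strategy, and your proposal in fact proves the theorem only in the case $B\equiv0$, i.e.\ when $G$ is Lyapunov. This missing case is exactly the part for which the paper defers to the proof of Theorem~5.2 in \cite{MR3663643}; to complete your argument you would need to import the reasoning given there for signed $\Gamma^{G}$ (or an additional assumption controlling its decreasing part), since the crude bound above does not suffice.
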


\begin{proof}
See the proof of Theorem~5.2 in \cite{MR3663643}. Note that $G\big(\Lambda(\cdot),\mu(\cdot)\big)$ is uniformly bounded thanks to the assumptions.
\end{proof}

\vspace{0.5cm}

\section{Data source and processing}\label{sec 5}

We start this section by describing the data used in the next section, where several trading strategies are implemented. Then we discuss the method to process the data.

\subsection{Data source and description}

We shall consider a market consisting of all stocks in the S\&P 500 index. We are interested in the beginning of day and the end of day market weights of each of these stocks. To calculate these market weights accurately (according to the method in Subsection~\ref{Subsec 5.2}), we make use of two time series: the daily market values (market capitalizations, which exclude all the dividend payments) and the daily return indexes (used to consider the effect of reinvestment of dividend payments) of the corresponding component stocks in the S\&P 500 index. Both of these time series are available at the end of each trading day.

The data of the market values and return indexes is downloaded from DataStream\footnote{DataStream, operated by Thomson Reuters, is a financial time series database; see https://financial.thomsonreuters.com/en/ products/tools-applications/trading-investment-tools/datastream-macroeconomic-analysis.html.}. The first day, for which the data is available on DataStream, is September 29$^{\mathrm{th}}$, 1989. Since then there are in total 1140 constituents that have belonged to the S\&P 500 index. A list of stocks in the S\&P 500 index is also attainable on DataStream. In particular, for each month, we derive the list of constituents of the index at the last day of this month. For a constituent delisted from the index in that month, we keep it in our portfolio provided that the constituent still remains in the market till the end of that month. However, we get rid of it from our portfolio on the same day when the constituent does no longer exist in the market, usually due to mergers and acquisitions, bankruptcies, etc. For a constituent newly added to the index in that month, we put it into our portfolio from the first day of the following month.

\subsection{Data processing}\label{Subsec 5.2}

Theoretically, trading strategies vary continuously in time, while in the empirical analysis a daily trading frequency is used. The following procedure illustrates how we examine the gains and losses in our portfolio relative to the market portfolio.

We discretize the time horizon as $0=t_{0}<t_{1}<\cdots<t_{N-1}=T$, where $N$ is the total number of trading days.
\begin{itemize}
\item The transaction on day $t_{l}$, for all $l\in\{1,\cdots,N-1\}$, is made at the beginning of day ($\underline{t}_{l}$), taking the beginning of day $t_{l}$ market weights $\mu(\underline{t}_{l})$ as inputs. These market weights $\mu(\underline{t}_{l})$ are computed by
\[
\mu_{i}(\underline{t}_{l})=\frac{\mathrm{MV}_{i}(\underline{t}_{l})}{\Sigma(\underline{t}_{l})},\quad i\in\{1,\cdots,d\},
\]
where $\mathrm{MV}_{i}(\underline{t}_{l})$ is the market value of stock $i$ at the beginning of day $t_{l}$, which is assumed to be equal to the market value attainable at the end of the last trading day $t_{l-1}$, and $\Sigma(\underline{t}_{l})=\sum_{j=1}^{d}\mathrm{MV}_{j}(\underline{t}_{l})$ denotes the total market capitalization at the beginning of day $t_{l}$.
\\
\item The theoretical (non-self-financing) trading strategy throughout $t_{l}$, denoted by $\theta(t_{l})$, is computed based on either \eqref{eq 3.1} or \eqref{eq theta2}, taking $\mu(\underline{t}_{l})$ as inputs. Denote the implemented (self-financing) trading strategy corresponding to $\theta(t_{l})$ by $\phi(t_{l})$. Then $V^{\phi}(\underline{t}_{l})$, the beginning of day $t_{l}$ wealth of the portfolio corresponding to $\phi(t_{l})$, is given by
\begin{equation}\label{eq sf1}
V^{\phi}(\underline{t}_{1})=\frac{V^{\phi}(\overline{t}_{l-1})\Sigma(\overline{t}_{l-1})}
{\Sigma(\underline{t}_{1})}.
\end{equation}
This is based on the assumption that the real portfolio wealth does not change overnight. In \eqref{eq sf1}, $V^{\phi}(\overline{t}_{l-1})$ and $\Sigma(\overline{t}_{l-1})$ are the end of day $t_{l-1}$ portfolio wealth and total market capitalization, respectively, computed at $\overline{t}_{l-1}$ (thus already known at $\underline{t}_{l}$).
\\
\item To derive the implemented (self-financing) trading strategy $\phi(t_{l})$ corresponding to $\theta(t_{l})$, we compute the number
\begin{equation}\label{eq C}
C(t_{l})=\sum_{j=1}^{d}\theta_{j}(t_{l})\mu_{j}(\underline{t}_{l})-V^{\phi}(\underline{t}_{l}).
\end{equation}
Then $\phi(t_{l})$ is derived by
\begin{equation}\label{eq sf}
\phi_{i}(t_{l})=\theta_{i}(t_{l})-C(t_{l}),\quad i\in\{1,\cdots,d\}.
\end{equation}
This guarantees $V^{\phi}(\underline{t}_{l})=\sum_{i=1}^{d}\phi_{i}(t_{l})\mu_{i}(\underline{t}_{l})$.
\\
\item At the end of day $t_{l}$, the return indexes of the stocks for $t_{l}$ are available, and the total returns $\mathrm{TR}(t_{l})$ are computed through dividing the return indexes of $t_{l}$ with the return indexes of $t_{l-1}$. Then the end of day $t_{l}$ implied market values $\mathrm{MV}(\overline{t}_{l})$, which take the dividend payments into consideration, are given by
\[
\mathrm{MV}_{i}(\overline{t}_{l})=\mathrm{MV}_{i}(\underline{t}_{l})\mathrm{TR}_{i}(t_{l}),\quad i\in\{1,\cdots,d\}.
\]
The end of day $t_{l}$ modified total market capitalization $\Sigma(\overline{t}_{l})$ and market weights $\mu(\overline{t}_{l})$ are calculated similarly as $\Sigma(\underline{t}_{l})$ and $\mu(\underline{t}_{l})$, with $\mathrm{MV}(\underline{t}_{l})$ replaced by $\mathrm{MV}(\overline{t}_{l})$.
\\
\item The end of day $t_{l}$ portfolio wealth is then computed by
\[
V^{\phi}(\overline{t}_{l})=\sum_{j=1}^{d}\phi_{j}(t_{l})\mu_{j}(\overline{t}_{l}).
\]
Note that we have
\begin{equation}\label{eq V}
V^{\phi}(\overline{t}_{l})=V^{\phi}(\underline{t}_{l})+\sum_{j=1}^{d}\theta_{j}(t_{l})\left(\mu_{j}(\overline{t}_{l})-\mu_{j}(\underline{t}_{l})\right).
\end{equation}
\end{itemize}
In particular, at the beginning of day $t_{0}$, all of the above steps are still applied, except that we have $V^{\phi}(\underline{t}_{0})=1$ instead of \eqref{eq sf1} due to Definition~\ref{Definition realtive_arbitrage}.

\vspace{0.5cm}

\section{Examples and empirical results}\label{sec 6}

In this section, several examples of portfolio generating functions are empirically studied.

\vspace{0.5cm}

\begin{example}\label{Example 5.3}
Define the generalized \textit{entropy function}
\[
G(\lambda,x)=\lambda\sum_{i=1}^{d}x_{i}\log\left(\frac{1}{x_{i}}\right),\quad \lambda\in\mathbb{R}_{+},~x\in\Delta_{+}^{d},
\]
with values in $(0,\lambda\log d)$, for fixed $\lambda>0$. Suppose that $\mu(\cdot)$ takes values in $\Delta_{+}^{d}$ and that $\Lambda(\cdot)$ is $(0,\infty)$-valued.

From \eqref{eq 3.4} we have
\begin{equation}\label{eq gamma gef}
\Gamma^{G}(\cdot)=\sum_{i=1}^{d}\int_{0}^{\cdot}\mu_{i}(t)\log\mu_{i}(t)\mathrm{d}\Lambda(t)+\frac{1}{2}\sum_{i=1}^{d}\int_{0}^{\cdot}\Lambda(t)
\frac{\mathrm{d}\langle\mu_{i},\mu_{i}\rangle(t)}{\mu_{i}(t)}.
\end{equation}
Then $G$ is a Lyapunov function for $\Lambda(\cdot)$ and $\mu(\cdot)$ provided that $\Gamma^{G}(\cdot)$ is non-decreasing. One sufficient condition for this to hold is that $\Lambda(\cdot)$ is non-increasing.

From \eqref{eq 4.4}, the trading strategy $\boldsymbol{\varphi}(\cdot)$, generated additively by $G$, has components
\begin{equation}\label{eq gef_varphi}
\boldsymbol{\varphi}_{i}(\cdot)=\Gamma^{G}(\cdot)-\Lambda(\cdot)\log\mu_{i}(\cdot),\quad i\in\{1,\cdots,d\}.
\end{equation}
Using \eqref{eq 4.3}, the corresponding wealth process $V^{\boldsymbol{\varphi}}(\cdot)=G\big(\Lambda(\cdot),\mu(\cdot)\big)+\Gamma^{G}(\cdot)$ is strictly positive if $G$ is a generalized Lyapunov function.

For the multiplicative generation, $G$ is required to be bounded away from zero. One sufficient condition for this to hold is that $\Lambda(\cdot)$ is bounded away from $0$ and the market is diverse on $[0,\infty)$, i.e., there exists $\epsilon>0$ such that $G\big(\Lambda(t),\mu(t)\big)\geq \Lambda(t)\epsilon$, for all $t\geq0$ (see Proposition~2.3.2 in \cite{MR1894767}). Then from \eqref{eq 4.10}, the trading strategy $\boldsymbol{\psi}(\cdot)$, generated multiplicatively by $G$, has components
\[
\boldsymbol{\psi}_{i}(\cdot)=-\Lambda(\cdot)\log\mu_{i}(\cdot)\exp\left(\int_{0}^{\cdot}\frac{\mathrm{d}\Gamma^{G}(t)}{G\big(\Lambda(t),\mu(t)\big)}\right),\quad i\in\{1,\cdots,d\}.
\]
The corresponding wealth process $V^{\boldsymbol{\psi}}(\cdot)$ is given in \eqref{eq 4.9}.

Now, let us discuss sufficient conditions for the existence of arbitrage relative to the market. For the Lyapunov function $G$, let us consider
\begin{equation}\label{eq normalized G}
\mathcal{G}=\frac{G}{G\big(\Lambda(0),\mu(0)\big)}
\end{equation}
normalized to have initial value $1$, together with the non-decreasing process
\begin{equation}\label{eq Gamma^G1}
\Gamma^{\mathcal{G}}(\cdot)=\frac{\Gamma^{G}(\cdot)}{G\big(\Lambda(0),\mu(0)\big)}.
\end{equation}
Then from Theorem~\ref{Theorem 5.1}, if
\[
\mathrm{\textsf{P}}\big[\Gamma^{\mathcal{G}}(T_{\ast})>1\big]=\mathrm{\textsf{P}}\big[\Gamma^{G}(T_{\ast})>G\big(\Lambda(0),\mu(0)\big)\big]=1,
\]
then the trading strategy $\boldsymbol{\varphi}(\cdot)/G\big(\Lambda(0),\mu(0)\big)$, generated additively by $\mathcal{G}$, is strong relative arbitrage over every time horizon $[0,T]$ with $T\geq T_{\ast}$.

Similarly, from Theorem~\ref{Theorem 5.2}, if
\[
\mathrm{\textsf{P}}\big[\Gamma^{\mathcal{G}}(T_{\ast})>1+\varepsilon\big]=\mathrm{\textsf{P}}\big[\Gamma^{G}(T_{\ast})>G\big(\Lambda(0),\mu(0)\big)(1+\varepsilon)\big]=1,
\]
then the trading strategy $\boldsymbol{\psi}^{(c)}(\cdot)$, generated multiplicatively by
\begin{equation}\label{eq Gc}
G^{(c)}=\frac{G+c}{G\big(\Lambda(0),\mu(0)\big)+c},
\end{equation}
for some sufficiently large $c>0$, is strong relative arbitrage over every time horizon $[0,T]$ with $T\geq T_{\ast}$.

\begin{figure}[ht]
\centering
  \begin{minipage}[b]{0.49\textwidth}
    \includegraphics[width=\textwidth]{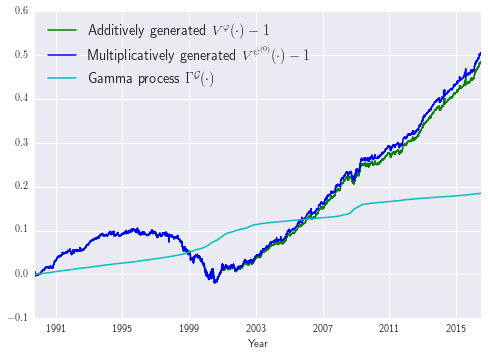}
    \caption{The case $\Lambda(\cdot)=1$.}
    \label{fg GEF1}
  \end{minipage}
\end{figure}

\begin{figure}[ht]
  \centering
  \begin{minipage}[b]{0.49\textwidth}
    \includegraphics[width=\textwidth]{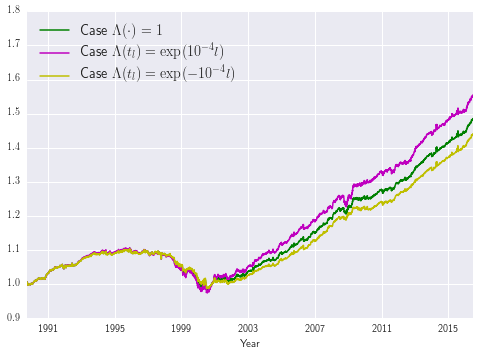}
    \caption{Wealth process $V^{\boldsymbol{\varphi}}(\cdot)$ with $\Lambda(\cdot)$ a deterministic exponential.\vspace{3mm}}
    \label{fg GEF2}
  \end{minipage}
  \hfill
  \begin{minipage}[b]{0.49\textwidth}
    \includegraphics[width=\textwidth]{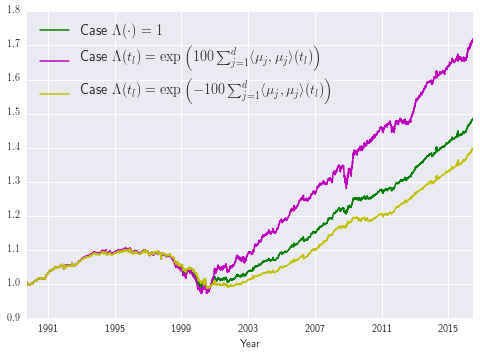}
    \caption{Wealth process $V^{\boldsymbol{\varphi}}(\cdot)$ with $\Lambda(\cdot)$ an exponential of the quadratic variation of $\mu(\cdot)$.}
    \label{fg GEF4}
  \end{minipage}
\end{figure}

\begin{figure}[ht]
  \centering
  \begin{minipage}[b]{0.49\textwidth}
    \includegraphics[width=\textwidth]{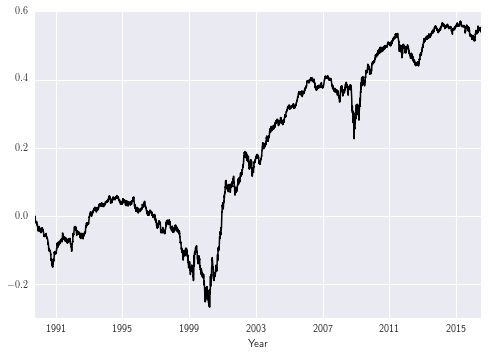}
    \caption{Integration process $E(\cdot)$ with components given by \eqref{eq sf4}.\vspace{3mm}}
    \label{fg GEF6}
  \end{minipage}
  \hfill
  \begin{minipage}[b]{0.49\textwidth}
    \includegraphics[width=\textwidth]{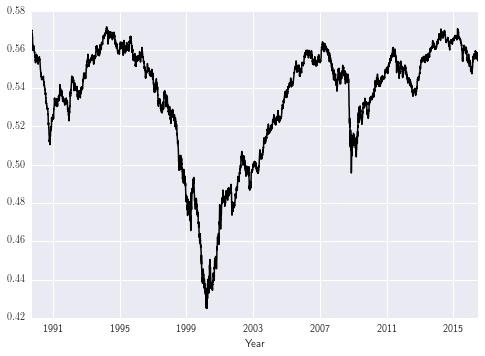}
    \caption{Process $\sum_{i=1}^{d}(\mu_{i}\wedge0.002)(\cdot)$ as a measure of the market diversification degree in the S\&P 500 market.}
    \label{fg GEF5}
  \end{minipage}
\end{figure}

Figure~\ref{fg GEF1} presents $\Gamma^{\mathbf{G}}(\cdot)$ given in \eqref{eq Gamma^G1} and the wealth processes $V^{\boldsymbol{\varphi}}(\cdot)$ and $V^{\boldsymbol{\psi}^{(0)}}(\cdot)$, with finite variation process $\Lambda(\cdot)=1$. As we can observe from the figure, both $V^{\boldsymbol{\varphi}}(\cdot)$ and $V^{\boldsymbol{\psi}^{(0)}}(\cdot)$ have been increasing since the year 2000.

Figure~\ref{fg GEF2} and Figure~\ref{fg GEF4} display the wealth processes $V^{\boldsymbol{\varphi}}(\cdot)$ corresponding to two different groups of $\Lambda(\cdot)$. Namely, for all $l\in\{1,\cdots,N\}$, in Figure~\ref{fg GEF2}, the wealth processes $V^{\boldsymbol{\varphi}}(\cdot)$ corresponding to $\Lambda(t_{l})=\exp\left(10^{-4}l\right)$ and $\Lambda(t_{l})=\exp\left(-10^{-4}l\right)$ are plotted; in Figure~\ref{fg GEF4}, the wealth processes $V^{\boldsymbol{\varphi}}(\cdot)$ corresponding to
\[
\Lambda(t_{l})=\exp\left(100\sum_{j=1}^{d}\langle\mu_{j},\mu_{j}\rangle(t_{l})\right)\quad\text{and}\quad \Lambda(t_{l})=\exp\left(-100\sum_{j=1}^{d}\langle\mu_{j},\mu_{j}\rangle(t_{l})\right)
\]
are plotted. The constants $10^{-4}$ and $100$ are chosen such that, with these forms, the daily changes of both $\mathcal{G}\big(\Lambda(\cdot),\mu(\cdot)\big)$ and $\Gamma^{\mathcal{G}}(\cdot)$ are roughly at the same level of magnitude. Hence, in \eqref{eq 4.3}, neither part on the right hand side dominates the other.

As we can observe from the figures, choosing $\Lambda(\cdot)$ increasing seems to lead to a better performance than choosing $\Lambda(\cdot)$ constant, which again seems to be better than choosing $\Lambda(\cdot)$ decreasing. We attribute the reason behind this observation to the state of market diversification as follows.

Observe that \eqref{eq V} yields
\begin{equation}\label{eq sf3}
V^{\boldsymbol{\varphi}}(\overline{t}_{l})=V^{\boldsymbol{\varphi}}(\underline{t}_{l})+\frac{1}{G\big(\Lambda(0),\mu(\underline{0})\big)}\Lambda(t_{l})D(t_{l}),\quad l\in\{0,\cdots,N\},
\end{equation}
where $D(t_{l})$ is given by
\begin{equation}\label{eq sf4}
D(t_{l})=\sum_{j=1}^{d}-\log\mu_{j}(\underline{t}_{l})\left(\mu_{j}(\overline{t}_{l})-\mu_{j}(\underline{t}_{l})\right).
\end{equation}

The value $D(t_{l})$ can be considered as an indicator of the direction of changes in market weights from the beginning to the end of date $t_{l}$. The value $D(t_{l})$ will be positive (negative), if market weights are shifted from companies with large (small) beginning of day market weights to companies with small (large) beginning of day market weights throughout date $t_{l}$. We consider a simple example to better understand why this is the case.

Fix $d=2$ and assume that $\mu_{1}(\underline{t}_{l})>\mu_{2}(\underline{t}_{l})$. Then
\[
\begin{aligned}
D(t_{l})&=-\log\mu_{1}(\underline{t}_{l})\left(\mu_{1}(\overline{t}_{l})-\mu_{1}(\underline{t}_{l})\right)
-\log\mu_{2}(\underline{t}_{l})\left(\mu_{2}(\overline{t}_{l})-\mu_{2}(\underline{t}_{l})\right)\\
&=\left(-\log\mu_{1}(\underline{t}_{l})+\log\mu_{2}(\underline{t}_{l})\right)\left(\mu_{1}(\overline{t}_{l})-\mu_{1}(\underline{t}_{l})\right)
\end{aligned}
\]
holds due to the fact that $\left(\mu_{1}(\overline{t}_{l})-\mu_{1}(\underline{t}_{l})\right)=-\left(\mu_{2}(\overline{t}_{l})-\mu_{2}(\underline{t}_{l})\right)$. Hence, $D(t_{l})>0$ if and only if $\mu_{1}(\overline{t}_{l})<\mu_{1}(\underline{t}_{l})$, i.e., the market weight of the company with larger beginning of day market weight decreases, while the market weight of the company with smaller beginning of day market weight increases.

Hence, a positive $D(\cdot)$ indicates an enhancement in market diversification, while $D(\cdot)$ being negative actually implies a reduction in market diversification. Figure~ \ref{fg GEF6} plots the cumulative process $E(\cdot)=\sum_{t_{l}=t_{1}}^{\cdot}D(t_{l})$. The process $E(\cdot)$ is increasing (decreasing) whenever $D(\cdot)$ is positive (negative). From Figure~\ref{fg GEF6} we can observe that after a slight increase from the year 1991 to the year 1995, $E(\cdot)$ keeps declining till the year 2000. Then $E(\cdot)$ rises up in the long run from the year 2000 until now.

The behavior of the process $E(\cdot)$ is in line with another measurement of the market diversification. More precisely, let us consider the process $\sum_{i=1}^{d}(\mu_{i}\wedge0.002)(\cdot)$. Note that the value $0.002=1/500$, which is roughly the number of constituents in the portfolio. This process is a measure of the market diversification, as it goes up when the market weights of small companies become larger, i.e., the market diversification is strengthened. Figure~\ref{fg GEF5} plots the process, which first grows from the year 1991 to the year 1995. Then from the year 1995 to 2000, the process declines rapidly. This indicates that during this period, the market diversification weakens. On the contrary, the market diversification strengthens afterwards until the year 2008, as the process goes up. Then the level of market diversification remains within a relatively small range.

As a result, according to \eqref{eq sf3}, if the market presents a trend of increasing diversification, an increasing positive $\Lambda(\cdot)$ helps to reinforce this effect, and further assists in pulling up $V^{\boldsymbol{\varphi}}(\cdot)$, while a decreasing positive $\Lambda(\cdot)$ is counteractive. On the other hand, if the market presents a trend of decreasing diversification, then a decreasing positive $\Lambda(\cdot)$ helps to slow down the declining speed of $V^{\boldsymbol{\varphi}}(\cdot)$, while an increasing positive $\Lambda(\cdot)$ would make the speed even faster. This is confirmed in Figure~\ref{fg GEF2} and Figure~\ref{fg GEF4}, as from the year 1991 to the year 1995 and from the year 2000 till now, an increasing positive $\Lambda(\cdot)$ makes $V^{\boldsymbol{\varphi}}(\cdot)$ perform better, while from the year 1995 to the year 2000, $V^{\boldsymbol{\varphi}}(\cdot)$ corresponding to a decreasing positive $\Lambda(\cdot)$ is slightly larger.

Although an increasing positive $\Lambda(\cdot)$ has positive effect on the portfolio performance $V^{\boldsymbol{\varphi}}(\cdot)$ whenever the market diversification strengthens, we are not allowed to choose $\Lambda(\cdot)$ arbitrarily fast increasing. The reason is that $\Gamma^{\mathcal{G}}(\cdot)$ will become negative and decrease rapidly if the increments in $\Lambda(\cdot)$ are sufficiently large, which will result in a negative $\boldsymbol{\psi}(\cdot)$ given in \eqref{eq gef_varphi}.

As for the multiplicative generation, the different choices of finite variation processes do not change the wealth processes significantly. Indeed, according to \eqref{eq gamma gef}, an increasing $\Lambda(\cdot)$ may slow down the growth rate of $\Gamma(\cdot)$, or even turn $\Gamma(\cdot)$ into a decreasing one. When applying \eqref{eq sf} to $\widetilde{\boldsymbol{\vartheta}}(\cdot)$ from \eqref{eq theta2}, we have
\[
V^{\boldsymbol{\psi}^{(c)}}(\overline{t}_{l})=\exp\left(\int_{0}^{\underline{t}_{l}}
\frac{\mathrm{d}\Gamma^{G}(t)}{G\big(\Lambda(t),\mu(t)\big)+c}\right)
\frac{\Lambda(t_{l})}{G\big(\Lambda(0),\mu(\underline{0})\big)+c}D(t_{l})+V^{\boldsymbol{\psi}^{(c)}}(\underline{t}_{l}),
\]
for all $l\in\{0,\cdots,N\}$, with $D(\cdot)$ given in \eqref{eq sf4}. In this example, according to the above equation, the positive effect in boosting $V^{\boldsymbol{\psi}^{(c)}}(\cdot)$ contributed by an increasing positive $\Lambda(\cdot)$ is counteracted more or less by the opposite impact the same $\Lambda(\cdot)$ has on the exponential part. A similar analysis also applies to a decreasing positive $\Lambda(\cdot)$. Therefore, under the above mentioned situation (market diversification increases in general), the different choices of a monotone $\Lambda(\cdot)$ do not influence $V^{\boldsymbol{\psi}^{(c)}}(\cdot)$ as much as they do on $V^{\boldsymbol{\varphi}}(\cdot)$.

Note that our process $D(\cdot)$ is related but not the same as the Bregman divergence
\[
D_{B,G}\left[\mu(\overline{t}_{l})|\mu(\underline{t}_{l})\right]=\Lambda(t_{l})D(t_{l})
-\left(G\big(\Lambda(t_{l}),\mu(\overline{t}_{l})\big)-G\big(\Lambda(t_{l}),\mu(\underline{t}_{l})\big)\right),
\]
defined in Definition~3.6 of \cite{Wong_2017}. For its connection to optimal transport, we refer to \cite{Wong_2017}.
\qed
\end{example}

The following example is motivated by \cite{Schied_2016}.
\vspace{5mm}
\begin{example}\label{Example FDW}
Consider the function
\[
G(\lambda,x)=\left(\sum_{i=1}^{d}\left(\alpha x_{i}+(1-\alpha)\lambda_{i}\right)^{p}\right)^{\frac{1}{p}},\quad \lambda\in\mathbb{R}_{+}^{d},~x\in\Delta_{+}^{d},
\]
with constants $\alpha,p\in(0,1)$. Then $G$ is concave.

For fixed constant $\delta>0$, define the $\mathbb{R}_{+}^{d}$-valued moving average process $\Lambda(\cdot)$ by
\[
\begin{aligned}
\Lambda_{i}(\cdot)=
\begin{cases}
\frac{1}{\delta}\int_{0}^{\cdot}\mu_{i}(t)\mathrm{d}t+\frac{1}{\delta}\int_{\cdot-\delta}^{0}\mu_{i}(0)\mathrm{d}t\quad&\text{on~}[0,\delta)\\
\frac{1}{\delta}\int_{\cdot-\delta}^{\cdot}\mu_{i}(t)\mathrm{d}t\quad&\text{on~}[\delta,\infty)
\end{cases},
\end{aligned}
\]
for all $i\in\{1,\cdots,d\}$.

Write $\overline{\mu}(\cdot)=\alpha\mu(\cdot)+(1-\alpha)\Lambda(\cdot)$. Then by \eqref{eq 3.4},
\[
\begin{aligned}
\Gamma^{G}(\cdot)&=-(1-\alpha)\sum_{i=1}^{d}\int_{0}^{\cdot}\left(\frac{G\big(\Lambda(t),\mu(t)\big)}{\overline{\mu}_{i}(t)}\right)^{1-p}\mathrm{d}\Lambda_{i}(t)\\
&-\frac{\alpha^{2}(1-p)}{2}\sum_{i,j=1}^{d}\int_{0}^{\cdot}\left(\frac{G\big(\Lambda(t),\mu(t)\big)}{\overline{\mu}_{i}(t)\overline{\mu}_{j}(t)}\right)^{1-p}
\frac{1}{\sum_{v=1}^{d}\left(\overline{\mu}_{v}(t)\right)^{p}}
\mathrm{d}\langle\mu_{i},\mu_{j}\rangle(t)\\
&+\frac{\alpha^{2}(1-p)}{2}\sum_{i=1}^{d}\int_{0}^{\cdot}\left(\frac{G\big(\Lambda(t),\mu(t)\big)}{\overline{\mu}_{i}(t)}\right)^{1-p}
\frac{1}{\overline{\mu}_{i}(t)}\mathrm{d}\langle\mu_{i},\mu_{i}\rangle(t).
\end{aligned}
\]
Notice that $G$ is not Lyapunov in general.

The trading strategies $\boldsymbol{\varphi}(\cdot)$ and $\boldsymbol{\psi}(\cdot)$, generated additively and multiplicatively by $G$, respectively, are given by
\[
\boldsymbol{\varphi}_{i}(\cdot)= G\big(\Lambda(\cdot),\mu(\cdot)\big)\left(\frac{\alpha\left(\overline{\mu}_{i}(\cdot)\right)^{p}}{\overline{\mu}_{i}(\cdot)\sum_{v=1}^{d}\left(\overline{\mu}_{v}(\cdot)\right)^{p}}-
\sum_{j=1}^{d}\frac{\alpha\mu_{j}(\cdot)\left(\overline{\mu}_{j}(\cdot)\right)^{p}}{\overline{\mu}_{j}(\cdot)\sum_{v=1}^{d}\left(\overline{\mu}_{v}(\cdot)\right)^{p}}+1\right)
+\Gamma^{G}(\cdot)
\]
and
\[
\boldsymbol{\psi}_{i}(\cdot)=\left(\boldsymbol{\varphi}_{i}(\cdot)-\Gamma^{G}(\cdot)\right)
\exp\left(\int_{0}^{\cdot}\frac{\mathrm{d}\Gamma^{G}(t)}{G\big(\Lambda(t),\mu(t)\big)}\right),
\]
for all $i\in\{1,\cdots,d\}$. The corresponding wealth processes $V^{\boldsymbol{\varphi}}(\cdot)$ and $V^{\boldsymbol{\psi}}(\cdot)$ can be derived from \eqref{eq 4.3} and \eqref{eq 4.9}, respectively.

Consider the normalized regular function $\mathcal{G}$ given in \eqref{eq normalized G} and the corresponding process $\Gamma^{\mathcal{G}}(\cdot)$ given in \eqref{eq Gamma^G1}. By Theorem~\ref{Theorem 5.2}, if
\[
\mathrm{\textsf{P}}\big[\Gamma^{\mathcal{G}}(T_{\ast})>1+\varepsilon\big]=\mathrm{\textsf{P}}\big[\Gamma^{G}(T_{\ast})>G\big(\Lambda(0),\mu(0)\big)(1+\varepsilon)\big]=1,
\]
then the trading strategy $\boldsymbol{\psi}^{(c)}(\cdot)$, generated multiplicatively by $G^{(c)}$ given in \eqref{eq Gc} for some sufficiently large $c>0$, is strong relative arbitrage over the time horizon $[0,T_{\ast}]$.

\begin{figure}[ht]
  \centering
  \begin{minipage}[b]{0.49\textwidth}
    \includegraphics[width=\textwidth]{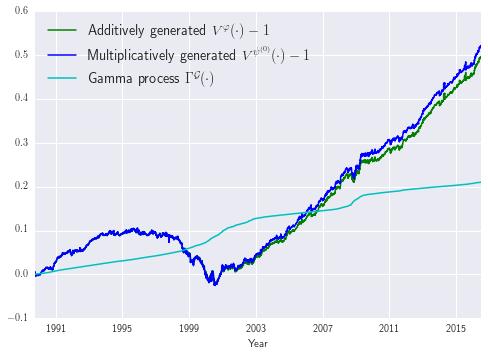}
    \caption{The case $\delta=250$ days, $p=0.8$ and $\alpha=1$.}
    \label{fg MA1}
  \end{minipage}
  \hfill
  \begin{minipage}[b]{0.49\textwidth}
    \includegraphics[width=\textwidth]{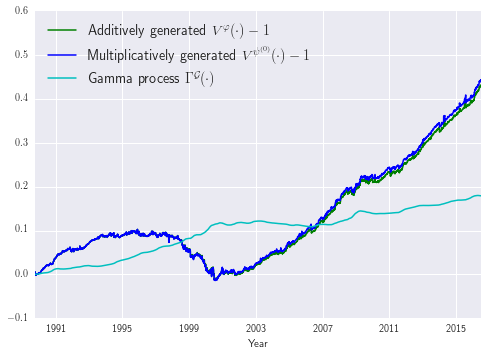}
    \caption{The case $\delta=250$ days, $p=0.8$ and $\alpha=0.6$.}
    \label{fg MA2}
  \end{minipage}
\end{figure}

To simulate the relative performance of the portfolio, we use the parameters $\delta=250$ days and $p=0.8$. Figure~\ref{fg MA1} shows $\Gamma^{\mathcal{G}}(\cdot)$ and the wealth processes $V^{\boldsymbol{\varphi}}(\cdot)$ and $V^{\boldsymbol{\psi}^{(0)}}(\cdot)$ without the effect of the moving average part, i.e., $\alpha=1$. In this case, $\mathcal{G}$ is Lyapunov. The relative performance of the portfolio is similar to that in Example~\ref{Example 5.3}, when the finite variation process is chosen to be constant. Figure~\ref{fg MA2} presents the case when $\alpha=0.6$. It can be observed that $\Gamma^{\mathcal{G}}(\cdot)$ increases slower when the moving average part is considered. Compared with the case that the moving average part is not included, the wealth processes $V^{\boldsymbol{\varphi}}(\cdot)$ and $V^{\boldsymbol{\psi}^{(0)}}(\cdot)$ also take smaller values in the long run. This is due to the fact that when $\alpha$ decreases, the volatility of $\overline{\mu}(\cdot)$ decreases as well. In this case, we trade slower, and the gains and losses will also be relatively less.
\qed
\end{example}

\vspace{5mm}

\section{Conclusion}\label{sec 7}

\cite{MR3663643} build a simple and intuitive structure by interpreting the portfolio generating functions $G$ initiated by \cite{F_generating, MR1861997, MR1894767} as Lyapunov functions. They formulate conditions for the existence of strong arbitrage relative to the market over appropriate time horizons. The purpose of this paper is to investigate the dependence of the portfolio generating functions $G$ on an extra $\mathbb{R}^{m}$-valued, progressive, continuous process $\Lambda(\cdot)$ of finite variation on $[0,T]$, for all $T\geq0$.

The results of this paper are illuminated by several examples and shown to work on empirical data using stocks from the S\&P 500 index. The effects that different choices of $\Lambda(\cdot)$ have on the portfolio wealths are analyzed. Provided that the market undergos an explicit trend of either increasing or decreasing market diversification, certain choices of $\Lambda(\cdot)$ are better than others.

\vspace{0.5cm}

\appendix

\section{Proofs of Theorems \ref{Theorem new1} and \ref{Theorem new2}}\label{sec A}

\subsection{Preliminaries}

Before providing the proof of Theorem~\ref{Theorem new1}, we discuss some technical details.

Recall the open set $\mathcal{W}$ from \eqref{eq W} and consider a continuous function $g:\mathcal{W}\rightarrow\mathbb{R}$. Define a function $\overline{g}:\mathbb{R}^{m+d}\rightarrow\mathbb{R}$ by
\[
\begin{aligned}
\overline{g}(z)=
\begin{cases}
g(z),&\quad\text{if}~z\in\mathcal{W}\\
0,&\quad\text{if}~z\notin\mathcal{W}
\end{cases}.
\end{aligned}
\]
Next, consider the family of functions $(g_{n_{1},n_{2}})_{n_{1},n_{2}\in\mathbb{N}}$ with $g_{n_{1},n_{2}}:\mathcal{W}\rightarrow\mathbb{R}$ given by
\begin{equation}\label{eq mollification 1}
g_{n_{1},n_{2}}(\lambda,x)=\int_{\mathbb{R}^{d}}\eta_{n_{2}}(y)\int_{\mathbb{R}^{m}}\eta_{n_{1}}(u)\overline{g}(\lambda-u,x-y)\mathrm{d}u\mathrm{d}y,
\end{equation}
for all $(\lambda,x)\in\mathcal{W}$, with $g_{n_{1},n_{2}}(\lambda,x)=0$ whenever the right hand side of \eqref{eq mollification 1} is not defined. Here in \eqref{eq mollification 1}, for $z\in\mathbb{R}^{l}$ and $n\in\mathbb{N}$,
\begin{equation}\label{eq etan}
\begin{aligned}
\eta_{n}(z)=
\begin{cases}
\beta n^{l}\exp\left(\frac{1}{n^{2}\|z\|^{2}_{2}-1}\right),\quad&\text{if}~\|z\|_{2}<\frac{1}{n}\\
0,&\text{if}~\|z\|_{2}\geq\frac{1}{n}
\end{cases}
\end{aligned}
\end{equation}
is used with the normalization constant
\[
\beta=\left(\int_{\mathbb{R}^{l}}\exp\left(\frac{1}{\|y\|^{2}_{2}-1}\right)\mathrm{d}y\right)^{-1},
\]
independent of $n$.
\\
\begin{lemma}\label{Lemma 1}
Let $\mathcal{\overline{V}}$ denote any closed subset of $\mathcal{W}$. Consider a continuous function $g:\mathcal{W}\rightarrow\mathbb{R}$ and the mollification $(g_{n_{1},n_{2}})_{n_{1},n_{2}\in\mathbb{N}}$ of $g$ defined as in \eqref{eq mollification 1}.
\begin{enumerate}[label=(\roman*)]
  \item We have
  \[
  \lim_{n_{2}\uparrow\infty}\lim_{n_{1}\uparrow\infty}g_{n_{1},n_{2}}=g.
  \]
  \item For $n_{1},n_{2}\in\mathbb{N}$ large enough, $g_{n_{1},n_{2}}\in C^{\infty}(\mathcal{\overline{V}})$.
  \item If there exists a constant $L=L(\mathcal{\overline{V}})\geq0$ such that, for all $(\lambda_{1},x),(\lambda_{2},x)\in\mathcal{\overline{V}}$,
\[
\left|g(\lambda_{1},x)-g(\lambda_{2},x)\right|\leq L\|\lambda_{1}-\lambda_{2}\|_{2},
\]
then, for $n_{1},n_{2}\in\mathbb{N}$ large enough and all $(\lambda,x)\in\mathcal{\overline{V}}$, we have
\[
\left|\frac{\partial g_{n_{1},n_{2}}}{\partial \lambda_{v}}(\lambda,x)\right|\leq L,\quad v\in\{1,\cdots,m\}.
\]
\item If $g\in C^{0,1}$, then, for all $(\lambda,x)\in\mathcal{W}$, we have
\[
\lim_{n_{2}\uparrow\infty}\lim_{n_{1}\uparrow\infty}\frac{\partial g_{n_{1},n_{2}}}{\partial x_{i}}(\lambda,x)=\frac{\partial g}{\partial x_{i}}(\lambda,x),\quad i\in\{1,\cdots,d\}.
\]
\item If $g\in C^{0,1}$ and if there exists a constant $L=L(\mathcal{\overline{V}})\geq0$ such that, for all $(\lambda,x_{1}),(\lambda,x_{2})\in\mathcal{\overline{V}}$,
\[
\left\|\frac{\partial g}{\partial x}(\lambda,x_{1})-\frac{\partial g}{\partial x}(\lambda,x_{2})\right\|_{2}\leq L\|x_{1}-x_{2}\|_{2},
\]
then, for $n_{1},n_{2}\in\mathbb{N}$ large enough and all $(\lambda,x)\in\mathcal{\overline{V}}$, we have
\[
\left|\frac{\partial^{2}g_{n_{1},n_{2}}}{\partial x_{i}\partial x_{j}}(\lambda,x)\right|\leq L,\quad i,j\in\{1,\cdots,d\}.
\]
\end{enumerate}
\end{lemma}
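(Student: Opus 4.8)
The plan is to prove the five assertions of Lemma~\ref{Lemma 1} essentially by appealing to the classical theory of mollification (convolution with the standard bump functions $\eta_n$), carried out in two stages: first smoothing in the $\lambda$-variable by $\eta_{n_1}$, then in the $x$-variable by $\eta_{n_2}$. The only subtlety relative to the textbook case is that the underlying function $g$ is defined on an open set $\mathcal W$ rather than on all of $\mathbb R^{m+d}$, so one works on a closed (in particular compact, after a further reduction) subset $\mathcal{\overline V}\subset\mathcal W$ and uses that a small enough neighbourhood of $\mathcal{\overline V}$ still lies in $\mathcal W$; for $n_1,n_2$ large the convolution only sees values of $g$ inside $\mathcal W$, so the truncation $\overline g$ does not interfere.

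First I would record the standing facts: $\eta_n$ is a nonnegative $C^\infty$ function supported in the ball of radius $1/n$ with $\int\eta_n=1$. For (i), on any point $(\lambda,x)\in\mathcal W$ pick a compact neighbourhood contained in $\mathcal W$; for $n_1$ large the inner integral equals $\int_{\mathbb R^m}\eta_{n_1}(u)g(\lambda-u,x)\,\mathrm du$, which converges to $g(\lambda,x)$ as $n_1\uparrow\infty$ by continuity of $g$ and an approximate-identity argument, uniformly on compacts; then letting $n_2\uparrow\infty$ and using continuity of the limit gives the claim. For (ii), since $\eta_{n_1},\eta_{n_2}\in C^\infty$ with compact support, differentiation under the integral sign shows $g_{n_1,n_2}\in C^\infty$ on any open set whose $(1/n_1+1/n_2)$-neighbourhood lies in $\mathcal W$, hence on $\mathcal{\overline V}$ for $n_1,n_2$ large. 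For (iv), when $g\in C^{0,1}$ one may move the $x$-derivative onto $g$ (again legitimate once the supports are small enough that only values of $g$ in $\mathcal W$ are used), so $\partial g_{n_1,n_2}/\partial x_i$ is the double mollification of the continuous function $\partial g/\partial x_i$, and (i) applied to that function yields the convergence.

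The Lipschitz-type bounds (iii) and (v) are the heart of the lemma and where I expect the main (though still routine) work. For (iii): write $\partial g_{n_1,n_2}/\partial\lambda_v$ by differentiating the $\eta_{n_1}$-convolution; the standard trick is to use $\int(\partial_{u_v}\eta_{n_1})(u)\,\mathrm du=0$ to rewrite $\partial g_{n_1,n_2}/\partial\lambda_v(\lambda,x)$ as $\int (\partial_{u_v}\eta_{n_1})(u)\bigl[\text{(}x\text{-mollified }g\text{)}(\lambda-u,x)-\text{(same)}(\lambda,x)\bigr]\,\mathrm du$; the hypothesised Lipschitz bound in $\lambda$ (which is inherited by the $x$-mollified function with the same constant $L$, since mollification in $x$ is an average) gives $|g(\lambda-u,\cdot)-g(\lambda,\cdot)|\le L\|u\|_2\le L/n_1$ on the support, and then $\int \|\partial_{u_v}\eta_{n_1}(u)\|\,\|u\|_2\,\mathrm du = O(n_1)$ cancels the $1/n_1$, leaving the bound $L$. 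One must be slightly careful to get exactly $L$ and not $CL$: the clean way is to note that for fixed $x$ the function $\lambda\mapsto(\text{$x$-mollified }g)(\lambda,x)$ is $L$-Lipschitz, hence almost-everywhere differentiable with gradient bounded by $L$, and mollifying an $L$-Lipschitz function in $\lambda$ preserves the bound $\|\nabla\|\le L$ because $\partial_{\lambda_v}(\eta_{n_1}*h)=\eta_{n_1}*(\partial_{\lambda_v}h)$ is an average of things bounded by $L$. Assertion (v) is entirely parallel: $\partial^2 g_{n_1,n_2}/\partial x_i\partial x_j$ is obtained by mollifying (in $x$, via $\eta_{n_2}$) the function $x\mapsto \partial^2/\partial x_i\partial x_j$ applied after one $x$-mollification, and using that $\partial g/\partial x$ is $L$-Lipschitz in $x$ one argues as in (iii) — differentiate one copy of $\eta_{n_2}$, use the zero-integral identity, the bound $\|\partial g/\partial x(\lambda,x-y)-\partial g/\partial x(\lambda,x)\|_2\le L\|y\|_2\le L/n_2$ on the support, and cancel against $\int\|\nabla\eta_{n_2}(y)\|\,\|y\|_2\,\mathrm dy=O(n_2)$; again the sharp constant $L$ comes from observing that $x\mapsto\partial g/\partial x$ being $L$-Lipschitz means its distributional Hessian is bounded by $L$ in operator norm, a property preserved under convolution. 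The only real obstacle is bookkeeping: making sure at each step that for $n_1,n_2$ large enough all the evaluation points stay inside $\mathcal W$ (so $\overline g=g$ there and the truncation is invisible), which follows from compactness of $\mathcal{\overline V}$ — or, for the non-compact closed case in (i) and (iv), by localising to compact pieces.
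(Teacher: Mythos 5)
Your proposal is correct and takes essentially the same route as the paper: (i)--(ii) are the standard mollification facts (the paper simply cites Evans, Appendix~C), (iv) is obtained exactly as you say by moving the $x$-derivative onto $g$ and invoking (i) with dominated convergence, and for (iii) and (v) your ``clean'' argument --- bounding the difference quotients of $g_{n_{1},n_{2}}$ by averaging the assumed Lipschitz bound, rather than the zero-integral trick that would only give $CL$ --- is precisely the difference-quotient computation the paper carries out. The only remaining point in both your write-up and the paper's is the bookkeeping that, for $n_{1},n_{2}$ large, the evaluation points stay in a slightly enlarged compact subset of $\mathcal{W}$ on which the Lipschitz constant is taken, which you explicitly note.
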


\begin{proof}
For (i) and (ii), see Theorem~6 in Appendix~C of \cite{MR1625845}.

For (iii), observe that, for each $n_{1},n_{2}\in\mathbb{N}$ large enough and all $v\in\{1,\cdots,m\}$, \eqref{eq mollification 1} yields
\[
\begin{aligned}
&\left|\frac{\partial g_{n_{1},n_{2}}}{\partial \lambda_{v}}(\lambda,x)\right|=\left|\lim_{\delta\rightarrow0}\frac{g_{n_{1},n_{2}}(\lambda+\delta\mathbf{e}_{v},x)-g_{n_{1},n_{2}}(\lambda,x)}{\delta}\right|&\\
&=\left|\lim_{\delta\rightarrow0}\frac{1}{\delta}\int_{\mathbb{R}^{d}}\eta_{n_{2}}(y)\int_{\mathbb{R}^{m}}\eta_{n_{1}}(u)
\big(\overline{g}(\lambda+\delta\mathbf{e}_{v}-u,x-y)-\overline{g}(\lambda-u,x-y)\big)\mathrm{d}u\mathrm{d}y\right|&\\
&\leq\lim_{\delta\rightarrow0}\frac{1}{\delta}\int_{\mathbb{R}^{d}}\eta_{n_{2}}(y)\int_{\mathbb{R}^{m}}\eta_{n_{1}}(u)
\left|\overline{g}(\lambda+\delta\mathbf{e}_{v}-u,x-y)-\overline{g}(\lambda-u,x-y)\right|\mathrm{d}u\mathrm{d}y&\\
&\leq\lim_{\delta\rightarrow0}\frac{1}{\delta}\delta L\int_{\mathbb{R}^{d}}\eta_{n_{2}}(y)\int_{\mathbb{R}^{m}}\eta_{n_{1}}(u)\mathrm{d}u\mathrm{d}y=L,&
\end{aligned}
\]
for all $(\lambda,x)\in\mathcal{\overline{V}}$, where $\mathbf{e}_{v}$ is the unit vector in the $v$-th dimension.

For (iv), apply the dominated convergence theorem and (i) to $\frac{\partial g}{\partial x_{i}}$, for all $i\in\{1,\cdots,d\}$.

For (v), similarly to (iii), for each $n_{1},n_{2}\in\mathbb{N}$ large enough and all $i,j\in\{1,\cdots,d\}$, we have
\[
\begin{aligned}
&\left|\frac{\partial^{2}g_{n_{1},n_{2}}}{\partial x_{i}\partial x_{j}}(\lambda,x)\right|=\left|\lim_{\delta\rightarrow0}\frac{\frac{\partial g_{n_{1},n_{2}}}{\partial x_{i}}(\lambda,x+\delta\mathbf{e}_{j})-\frac{\partial g_{n_{1},n_{2}}}{\partial x_{i}}(\lambda,x)}{\delta}\right|&\\
&=\left|\lim_{\delta\rightarrow0}\frac{1}{\delta}\int_{\mathbb{R}^{d}}\eta_{n_{2}}(y)\int_{\mathbb{R}^{m}}\eta_{n_{1}}(u)
\left(\frac{\partial\overline{g}}{\partial x_{i}}(\lambda-u,x+\delta\mathbf{e}_{j}-y)-\frac{\partial\overline{g}}{\partial x_{i}}(\lambda-u,x-y)\right)\mathrm{d}u\mathrm{d}y\right|&\\
&\leq L,&
\end{aligned}
\]
for all $(\lambda,x)\in\mathcal{\overline{V}}$, where for the second equality we apply the dominated convergence theorem.
\end{proof}

The following lemma is an extension of Lemma~2 in \cite{MR610155}. For a continuous function $g:\mathcal{W}\rightarrow\mathbb{R}$, consider its corresponding mollification $(g_{n_{1},n_{2}})_{n_{1},n_{2}\in\mathbb{N}}$ defined as in \eqref{eq mollification 1}.
\\
\begin{lemma}\label{Lemma 4}
If a continuous function $g:\mathcal{W}\rightarrow\mathbb{R}$ is concave in its second argument, then
\[
\lim_{n_{2}\uparrow\infty}\lim_{n_{1}\uparrow\infty}\frac{\partial g_{n_{1},n_{2}}}{\partial x_{i}}=f_{i},\quad i\in\{1,\cdots,d\},
\]
for some measurable function $f_{i}:\mathcal{W}\rightarrow\mathbb{R}$, bounded on any compact $\mathcal{\overline{V}}\subset\mathcal{W}$.
\end{lemma}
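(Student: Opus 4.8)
The plan is to separate the two mollifications, passing first to the limit in $n_{1}$ and then in $n_{2}$. As a preliminary step I would rewrite the mollification through the substitution $z=x-y$, so that the whole dependence on $x$ is carried by $\eta_{n_{2}}$:
\[
g_{n_{1},n_{2}}(\lambda,x)=\int_{\mathbb{R}^{d}}\eta_{n_{2}}(x-z)\left(\int_{\mathbb{R}^{m}}\eta_{n_{1}}(u)\,\overline{g}(\lambda-u,z)\,\mathrm{d}u\right)\mathrm{d}z .
\]
Since $\eta_{n_{2}}$ is smooth with compact support, one may differentiate under the integral and move $\partial/\partial x_{i}$ onto $\eta_{n_{2}}$. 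Fix $(\lambda,x)\in\mathcal{W}$; as $\mathcal{W}$ is open, for $n_{2}$ large the set $\{\lambda\}\times\{z:\|z-x\|_{2}\le 1/n_{2}\}$ lies in a compact subset of $\mathcal{W}$, so for every $z$ in the range of integration the inner bracket converges to $g(\lambda,z)$ as $n_{1}\uparrow\infty$ and stays bounded, uniformly in $n_{1}$, by the supremum of $|g|$ over that compact set. Dominated convergence, with dominating function $\sup|g|\cdot|\partial_{x_{i}}\eta_{n_{2}}(x-\cdot)|$, then yields
\[
\lim_{n_{1}\uparrow\infty}\frac{\partial g_{n_{1},n_{2}}}{\partial x_{i}}(\lambda,x)=\frac{\partial}{\partial x_{i}}\int_{\mathbb{R}^{d}}\eta_{n_{2}}(x-z)\,g(\lambda,z)\,\mathrm{d}z=:\frac{\partial \widetilde{g}_{n_{2}}}{\partial x_{i}}(\lambda,x),
\]
where $\widetilde{g}_{n_{2}}(\lambda,\cdot)=g(\lambda,\cdot)\ast\eta_{n_{2}}$ is the mollification of $g(\lambda,\cdot)$ in the $x$-variable alone.

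Next I would let $n_{2}\uparrow\infty$. For fixed $\lambda$ the section $g(\lambda,\cdot)$ is concave, hence locally Lipschitz, so $\widetilde{g}_{n_{2}}(\lambda,\cdot)$ is concave and smooth with $\partial_{x_{i}}\widetilde{g}_{n_{2}}(\lambda,\cdot)=(\partial_{x_{i}}g(\lambda,\cdot))\ast\eta_{n_{2}}$ everywhere (both sides continuous and equal as distributions). Pointwise convergence of the gradient of the mollification of a concave function is Lemma~2 in \cite{MR610155}; applying it in $x$ for each fixed $\lambda$ produces a function $f_{i}:\mathcal{W}\rightarrow\mathbb{R}$ with $\lim_{n_{2}\uparrow\infty}\partial_{x_{i}}\widetilde{g}_{n_{2}}=f_{i}$, and combined with the first step this gives $\lim_{n_{2}\uparrow\infty}\lim_{n_{1}\uparrow\infty}\partial_{x_{i}}g_{n_{1},n_{2}}=f_{i}$. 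For a self-contained treatment of this sub-step one exploits that the difference quotients $t\mapsto t^{-1}(g(\lambda,x+tv)-g(\lambda,x))$ are non-increasing in $t>0$: rescaling reduces the limit to $\int\eta_{1}(w)\,\nabla q_{\lambda,x}(-w)\,\mathrm{d}w$, where $q_{\lambda,x}$ is the finite, continuous, positively homogeneous directional-derivative map of $g(\lambda,\cdot)$ at $x$; the blow-ups decrease to $q_{\lambda,x}$, hence converge uniformly on compacta by Dini's theorem, which forces convergence of their gradients in $L^{1}_{\mathrm{loc}}$ and thus the stated limit.

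Finally I would record measurability and the bound. The function $f_{i}$ is an iterated pointwise limit of the continuous functions $\partial_{x_{i}}g_{n_{1},n_{2}}$, hence Borel measurable on $\mathcal{W}$. Given a compact $\overline{\mathcal{V}}\subset\mathcal{W}$, choose a larger compact $\overline{\mathcal{V}}'\subset\mathcal{W}$ whose $x$-sections contain fixed balls around those of $\overline{\mathcal{V}}$; since $g$ is bounded on $\overline{\mathcal{V}}'$ and each $g(\lambda,\cdot)$ is concave, $g$ is Lipschitz in $x$ on $\overline{\mathcal{V}}$ with a constant $L'$ depending only on $\sup_{\overline{\mathcal{V}}'}|g|$ and the radii. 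The mollifications inherit this constant on $\overline{\mathcal{V}}$ for $n_{2}$ large, so $|\partial_{x_{i}}\widetilde{g}_{n_{2}}|\le L'$ there, whence $|f_{i}|\le L'$ on $\overline{\mathcal{V}}$.

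The main obstacle is the passage $n_{2}\uparrow\infty$: interchanging $\lim_{n_{1}}$ with $\partial/\partial x_{i}$ is routine, but proving that $\lim_{n_{2}\uparrow\infty}\partial_{x_{i}}\widetilde{g}_{n_{2}}(\lambda,x)$ exists at points where $g(\lambda,\cdot)$ fails to be differentiable is delicate, and this is precisely where concavity in the second argument — through \cite{MR610155}, or through the monotone difference quotients together with Dini's theorem — is indispensable.
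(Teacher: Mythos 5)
Your proposal is correct and follows essentially the same route as the paper: pass to the limit in $n_{1}$ first to reduce to the $x$-mollification of $g(\lambda,\cdot)$, then use concavity through the monotone difference quotients and the one-sided directional derivatives (the extension of Lemma~2 in the cited reference) to identify the $n_{2}$-limit, with measurability as a pointwise limit and the bound coming from the locally uniform Lipschitz constant of concave functions. The only slips are cosmetic: for a concave section the difference quotients \emph{increase} as the step size shrinks (Dini still applies), and your limit $\int\eta_{1}(w)\nabla q_{\lambda,x}(-w)\,\mathrm{d}w$ is just the integration-by-parts form of the paper's $f_{i}(\lambda,x)=\int\nabla\overline{g}(\lambda,x;y)\,\partial_{y_{i}}\eta_{1}(y)\,\mathrm{d}y$.
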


\begin{proof}
With the notation in \eqref{eq etan}, we have
\[
\eta_{n}(z)=n^{l}\eta_{1}(nz),\quad z\in\mathbb{R}^{l},~n\in\mathbb{N}.
\]
For $(\lambda,x)\in\mathcal{W}$ and $n_{2}\in\mathbb{N}$ large enough, the definition of $g_{n_{1},n_{2}}$ in \eqref{eq mollification 1}, the dominated convergence theorem, and Lemma~\ref{Lemma 1} yield
\[
\begin{aligned}
\lim_{n_{1}\uparrow\infty}\frac{\partial g_{n_{1},n_{2}}}{\partial x_{i}}(\lambda,x)&=\lim_{n_{1}\uparrow\infty}\int_{\mathbb{R}^{d}}\frac{\partial\eta_{n_{2}}}{\partial x_{i}}(x-y)\int_{\mathbb{R}^{m}}\eta_{n_{1}}(u)\overline{g}(\lambda-u,y)\mathrm{d}u\mathrm{d}y\\
&=\int_{\mathbb{R}^{d}}\frac{\partial\eta_{n_{2}}}{\partial x_{i}}(x-y)\lim_{n_{1}\uparrow\infty}\int_{\mathbb{R}^{m}}\eta_{n_{1}}(u)\overline{g}(\lambda-u,y)\mathrm{d}u\mathrm{d}y\\
&=\int_{\mathbb{R}^{d}}\frac{\partial\eta_{n_{2}}}{\partial x_{i}}(x-y)\overline{g}(\lambda,y)\mathrm{d}y\\
&=-\int_{\mathbb{R}^{d}}\frac{\partial\eta_{n_{2}}}{\partial y_{i}}(y)
\overline{g}(\lambda,x-y)\mathrm{d}y\\
&=\int_{\mathbb{R}^{d}}n_{2}\frac{\partial\eta_{1}}{\partial y_{i}}(y)
\overline{g}\left(\lambda,x+\frac{y}{n_{2}}\right)\mathrm{d}y\\
&=\int_{\mathbb{R}^{d}}\frac{\partial\eta_{1}}{\partial y_{i}}(y)
n_{2}\left(\overline{g}\left(\lambda,x+\frac{y}{n_{2}}\right)-\overline{g}\left(\lambda,x\right)\right)\mathrm{d}y.\\
\end{aligned}
\]
Note that the last equality holds due to the fact that
\[
\int_{\mathbb{R}^{d}}\frac{\partial\eta_{1}}{\partial y_{i}}(y)\mathrm{d}y=0.
\]

Next, for all $(\lambda,x)\in\mathcal{W}$ and $y\in\mathbb{R}^{d}$, define the one-sided directional partial derivative as
\[
\nabla g(\lambda,x;y)=\lim_{n_{2}\uparrow\infty}\frac{g\left(\lambda,x+y/n_{2}\right)-g(\lambda,x)}{1/n_{2}}.
\]
Such $\nabla g$ exists according to Theorem~23.1 in \cite{MR0274683}. Since $g$ is concave in the second argument, it is locally Lipschitz in its second argument on $\mathcal{W}$ (see Theorem~10.4 in \cite{MR0274683}). Hence, for each compact $\mathcal{\overline{V}}\subset\mathcal{W}$, there exists a constant $L=L(\mathcal{\overline{V}})\geq0$ such that $\nabla g(\lambda,x;y)\leq L$, for all $y\in\mathbb{R}^{d}$ and $(\lambda,x)$ in the interior of $\mathcal{\overline{V}}$.

The statement now follows with
\[
f_{i}(\lambda,x)=\int_{\mathbb{R}^{d}}\nabla\overline{g}(\lambda,x;y)\frac{\partial\eta_{1}}{\partial y_{i}}(y)\mathrm{d}y,
\]
for all $(\lambda,x)\in\mathcal{W}$, by the dominated convergence theorem.
\end{proof}

\begin{lemma}\label{Lemma 2}
Assume that $\mu(\cdot)$ has Doob-Meyer decomposition $\mu(\cdot)=\mu(0)+M(\cdot)+V(\cdot)$, where $M(\cdot)$ is a $d$-dimensional continuous local martingale and $V(\cdot)$ is a $d$-dimensional finite variation process with $M(0)=V(0)=0$. Moreover, suppose that,
\begin{enumerate}[label=(\roman*)]
\item for some open $\mathcal{V}\subset\mathcal{W}$, we have $\big(\Lambda(\cdot),\mu(\cdot)\big)=\big(\Lambda(\cdot\wedge\tau),\mu(\cdot\wedge\tau)\big)$, where
\[
\tau=\inf\left\{t\geq0;~\big(\Lambda(t),\mu(t)\big)\notin\mathcal{V}\right\};
\]
\item for some constant $\kappa\geq0$, we have
\begin{equation}\label{eq assumption 1}
\sum_{i=1}^{d}\left(\langle M_{i},M_{i}\rangle(\infty)+\int_{0}^{\infty}\mathrm{d}|V_{i}(t)|\right)+\sum_{v=1}^{m}\int_{0}^{\infty}\mathrm{d}|\Lambda_{v}(t)|\leq\kappa<\infty.
\end{equation}
\end{enumerate}

Consider two families of bounded functions $(h_{i})_{i\in\{1,\cdots,d\}}$ and $(h_{i}^{n_{1},n_{2}})_{n_{1},n_{2}\in\mathbb{N},i\in\{1,\cdots,d\}}$ with $h_{i},h_{i}^{n_{1},n_{2}}:\mathcal{V}\rightarrow\mathbb{R}$. If
\[
\lim_{n_{2}\uparrow\infty}\lim_{n_{1}\uparrow\infty}h_{i}^{n_{1},n_{2}}=h_{i},\quad i\in\{1,\cdots,d\},
\]
then there exist two random subsequences $\left(n_{1}^{k}\right)_{k\in\mathbb{N}}$ and $\left(n_{2}^{k}\right)_{k\in\mathbb{N}}$ with $\lim_{k\uparrow\infty}n_{1}^{k}=\infty=\lim_{k\uparrow\infty}n_{2}^{k}$ such that
\[
\lim_{k\uparrow\infty}\int_{0}^{t}\sum_{i=1}^{d}
h_{i}^{n_{1}^{k},n_{2}^{k}}\big(\Lambda(u),\mu(u)\big)\mathrm{d}\mu_{i}(u)=\int_{0}^{t}\sum_{i=1}^{d}h_{i}\big(\Lambda(u),\mu(u)\big)\mathrm{d}\mu_{i}(u),\quad\text{a.s.},
\]
for all $t\geq0$.
\end{lemma}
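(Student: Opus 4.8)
The plan is to split the integrator into its local-martingale and finite-variation parts, pass to the iterated limit in each part separately, and then collapse the double sequence to a single subsequence by a diagonalisation. First, by the localisation hypothesis~(i) the process $\big(\Lambda(\cdot),\mu(\cdot)\big)$ never leaves $\mathcal{V}$ (and $\mathrm{d}\mu_i$ does not charge $[\tau,\infty)$), so each integrand below is evaluated only at points of $\mathcal{V}$, where $h_i$ and every $h_i^{n_1,n_2}$ is bounded; write $K$ for a common bound of the families $(h_i)_i$ and $(h_i^{n_1,n_2})_{n_1,n_2,i}$, and set $h_i^{(n_2)}:=\lim_{n_1\uparrow\infty}h_i^{n_1,n_2}$, a pointwise limit that exists by hypothesis and is again bounded by $K$. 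Using the Doob--Meyer decomposition $\mu_i=\mu_i(0)+M_i+V_i$ I would handle $\int_0^\cdot h_i^{n_1,n_2}(\Lambda,\mu)\,\mathrm{d}V_i$ and $\int_0^\cdot h_i^{n_1,n_2}(\Lambda,\mu)\,\mathrm{d}M_i$ separately.

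The finite-variation part is routine: for fixed $\omega$, $\mathrm{d}|V_i(\cdot)|$ is a finite measure on $[0,\infty)$ by \eqref{eq assumption 1}, the integrands converge pointwise in $u$, and they are dominated by the constant $K$; hence the dominated convergence theorem, applied pathwise, gives
\[
\sup_{t\geq0}\left|\int_0^t\big(h_i^{n_1,n_2}-h_i\big)\big(\Lambda(u),\mu(u)\big)\,\mathrm{d}V_i(u)\right|\leq\int_0^\infty\big|h_i^{n_1,n_2}-h_i\big|\big(\Lambda(u),\mu(u)\big)\,\mathrm{d}|V_i(u)|\longrightarrow0
\]
in the iterated sense ($n_1\uparrow\infty$, then $n_2\uparrow\infty$, using first $h_i^{n_1,n_2}\to h_i^{(n_2)}$ and then $h_i^{(n_2)}\to h_i$).

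For the local-martingale part, since $h_i^{n_1,n_2}$ is bounded and $\langle M_i,M_i\rangle(\infty)\leq\kappa$ by \eqref{eq assumption 1}, the processes $N_i^{n_1,n_2}:=\int_0^\cdot h_i^{n_1,n_2}(\Lambda,\mu)\,\mathrm{d}M_i$ and $N_i:=\int_0^\cdot h_i(\Lambda,\mu)\,\mathrm{d}M_i$ are square-integrable martingales, and the It\^o isometry together with Doob's $L^2$-inequality gives
\[
\textsf{E}\Big[\sup_{t\geq0}\big(N_i^{n_1,n_2}(t)-N_i(t)\big)^2\Big]\leq4\,\textsf{E}\Big[\int_0^\infty\big(h_i^{n_1,n_2}-h_i\big)^2\big(\Lambda(u),\mu(u)\big)\,\mathrm{d}\langle M_i,M_i\rangle(u)\Big].
\]
The integrand on the right is bounded by $(2K)^2$ and $\textsf{E}[\langle M_i,M_i\rangle(\infty)]\leq\kappa<\infty$, so dominated convergence drives the right-hand side to $0$ as $n_1\uparrow\infty$ and then $n_2\uparrow\infty$. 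Writing $a_{n_1,n_2}:=\textsf{E}\big[\sup_{t\geq0}\big|\sum_{i=1}^d\int_0^t(h_i^{n_1,n_2}-h_i)(\Lambda,\mu)\,\mathrm{d}\mu_i\big|^2\big]$, the two estimates (with $h_i^{(n_2)}$ as intermediate target and the triangle inequality in $L^2$) yield a bound $a_{n_1,n_2}^{1/2}\leq\varepsilon_{n_1,n_2}+d_{n_2}$ with $\varepsilon_{n_1,n_2}\to0$ as $n_1\uparrow\infty$ for each fixed $n_2$, and $d_{n_2}\to0$ as $n_2\uparrow\infty$. Taking $n_2^k:=k$ and then $n_1^k$ so large that $\varepsilon_{n_1^k,k}\leq1/k$ gives $a_{n_1^k,n_2^k}\to0$; a further subsequence upgrades the resulting $L^2$-convergence to almost sure convergence, uniformly in $t$, which is exactly the assertion. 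Equivalently, the almost surely convergent subsequence can be pulled directly out of the $L^1$-bounded family of stochastic integrals by Koml\'os' theorem, which is the route of Lemma~2 in \cite{MR610155}.

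The step I expect to be the real obstacle is the interchange of the limits with the \emph{stochastic} integral: there is no pathwise dominated convergence available for it, which forces the argument through $L^2$ via the It\^o isometry and Doob's inequality --- this is precisely where the global integrability bound \eqref{eq assumption 1} is indispensable --- after which the passage from iterated $L^2$-convergence of the double sequence to almost sure convergence of a single subsequence has to be closed by the diagonalisation (or Koml\'os) step.
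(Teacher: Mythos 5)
Your proposal is correct and follows essentially the same route as the paper: split $\mu_i$ into its local-martingale and finite-variation parts, use pathwise bounded/dominated convergence for the $\mathrm{d}V_i$ and $\mathrm{d}\langle M_i,M_i\rangle$ integrals under the bound \eqref{eq assumption 1}, control the martingale part in $L^{2}$ via the It\^o isometry and Doob's inequality, and then extract the subsequence by the same diagonalisation over $(n_1,n_2)$ followed by passing from $L^{2}$ to almost sure convergence. The only cosmetic difference is your intermediate target $h_i^{(n_2)}$ and the closing remark on Koml\'os' theorem, which the paper does not need here (it invokes Koml\'os only later, in Lemma~\ref{Lemma 3}).
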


\begin{proof}
Fix $i\in\{1,\cdots,d\}$ and write
\[
\Theta_{i}^{n_{1},n_{2}}(\cdot)=h_{i}^{n_{1},n_{2}}\big(\Lambda(\cdot),\mu(\cdot)\big)-h_{i}\big(\Lambda(\cdot),\mu(\cdot)\big).
\]
By \eqref{eq assumption 1} and the bounded convergence theorem, we have
\[
\lim_{n_{2}\uparrow\infty}\lim_{n_{1}\uparrow\infty}\int_{0}^{\infty}\big(\Theta_{i}^{n_{1},n_{2}}(t)\big)^{2}\mathrm{d}\langle M_{i},M_{i}\rangle(t)=0,\quad\text{a.s.},
\]
and
\[
\lim_{n_{2}\uparrow\infty}\lim_{n_{1}\uparrow\infty}\left(\int_{0}^{\infty}\left|\Theta_{i}^{n_{1},n_{2}}(t)\right|\mathrm{d}|V_{i}(t)|\right)^{2}=0,\quad\text{a.s.}
\]
Therefore, by the bounded convergence theorem again, we have
\[
\begin{aligned}
0&=\mathbb{E}\left[\lim_{n_{2}\uparrow\infty}\lim_{n_{1}\uparrow\infty}\int_{0}^{\infty}\big(\Theta_{i}^{n_{1},n_{2}}(t)\big)^{2}\mathrm{d}\langle M_{i},M_{i}\rangle(t)\right]\\
&=\lim_{n_{2}\uparrow\infty}\lim_{n_{1}\uparrow\infty}\mathbb{E}\left[\int_{0}^{\infty}\big(\Theta_{i}^{n_{1},n_{2}}(t)\big)^{2}\mathrm{d}\langle M_{i},M_{i}\rangle(t)\right]\\
&=\lim_{n_{2}\uparrow\infty}\lim_{n_{1}\uparrow\infty}\mathbb{E}\left[\left(\int_{0}^{\infty}\Theta_{i}^{n_{1},n_{2}}(t)\mathrm{d}M_{i}(t)\right)^{2}\right],
\end{aligned}
\]
by It\^{o}'s isometry, and
\begin{equation}\label{eq Vpart}
0=\lim_{n_{2}\uparrow\infty}\lim_{n_{1}\uparrow\infty}\mathbb{E}\left[\left(\int_{0}^{\infty}\left|\Theta_{i}^{n_{1},n_{2}}(t)\right|\mathrm{d}|V_{i}(t)|\right)^{2}\right].
\end{equation}

Since $\int_{0}^{\cdot}\Theta_{i}^{n_{1},n_{2}}(t)\mathrm{d}M_{i}(t)$ is a uniformly integrable martingale (as it is a local martingale with bounded quadratic variation), Doob's submartingale inequality yields
\[
\mathbb{E}\left[\left(\sup_{t\geq0}\left|\int_{0}^{t}\Theta_{i}^{n_{1},n_{2}}(u)\mathrm{d}M_{i}(u)\right|\right)^{2}\right]
\leq4\mathbb{E}\left[\left(\int_{0}^{\infty}\Theta_{i}^{n_{1},n_{2}}(t)\mathrm{d}M_{i}(t)\right)^{2}\right],
\]
which implies
\begin{equation}\label{eq Mpart}
0=\lim_{n_{2}\uparrow\infty}\lim_{n_{1}\uparrow\infty}\mathbb{E}\left[\left(\sup_{t\geq0}
\left|\int_{0}^{t}\Theta_{i}^{n_{1},n_{2}}(u)\mathrm{d}M_{i}(u)\right|\right)^{2}\right].
\end{equation}
Therefore, we have
\[
\begin{aligned}
&\lim_{n_{2}\uparrow\infty}\lim_{n_{1}\uparrow\infty}\mathbb{E}\left[\left(\sup_{t\geq0}\left|\int_{0}^{t}\Theta_{i}^{n_{1},n_{2}}(u)\mathrm{d}\mu_{i}(u)\right|\right)^{2}\right]\\
&\leq\lim_{n_{2}\uparrow\infty}\lim_{n_{1}\uparrow\infty}2\mathbb{E}\Bigg[\left(\sup_{t\geq0}\left|\int_{0}^{t}\Theta_{i}^{n_{1},n_{2}}(u)\mathrm{d}M_{i}(u)\right|\right)^{2}
+\left(\sup_{t\geq0}\left|\int_{0}^{t}\Theta_{i}^{n_{1},n_{2}}(u)\mathrm{d}V_{i}(u)\right|\right)^{2}\Bigg]\leq0,
\end{aligned}
\]
where the second inequality holds due to \eqref{eq Vpart} and \eqref{eq Mpart}.

Write
\[
E_{i}^{n_{1},n_{2}}=\mathbb{E}\left[\left(\sup_{t\geq0}\left|\int_{0}^{t}\Theta_{i}^{n_{1},n_{2}}(u)\mathrm{d}\mu_{i}(u)\right|\right)^{2}\right],\quad n_{1},n_{2}\in\mathbb{N},
\]
and
\[
E_{i}=\lim_{n_{2}\uparrow\infty}\lim_{n_{1}\uparrow\infty}E_{i}^{n_{1},n_{2}}.
\]
For each $n_{2}\in\mathbb{N}$, denote $E_{i}^{n_{2}}=\lim_{n_{1}\uparrow\infty}E_{i}^{n_{1},n_{2}}$. Then we can find a subsequence $\big(n_{1}(n_{2})\big)_{n_{2}\in\mathbb{N}}$ of $\mathbb{N}$ with $n_{1}(n_{2})\uparrow\infty$ as $n_{2}\uparrow\infty$ such that, for each $n_{2}\in\mathbb{N}$,
\[
\left|E_{i}^{n_{1}(n_{2}),n_{2}}-E_{i}^{n_{2}}\right|\leq\frac{1}{n_{2}}.
\]
Since
\[
\begin{aligned}
\left|E_{i}^{n_{1}(n_{2}),n_{2}}-E_{i}\right|&\leq\left|E_{i}^{n_{1}(n_{2}),n_{2}}-E_{i}^{n_{2}}\right|+\left|E_{i}^{n_{2}}-E_{i}\right|\\
&\leq\frac{1}{n_{2}}+\left|E_{i}^{n_{2}}-E_{i}\right|\rightarrow0\quad\text{as}~n_{2}\uparrow\infty,
\end{aligned}
\]
we have $\lim_{n_{2}\uparrow\infty}E_{i}^{n_{1}(n_{2}),n_{2}}=E_{i}=0$. This implies
\[
\lim_{n_{2}\uparrow\infty}\sup_{t\geq0}\left|\int_{0}^{t}\sum_{i=1}^{d}h_{i}^{n_{1}(n_{2}),n_{2}}\big(\Lambda(u),\mu(u)\big)\mathrm{d}\mu_{i}(u)
-\int_{0}^{t}\sum_{i=1}^{d}h_{i}\big(\Lambda(u),\mu(u)\big)\mathrm{d}\mu_{i}(u)\right|=0
\]
in $L^{2}$. Since convergence in $L^{2}$ implies almost sure convergence of a subsequence, we can find a random subsequence $\left(n_{2}^{k}\right)_{k\in\mathbb{N}}$ of $\mathbb{N}$ with $n_{2}^{k}\uparrow\infty$ as $k\uparrow\infty$ and write $n_{1}^{k}=n_{1}\left(n_{2}^{k}\right)$ such that
\[
\lim_{k\uparrow\infty}\sup_{t\geq0}\left|\int_{0}^{t}\sum_{i=1}^{d}h_{i}^{n_{1}^{k},n_{2}^{k}}\big(\Lambda(u),\mu(u)\big)\mathrm{d}\mu_{i}(u)
-\int_{0}^{t}\sum_{i=1}^{d}h_{i}\big(\Lambda(u),\mu(u)\big)\mathrm{d}\mu_{i}(u)\right|=0,
\]
a.s. This implies the assertion.
\end{proof}

\vspace{5mm}

\begin{lemma}\label{Lemma 3}
Fix $l\in\mathbb{N}$; let $\overline{\Lambda}(\cdot)$ be an $l$-dimensional continuous process of finite variation; let $\left(\overline{\Upsilon}_{u,n}(\cdot)\right)_{u\in\{1,\cdots,l\},n\in\mathbb{N}}$ be a family of processes with $\left(\overline{\Upsilon}_{u,n}(\cdot)\right)_{n\in\mathbb{N}}$ uniformly bounded, for each $u\in\{1,\cdots,l\}$; and let $\left(\overline{\Theta}_{n}(\cdot)\right)_{n\in\mathbb{N}}$ be a sequence of non-decreasing continuous processes. Define
\[
H_{n}(\cdot)=\int_{0}^{\cdot}\sum_{u=1}^{l}\overline{\Upsilon}_{u,n}(t)\mathrm{d}\overline{\Lambda}_{u}(t)+\overline{\Theta}_{n}(\cdot),\quad n\in\mathbb{N}.
\]
If $\lim_{n\uparrow\infty}H_{n}(\cdot)=H(\cdot)$, a.s., then $H(\cdot)$ is of finite variation.
\end{lemma}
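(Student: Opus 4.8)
The plan is to fix a time horizon $T\geq 0$, derive a bound on the total variation of $H_n(\cdot)$ on $[0,T]$ that is uniform in $n$ and finite almost surely, and then conclude by lower semicontinuity of total variation under pointwise convergence. I would work on the almost sure event on which $H_n(t)\to H(t)$ for all $t\geq 0$, on which $\overline{\Lambda}(\cdot)$ has finite variation on $[0,T]$, and on which each $\overline{\Theta}_n(\cdot)$ is non-decreasing. Write $A_n(\cdot)=\int_0^\cdot\sum_{u=1}^l\overline{\Upsilon}_{u,n}(t)\,\mathrm{d}\overline{\Lambda}_u(t)$, so that $H_n=A_n+\overline{\Theta}_n$ with $A_n(0)=0$.

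First I would bound the integral part: letting $K_u<\infty$ denote a uniform bound for $(\overline{\Upsilon}_{u,n}(\cdot))_{n\in\mathbb{N}}$, the elementary Lebesgue--Stieltjes estimate gives $\mathrm{Var}_{[0,T]}(A_n)\leq\sum_{u=1}^l K_u\,\mathrm{Var}_{[0,T]}(\overline{\Lambda}_u)=:B_T$ for every $n$, where $B_T<\infty$ almost surely and does not depend on $n$. Next I would control the non-decreasing part: since $\overline{\Theta}_n$ is non-decreasing, $\mathrm{Var}_{[0,T]}(\overline{\Theta}_n)=\overline{\Theta}_n(T)-\overline{\Theta}_n(0)=\big(H_n(T)-H_n(0)\big)-\big(A_n(T)-A_n(0)\big)$. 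Because $H_n(T)-H_n(0)\to H(T)-H(0)$, the sequence $\big(H_n(T)-H_n(0)\big)_n$ is bounded, and $|A_n(T)-A_n(0)|\leq B_T$, so $C_T:=\sup_{n\in\mathbb{N}}\mathrm{Var}_{[0,T]}(\overline{\Theta}_n)<\infty$ almost surely.

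Combining the two estimates, $\mathrm{Var}_{[0,T]}(H_n)\leq B_T+C_T$ for all $n$. I would then take an arbitrary finite partition $0=s_0<s_1<\cdots<s_N=T$, note that $\sum_{k=0}^{N-1}|H_n(s_{k+1})-H_n(s_k)|\leq B_T+C_T$, let $n\uparrow\infty$ using the pointwise convergence of $H_n$ at the finitely many points $s_k$, and obtain $\sum_{k=0}^{N-1}|H(s_{k+1})-H(s_k)|\leq B_T+C_T$. Taking the supremum over all partitions yields $\mathrm{Var}_{[0,T]}(H)\leq B_T+C_T<\infty$ almost surely, and since $T\geq 0$ was arbitrary, $H(\cdot)$ is of finite variation.

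I expect the only genuinely delicate point to be the second estimate: a priori nothing prevents the non-decreasing pieces $\overline{\Theta}_n$ from blowing up, and it is precisely the convergence of $H_n$ together with the $n$-uniform variation bound on $A_n$ that forces $\overline{\Theta}_n(T)-\overline{\Theta}_n(0)$ to stay bounded. The remaining ingredients -- the total-variation estimate for a Stieltjes integral and the lower semicontinuity of total variation under pointwise limits -- are routine.
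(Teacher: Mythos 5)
Your proof is correct, and it takes a genuinely different and more elementary route than the paper. The paper invokes the Koml\'os theorem: it iterates over $u\in\{1,\cdots,l\}$, passing to convex combinations of the integrands $\overline{\Upsilon}_{u,n}(\cdot)$ that converge to bounded limits $\Upsilon_u(\cdot)$, notes that the same convex combinations of $H_n$ still converge to $H$ and that convex combinations of the $\overline{\Theta}_n$ remain non-decreasing, and then uses dominated convergence to write $H$ explicitly as $\int_0^\cdot\sum_u\Upsilon_u(t)\,\mathrm{d}\overline{\Lambda}_u(t)$ plus a limit of non-decreasing processes. You instead prove an $n$-uniform total variation bound: the Stieltjes estimate gives $\mathrm{Var}_{[0,T]}(A_n)\leq B_T$ from the uniform boundedness of the integrands, and the key observation that $\overline{\Theta}_n(T)-\overline{\Theta}_n(0)=(H_n(T)-H_n(0))-A_n(T)$ shows the monotone parts cannot blow up precisely because $H_n$ converges; lower semicontinuity of variation under pointwise limits (over partitions, on the full-measure event where $H_n(t)\to H(t)$ for all $t$, which is the reading used in the paper's application, where the convergence is even uniform in $t$) then finishes the argument. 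What each buys: your argument avoids Koml\'os and any measurability or convex-combination bookkeeping, and it suffices for the lemma as stated and for its use in Theorem~\ref{Theorem new1} (Cases 1--3 need only finite variation, and Case 4 gets monotonicity of $H$ directly as a limit of non-decreasing processes); the paper's argument yields additional structural information, namely a decomposition of $H$ into a Stieltjes integral against $\overline{\Lambda}$ with identified bounded integrands plus a non-decreasing part, which could be useful elsewhere but is not required here.
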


\begin{proof}
The following steps are partially inspired by the proof of Lemma~3.3 in \cite{Jaber_2016}.

Since $\left(\overline{\Upsilon}_{1,n}(\cdot)\right)_{n\in\mathbb{N}}$ is uniformly bounded, the Koml\'{o}s theorem (see Theorem~1.3 in \cite{MR1712239}) yields the following. For each $n\in\mathbb{N}$, there exists a convex combination $\Upsilon^{1}_{1,n}(\cdot)\in\mathrm{Conv}\left(\overline{\Upsilon}_{1,k}(\cdot),k\geq n\right)$ such that $\left(\Upsilon^{1}_{1,n}(\cdot)\right)_{n\in\mathbb{N}}$ converges to some adapted bounded process $\Upsilon_{1}(\cdot)$. More precisely, for each $n\in\mathbb{N}$, we can find some random integer $N_{n}\geq0$ and $\left(w_{n}^{k}\right)_{n\leq k\leq N_{n}}\subset[0,1]$ such that
\[
\sum_{k=n}^{N_{n}}w_{n}^{k}=1\quad\text{and}\quad\Upsilon^{1}_{1,n}(\cdot)=\sum_{k=n}^{N_{n}}w_{n}^{k}\overline{\Upsilon}_{1,k}(\cdot).
\]
For each $n\in\mathbb{N}$, define
\[
H^{1}_{n}(\cdot)=\sum_{k=n}^{N_{n}}w_{n}^{k}H_{n}(\cdot),\quad\Theta^{1}_{n}(\cdot)=\sum_{k=n}^{N_{n}}w_{n}^{k}\overline{\Theta}_{k}(\cdot),
\quad\text{and}\quad\Upsilon^{1}_{u,n}(\cdot)=\sum_{k=n}^{N_{n}}w_{n}^{k}\overline{\Upsilon}_{u,k}(\cdot),
\]
for all $u\in\{2,\cdots,l\}$.

Since $\lim_{n\uparrow\infty}H_{n}(\cdot)=H(\cdot)$, a.s., we have
\[
\left|H_{n}^{1}(\cdot)-H(\cdot)\right|=\left|\sum_{k=n}^{N_{n}}w_{n}^{k}H_{k}(\cdot)-H(\cdot)\right|
\leq\sum_{k=n}^{N_{n}}w_{n}^{k}\left|H_{k}(\cdot)-H(\cdot)\right|\rightarrow0
\]
as $n\uparrow\infty$, which implies $\lim_{n\uparrow\infty}H_{n}^{1}(\cdot)=H(\cdot)$, a.s. Besides, $\Theta^{1}_{n}(\cdot)$ is non-decreasing, as it is a convex combination of non-decreasing processes.

Since $\left(\Upsilon^{1}_{2,n}(\cdot)\right)_{n\in\mathbb{N}}$ is also uniformly bounded, by the Koml\'{o}s theorem again, for each $n\in\mathbb{N}$, there exists another convex combination $\Upsilon^{2}_{2,n}(\cdot)\in\mathrm{Conv}\left(\Upsilon^{1}_{2,k}(\cdot),k\geq n\right)$ such that $\left(\Upsilon^{2}_{2,n}(\cdot)\right)_{n\in\mathbb{N}}$ converges to some adapted bounded process $\Upsilon_{2}(\cdot)$. With the same convex combination for each $n\in\mathbb{N}$, define $\Upsilon^{2}_{u,n}(\cdot)$, for all $u\in\{1,3,\cdots,l\}$, $H_{n}^{2}(\cdot)$, and similarly $\Theta_{n}^{2}(\cdot)$. In particular, $\left(\Upsilon^{2}_{1,n}(\cdot)\right)_{n\in\mathbb{N}}$ still converges to $\Upsilon_{1}(\cdot)$, as for each $n\in\mathbb{N}$, $\Upsilon^{2}_{1,n}(\cdot)$ is a convex combination of processes that converge to $\Upsilon_{1}(\cdot)$. Similarly, we have $\lim_{n\uparrow\infty}H_{n}^{2}(\cdot)=H(\cdot)$, a.s. Moreover, $\Theta^{2}_{n}(\cdot)$ is non-decreasing.

Iteratively, we construct sequences of processes $\left(\Upsilon^{3}_{u,n}(\cdot)\right)_{n\in\mathbb{N}},\cdots,\left(\Upsilon^{l}_{u,n}(\cdot)\right)_{n\in\mathbb{N}}$, for each $u\in\{1,\cdots,l\}$, and processes $H_{n}^{3}(\cdot),\cdots,H_{n}^{l}(\cdot)$ and $\Theta_{n}^{3}(\cdot),\cdots,\Theta_{n}^{l}(\cdot)$ in the same manner. In particular, $\left(\Upsilon^{l}_{u,n}(\cdot)\right)_{n\in\mathbb{N}}$ converges to some adapted bounded process $\Upsilon_{u}$, for each $u\in\{1,\cdots,l\}$, and we have $\lim_{n\uparrow\infty}H_{n}^{l}(\cdot)=H(\cdot)$, a.s. Moreover, $\Theta^{l}_{n}(\cdot)$ is non-decreasing.

By the dominated convergence theorem, we have
\[
\lim_{n\uparrow\infty}\int_{0}^{\cdot}\sum_{u=1}^{l}\Upsilon^{l}_{u,n}(t)\mathrm{d}\overline{\Lambda}_{u}(t)
=\int_{0}^{\cdot}\sum_{u=1}^{l}\Upsilon_{u}(t)\mathrm{d}\overline{\Lambda}_{u}(t),\quad\text{a.s.},
\]
which is of finite variation. Therefore, we have
\[
H(\cdot)=\lim_{n\uparrow\infty}H^{l}_{n}(\cdot)=\int_{0}^{\cdot}\sum_{u=1}^{l}\Upsilon_{u}(t)\mathrm{d}\overline{\Lambda}_{u}(t)+\lim_{n\uparrow\infty}\Theta^{l}_{n}(\cdot),
\quad\text{a.s.}
\]
Since $\Theta^{l}_{n}(\cdot)$ is non-decreasing and converges, it is of finite variation, which implies the assertion.
\end{proof}

\subsection{Proof of Theorem \ref{Theorem new1}}

\begin{proof}[Proof of Theorem \ref{Theorem new1}.]
Assume that the semimartingale $\mu(\cdot)$ has the Doob-Meyer decomposition $\mu(\cdot)=\mu(0)+M(\cdot)+V(\cdot)$, where $M(\cdot)$ is a $d$-dimensional continuous local martingale and $V(\cdot)$ is a $d$-dimensional finite variation process with $M(0)=V(0)=0$.

Let $(\mathcal{W}_{n})_{n\in\mathbb{N}}$ be a non-decreasing sequence of open sets such that the closure of $\mathcal{W}_{n}$ is in $\mathcal{W}$, for all $n\in\mathbb{N}$. For each $\kappa\in\mathbb{N}$, we consider the stopping time
\begin{equation}\label{eq tauk}
\begin{aligned}
\tau_{\kappa}=\inf\Bigg\{&t\geq0;~\big(\Lambda(t),\mu(t)\big)\notin\mathcal{W}_{\kappa}&\\
&\text{or}~\sum_{i,j=1}^{d}\left|\langle M_{i},M_{j}\rangle\right|(t)+\sum_{i=1}^{d}\int_{0}^{t}\mathrm{d}|V_{i}(u)|+\sum_{v=1}^{m}\int_{0}^{t}\mathrm{d}|\Lambda_{v}(u)|\geq \kappa\Bigg\}&
\end{aligned}
\end{equation}
with $\inf\{\emptyset\}=\infty$. Since $\big(\Lambda(\cdot),\mu(\cdot)\big)\in\mathcal{W}$, we have $\lim_{\kappa\uparrow\infty}\tau_{\kappa}=\infty$, a.s. As $\bigcup_{\kappa\in\mathbb{N}}\{\tau_{\kappa}>t\}=\Omega$, for all $t\geq0$, to prove that $G$ is regular (Lyapunov), it is equivalent to show that $G$ is regular (Lyapunov) for $\Lambda\left(\cdot\wedge\tau_{\kappa}\right)$ and $\mu\left(\cdot\wedge\tau_{\kappa}\right)$, for all $\kappa\in\mathbb{N}$. Hence, without loss of generality, let us assume that $\big(\Lambda(\cdot),\mu(\cdot)\big)=\big(\Lambda(\cdot\wedge\tau_{\kappa}),\mu(\cdot\wedge\tau_{\kappa})\big)$, for some $\kappa\in\mathbb{N}$.

Without loss of generality, assume that $a_{ij}(\cdot)$ is a predictable and uniformly bounded process, for all $i,j\in\{1,\cdots,d\}$, such that
\[
\langle \mu_{i},\mu_{j}\rangle(t)=\int_{0}^{t}a_{ij}(u)\mathrm{d}A(u)\leq\kappa,\quad t\geq0,
\]
where $A(\cdot)=\sum_{i=1}^{d}\langle\mu_{i},\mu_{i}\rangle(\cdot)$. Here, the equality holds according to the Kunita–Watanabe theorem and the inequality due to \eqref{eq tauk}.

Now, consider a mollification $\left(G_{n_{1},n_{2}}\right)_{n_{1},n_{2}\in\mathbb{N}}$ of $G$ defined as in \eqref{eq mollification 1}. By Lemma~\ref{Lemma 1}, It\^{o}'s lemma applied to $G_{n_{1},n_{2}}$ yields
\begin{equation}\label{eq lemma.3.1}
\begin{aligned}
G_{n_{1},n_{2}}\big(\Lambda(t),\mu(t)\big)&=G_{n_{1},n_{2}}\big(\Lambda(0),\mu(0)\big)+\int_{0}^{t}\sum_{i=1}^{d}\frac{\partial G_{n_{1},n_{2}}}{\partial x_{i}}\big(\Lambda(u),\mu(u)\big)\mathrm{d}\mu_{i}(u)&\\
&+\int_{0}^{t}\Upsilon_{0,n_{1},n_{2}}(u)\mathrm{d}A(u)+\int_{0}^{t}\sum_{v=1}^{m}\Upsilon_{v,n_{1},n_{2}}(u)\mathrm{d}\Lambda_{v}(u),&
\end{aligned}
\end{equation}
for all $t\geq0$, where
\[
\Upsilon_{0,n_{1},n_{2}}(t)=\frac{1}{2}\sum_{i,j=1}^{d}\frac{\partial^{2}G_{n_{1},n_{2}}}{\partial x_{i}\partial x_{j}}\big(\Lambda(t),\mu(t)\big)a_{ij}(t)\quad\text{and}\quad \Upsilon_{v,n_{1},n_{2}}(t)=\frac{\partial G_{n_{1},n_{2}}}{\partial \lambda_{v}}\big(\Lambda(t),\mu(t)\big),
\]
for all $v\in\{1,\cdots,m\}$.

For all $(\lambda,x)\in\mathcal{W}$ and $i\in\{1,\cdots,d\}$, if (bi) holds, Lemma~\ref{Lemma 1}(iii) yields
\[
\lim_{n_{2}\uparrow\infty}\lim_{n_{1}\uparrow\infty}\frac{\partial G_{n_{1},n_{2}}}{\partial x_{i}}(\lambda,x)=\frac{\partial G}{\partial x_{i}}(\lambda,x);
\]
if (bii) holds, Lemma~\ref{Lemma 4} yields
\[
\lim_{n_{2}\uparrow\infty}\lim_{n_{1}\uparrow\infty}\frac{\partial G_{n_{1},n_{2}}}{\partial x_{i}}(\lambda,x)=f_{i}(\lambda,x),
\]
for some measurable function $f_{i}$. Then according to Lemma~\ref{Lemma 2}, we can find random subsequences $\left(n_{1}^{k}\right)_{k\in\mathbb{N}}$ and $\left(n_{2}^{k}\right)_{k\in\mathbb{N}}$ with $\lim_{k\uparrow\infty}n_{1}^{k}=\infty=\lim_{k\uparrow\infty}n_{2}^{k}$ such that, if we write $G_{k}=G_{n_{1}^{k},n_{2}^{k}}$, we have
\begin{equation}\label{eq F}
\lim_{k\uparrow\infty}\int_{0}^{t}\sum_{i=1}^{d}\frac{\partial G_{k}}{\partial x_{i}}\big(\Lambda(u),\mu(u)\big)\mathrm{d}\mu_{i}(u)=F\big(\Lambda(t),\mu(t)\big),\quad\text{a.s.},
\end{equation}
for all $t\geq0$, where
\[
\begin{aligned}
F\big(\Lambda(t),\mu(t)\big)=
\begin{cases}
\int_{0}^{t}\sum_{i=1}^{d}\frac{\partial G}{\partial x_{i}}\big(\Lambda(u),\mu(u)\big)\mathrm{d}\mu_{i}(u),\quad&\text{if (bi) holds}\\
\int_{0}^{t}\sum_{i=1}^{d}f_{i}\big(\Lambda(u),\mu(u)\big)\mathrm{d}\mu_{i}(u),&\text{if (bii) holds}
\end{cases}.
\end{aligned}
\]

To proceed, write
\[
H_{k}(t)=G_{k}\big(\Lambda(0),\mu(0)\big)-G_{k}\big(\Lambda(t),\mu(t)\big)+\int_{0}^{t}\sum_{i=1}^{d}\frac{\partial G_{k}}{\partial x_{i}}\big(\Lambda(u),\mu(u)\big)\mathrm{d}\mu_{i}(u),
\]
for all $k\in\mathbb{N}$, and
\[
H(t)=G\big(\Lambda(0),\mu(0)\big)-G\big(\Lambda(t),\mu(t)\big)+F\big(\Lambda(t),\mu(t)\big),
\]
for all $t\geq0$. Then, \eqref{eq lemma.3.1} with respect to the random subsequences $\left(n_{1}^{k}\right)_{k\in\mathbb{N}}$ and $\left(n_{2}^{k}\right)_{k\in\mathbb{N}}$ is of the form
\[
H_{k}(t)=-\int_{0}^{t}\Upsilon_{0,k}(u)\mathrm{d}A(u)-\int_{0}^{t}\sum_{v=1}^{m}\Upsilon_{v,k}(u)\mathrm{d}\Lambda_{v}(u),\quad t\geq0.
\]
Note that by Lemma~\ref{Lemma 1}(i) and \eqref{eq F}, $\lim_{k\uparrow\infty}H_{k}(t)=H(t)$, a.s., for all $t\geq0$.

A measurable function $DG$ in Condition~$1$ of Definition~\ref{Definition 3.1} is chosen with components
\[
\begin{aligned}
D_{i}G\big(\lambda,x\big)=
\begin{cases}
\frac{\partial G}{\partial x_{i}}(\lambda,x),\quad&\text{if (bi) holds}\\
f_{i}(\lambda,x),&\text{if (bii) holds}
\end{cases},\quad i\in\{1,\cdots,d\}.
\end{aligned}
\]
Then, as $\Gamma^{G}(\cdot)=H(\cdot)$ according to \eqref{eq 3.2}, it is enough to show that $H(\cdot)$ is of finite variation in the following four cases.

\textit{Case 1.}

Assume that (ai) and (bi) hold. Then by Lemma~\ref{Lemma 1}, the processes $\big(\Upsilon_{0,k}(\cdot)\big)_{k\in\mathbb{N}}$ and $\big(\Upsilon_{v,k}(\cdot)\big)_{v\in\{1,\cdots,m\},k\in\mathbb{N}}$ are uniformly bounded. With $l=m+1$, $\overline{\Lambda}_{v}(\cdot)=\Lambda_{v}(\cdot)$ and $\big(\overline{\Upsilon}_{v,k}(\cdot)\big)_{k\in\mathbb{N}}=\big(\Upsilon_{v,k}(\cdot)\big)_{k\in\mathbb{N}}$, for all $v\in\{1,\cdots,m\}$, $\overline{\Lambda}_{m+1}(\cdot)=A(\cdot)$, $\big(\overline{\Upsilon}_{m+1,k}(\cdot)\big)_{k\in\mathbb{N}}=\big(\Upsilon_{0,k}(\cdot)\big)_{k\in\mathbb{N}}$, and $\left(\overline{\Theta}_{k}(\cdot)\right)_{k\in\mathbb{N}}=0$, Lemma~\ref{Lemma 3} yields that $H(\cdot)$ is of finite variation on compact sets.

\textit{Case 2.}

Assume that (ai) and (bii) hold. By Lemma~\ref{Lemma 1}(iii), the processes $\big(\Upsilon_{v,k}(\cdot)\big)_{v\in\{1,\cdots,m\},k\in\mathbb{N}}$ are uniformly bounded. Since $G$ is concave in the second argument, for each $k\in\mathbb{N}$, $G_{k}$ is also concave in the second argument. As a consequence, the matrix
\[
\nabla^{2}G_{k}=\left(\frac{\partial^{2}G_{k}}{\partial x_{i}\partial x_{j}}\right)_{i,j\in\{1,\cdots,d\}}
\]
is negative semidefinite. Note that the matrix-valued process $a(\cdot)=\big(a_{ij}(\cdot)\big)_{i,j\in\{1,\cdots,d\}}$ can be chosen to be symmetric and positive semidefinite. Hence, we can find a matrix-valued process $\sigma(\cdot)=\big(\sigma_{ij}(\cdot)\big)_{i,j\in\{1,\cdots,d\}}$ such that $a(\cdot)=\sigma(\cdot)\sigma'(\cdot)$, which yields $a_{ij}(\cdot)=\sum_{l=1}^{d}\sigma_{il}(\cdot)\sigma_{jl}(\cdot)$, for all $i,j\in\{1,\cdots,d\}$. In this case,
\[
\begin{aligned}
\sum_{i,j=1}^{d}\frac{\partial^{2}G_{k}}{\partial x_{i}\partial x_{j}}\big(\Lambda(t),\mu(t)\big)a_{ij}(t)&=\sum_{i,j=1}^{d}\frac{\partial^{2}G_{k}}{\partial x_{i}\partial x_{j}}\big(\Lambda(t),\mu(t)\big)\sum_{l=1}^{d}\sigma_{il}(t)\sigma_{jl}(t)\\
&=\sum_{l=1}^{d}\sigma_{l}(t)\nabla^{2}G_{k}\big(\Lambda(t),\mu(t)\big)\sigma_{l}'(t)\leq0,
\end{aligned}
\]
for all $t\geq0$, where $\sigma_{l}(\cdot)$ is the $l$-th row of $\sigma(\cdot)$. Hence, $\Upsilon_{0,k}(t)\leq0$, for all $t\geq0$, which implies that the processes
\[
\overline{\Theta}_{k}(\cdot)=-\int_{0}^{\cdot}\Upsilon_{0,k}(t)\mathrm{d}A(t),\quad k\in\mathbb{N},
\]
are non-decreasing. Similar to Case~1, but now with $l=m$, Lemma~\ref{Lemma 3} yields again that $H(\cdot)$ is of finite variation.

\textit{Case 3.}

Assume that (aii) and (bi) hold. By Lemma~\ref{Lemma 1}(v), the process $\big(\Upsilon_{0,k}(\cdot)\big)_{k\in\mathbb{N}}$ is uniformly bounded. As $G$ is non-increasing in the $v$-th dimension of the first argument, so is $G_{k}$, for all $v\in\{1,\cdots,m\}$. Therefore, $\Upsilon_{v,k}(t)\leq0$, for all $t\geq0$, as $\Lambda(\cdot)$ is non-decreasing in the $v$-th dimension, for all $v\in\{1,\cdots,m\}$. This implies that the processes
\[
\overline{\Theta}_{k}(\cdot)=-\int_{0}^{\cdot}\sum_{v=1}^{m}\Upsilon_{v,k}(t)\mathrm{d}\Lambda_{v}(t),\quad k\in\mathbb{N},
\]
are non-decreasing. Similar to above, Lemma~\ref{Lemma 3} implies that $H(\cdot)$ is of finite variation.

\textit{Case 4.}

Assume that (aii) and (bii) hold. With
\[
\overline{\Theta}_{k}(\cdot)=-\int_{0}^{\cdot}\Upsilon_{0,k}(t)\mathrm{d}A(t)-\int_{0}^{\cdot}\sum_{v=1}^{m}\Upsilon_{v,k}(t)\mathrm{d}\Lambda_{v}(t),\quad k\in\mathbb{N},
\]
Lemma~\ref{Lemma 3} implies again that $H(\cdot)$ is of finite variation. It is clear that $G$ is Lyapunov.
\end{proof}

\subsection{Proof of Theorem \ref{Theorem new2}}

\begin{proof}[Proof of Theorem~\ref{Theorem new2}.]
The following steps are partially inspired by the proof of Theorem~3.8 in \cite{MR3663643}. According to Theorem~2.3 in \cite{MR2428716}, for each $l\in\{1,\cdots,d\}$, one can find a measurable function $\boldsymbol{h}_{l}:\Delta^{d}\rightarrow(0,1]$ and a finite variation process $\boldsymbol{B}_{l}(\cdot)$ with $\boldsymbol{B}_{l}(0)=0$ such that
\begin{equation}\label{eq dmu1}
\boldsymbol{\mu}_{l}(\cdot)=\boldsymbol{\mu}_{l}(0)+\int_{0}^{\cdot}\sum_{i=1}^{d}\boldsymbol{h}_{l}\big(\mu(t)\big)
\boldsymbol{1}_{\left\{\mu_{(l)}(t)=\mu_{i}(t)\right\}}\mathrm{d}\mu_{i}(t)+\boldsymbol{B}_{l}(\cdot).
\end{equation}

Since $\boldsymbol{G}$ is regular for $\Lambda(\cdot)$ and $\boldsymbol{\mu}(\cdot)$, by Definition~\ref{Definition 3.1}, there exist a measurable function $D\boldsymbol{G}$ and a finite variation process $\Gamma^{\boldsymbol{G}}(\cdot)$ such that
\begin{equation}\label{eq G1}
\boldsymbol{G}\big(\Lambda(\cdot),\boldsymbol{\mu}(\cdot)\big)=\boldsymbol{G}\big(\Lambda(0),\boldsymbol{\mu}(0)\big)
+\int_{0}^{\cdot}\sum_{l=1}^{d}D_{l}\boldsymbol{G}\big(\Lambda(t),\boldsymbol{\mu}(t)\big)\mathrm{d}\boldsymbol{\mu}_{l}(t)-\Gamma^{\boldsymbol{G}}(\cdot).
\end{equation}
By \eqref{eq dmu1}, we have
\begin{equation}\label{eq Gamma1}
\begin{aligned}
\int_{0}^{\cdot}\sum_{l=1}^{d}D_{l}\boldsymbol{G}\big(\Lambda(t),\boldsymbol{\mu}(t)\big)\mathrm{d}\boldsymbol{\mu}_{l}(t)=&
\int_{0}^{\cdot}\sum_{l=1}^{d}D_{l}\boldsymbol{G}\big(\Lambda(t),\boldsymbol{\mu}(t)\big)\boldsymbol{h}_{l}\big(\mu(t)\big)
\boldsymbol{1}_{\left\{\mu_{(l)}(t)=\mu_{i}(t)\right\}}\mathrm{d}\mu_{i}(t)\\
&+\int_{0}^{\cdot}\sum_{l=1}^{d}D_{l}\boldsymbol{G}\big(\Lambda(t),\boldsymbol{\mu}(t)\big)\mathrm{d}\boldsymbol{B}_{l}(t).
\end{aligned}
\end{equation}

Now consider the measurable function $DG:\mathcal{W}\rightarrow\mathbb{R}^{d}$ with components
\[
D_{i}G(\lambda,x)=\sum_{l=1}^{d}D_{l}\boldsymbol{G}\big(\lambda,\mathfrak{R}(x)\big)\boldsymbol{h}_{l}(x)\boldsymbol{1}_{\left\{x_{(l)}=x_{i}\right\}},\quad i\in\{1,\cdots,d\},
\]
and the finite variation process
\[
\Gamma^{G}(\cdot)=\Gamma^{\boldsymbol{G}}(\cdot)-\int_{0}^{\cdot}\sum_{l=1}^{d}D_{l}\boldsymbol{G}\big(\Lambda(t),\boldsymbol{\mu}(t)\big)\mathrm{d}\boldsymbol{B}_{l}(t).
\]
Then \eqref{eq G1} and \eqref{eq Gamma1}, together with $G(\lambda,x)=\boldsymbol{G}\big(\lambda,\mathfrak{R}(x)\big)$, yield \eqref{eq 3.2}, i.e., $G$ is regular for $\Lambda(\cdot)$ and $\mu(\cdot)$.
\end{proof}

\subsection{An alternative proof for a special case}

The proof technique of Theorem~VII.31 in \cite{MR745449} suggests an alternative argument for the case that conditions (ai) and (bii) in Theorem~\ref{Theorem new1} hold. We summarize these ideas in the following result.
\\
\begin{theorem}\label{Theorem Dellacherie}
If a function $f:\mathcal{W}\rightarrow\mathbb{R}$ is locally Lipschitz in the first argument and concave in the second argument, then the process $f\big(\Lambda(\cdot),\mu(\cdot)\big)$ is a semimartingale.
\end{theorem}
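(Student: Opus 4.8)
While this statement is subsumed by Theorem~\ref{Theorem new1} --- its hypotheses are precisely conditions (ai) and (bii), and a regular function evaluated at $\big(\Lambda(\cdot),\mu(\cdot)\big)$ is automatically a semimartingale by \eqref{eq 3.2} --- the point here is a shorter, self-contained route organised around the mechanism behind Theorem~VII.31 in \cite{MR745449}: a concave function applied to a continuous semimartingale generates a second-order term of constant sign. The plan has three steps. First I would localise exactly as in the proof of Theorem~\ref{Theorem new1}: replacing $\big(\Lambda(\cdot),\mu(\cdot)\big)$ by a suitably stopped version, it is no loss of generality to assume that this pair takes values in a fixed compact $\overline{\mathcal V}\subset\mathcal W$, that the quadratic variation of $\mu(\cdot)$, the total variation of its finite-variation part, and the total variation of $\Lambda(\cdot)$ are all bounded, and that $\langle\mu_i,\mu_j\rangle(\cdot)=\int_0^\cdot a_{ij}\,\mathrm dA$ with $a(\cdot)$ a bounded, symmetric, positive semidefinite matrix-valued process and $A(\cdot)=\sum_i\langle\mu_i,\mu_i\rangle(\cdot)$; removing the localisation at the end is legitimate because a process that agrees, up to every term of a localising sequence of stopping times, with a semimartingale is itself a semimartingale. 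On $\overline{\mathcal V}$ the function $f$ is Lipschitz in its first argument with a uniform constant, and $f(\lambda,\cdot)$ is concave.

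Second, I would pass to the mollification $(f_{n_1,n_2})_{n_1,n_2\in\mathbb N}$ from \eqref{eq mollification 1}. As in Case~2 of the proof of Theorem~\ref{Theorem new1}, for $n_1,n_2$ large the Hessian $\nabla_x^2 f_{n_1,n_2}$ is negative semidefinite on $\overline{\mathcal V}$; by Lemma~\ref{Lemma 1}(iii) the derivatives $\partial f_{n_1,n_2}/\partial\lambda_v$ are uniformly bounded there; and by Lemma~\ref{Lemma 4} there exist locally bounded measurable functions $f_i$ with $\lim_{n_2\uparrow\infty}\lim_{n_1\uparrow\infty}\partial f_{n_1,n_2}/\partial x_i=f_i$. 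It\^o's formula applied to $f_{n_1,n_2}\big(\Lambda(\cdot),\mu(\cdot)\big)$ then writes it as the sum of $f_{n_1,n_2}\big(\Lambda(0),\mu(0)\big)$, the stochastic integral $\int_0^\cdot\sum_i(\partial f_{n_1,n_2}/\partial x_i)\,\mathrm d\mu_i$, the term $\int_0^\cdot\sum_v(\partial f_{n_1,n_2}/\partial\lambda_v)\,\mathrm d\Lambda_v$ with uniformly bounded integrands, and the curvature term $\tfrac12\int_0^\cdot\sum_{i,j}(\partial^2 f_{n_1,n_2}/\partial x_i\partial x_j)a_{ij}\,\mathrm dA$, which is non-increasing since $a(\cdot)$ is positive semidefinite and the Hessian negative semidefinite. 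This sign is the only place the concavity hypothesis enters, and it is the incarnation of the Theorem~VII.31 argument.

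Third, I would let $n_1,n_2\uparrow\infty$ along a random subsequence. By Lemma~\ref{Lemma 2} one obtains subsequences along which $\int_0^\cdot\sum_i(\partial f_{n_1,n_2}/\partial x_i)\,\mathrm d\mu_i\to\int_0^\cdot\sum_i f_i\big(\Lambda,\mu\big)\,\mathrm d\mu_i$ uniformly on compacts, a.s., and Lemma~\ref{Lemma 1}(i) gives $f_{n_1,n_2}\to f$ pointwise; hence $H(\cdot):=f\big(\Lambda(0),\mu(0)\big)-f\big(\Lambda(\cdot),\mu(\cdot)\big)+\int_0^\cdot\sum_i f_i\big(\Lambda,\mu\big)\,\mathrm d\mu_i$ is the a.s.\ limit of $-\int_0^\cdot\sum_v(\partial f_{n_1,n_2}/\partial\lambda_v)\,\mathrm d\Lambda_v$ plus the non-decreasing process $-\tfrac12\int_0^\cdot\sum_{i,j}(\partial^2 f_{n_1,n_2}/\partial x_i\partial x_j)a_{ij}\,\mathrm dA$. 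Lemma~\ref{Lemma 3}, applied with $l=m$, $\overline\Lambda=\Lambda$, the uniformly bounded $\partial f_{n_1,n_2}/\partial\lambda_v$ as integrands, and the above non-decreasing process as $\overline\Theta$, then shows that $H(\cdot)$ is of finite variation, so that $f\big(\Lambda(\cdot),\mu(\cdot)\big)=f\big(\Lambda(0),\mu(0)\big)+\int_0^\cdot\sum_i f_i\big(\Lambda,\mu\big)\,\mathrm d\mu_i-H(\cdot)$ is a stochastic integral against a continuous semimartingale plus a finite-variation process, hence a semimartingale. The step I expect to be the real obstacle is precisely this last passage to the limit in the $\Lambda$-term: since the integrands $\partial f_{n_1,n_2}/\partial\lambda_v$ are only uniformly bounded and need not converge pointwise, dominated convergence is unavailable, and one must replace them by convex combinations via the Koml\'os theorem --- which is exactly the content of Lemma~\ref{Lemma 3}.
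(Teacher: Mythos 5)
Your argument is correct, but it is not the route the paper takes for this statement, and in fact it is not really a ``shorter, self-contained'' alternative either: what you do is re-run Case~2 of the proof of Theorem~\ref{Theorem new1} (localisation via $\tau_\kappa$, mollification as in \eqref{eq mollification 1}, Lemma~\ref{Lemma 1}(iii) and Lemma~\ref{Lemma 4} for the bounds and limits of the derivatives, Lemma~\ref{Lemma 2} for the stochastic integrals along a random subsequence, and Lemma~\ref{Lemma 3} with the non-positive curvature term absorbed into $\overline\Theta$), which indeed proves the stronger statement that $f$ is regular and hence that $f\big(\Lambda(\cdot),\mu(\cdot)\big)$ is a semimartingale. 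The paper's proof of Theorem~\ref{Theorem Dellacherie} is genuinely different and much lighter: after the same localisation, it uses that $f$ is locally Lipschitz in \emph{both} arguments (Lipschitz in $\lambda$ by hypothesis, Lipschitz in $x$ by concavity and Theorem~10.4 in \cite{MR0274683}) to control the discrepancy between $f\big(\Lambda(t),\mu(t)\big)$ and $f\big(\Lambda(s),\mu(0)+M(t)+V(s)\big)$ by $L$ times the total variation of $\Lambda$ and of the finite-variation part $V$ accumulated on $(s,t]$, and then applies Jensen's inequality to the conditional expectation of the concave function of the martingale increment to conclude that $Z(\cdot)=-f\big(\Lambda(\cdot),\mu(\cdot)\big)+L\big(\sum_v\int_0^\cdot|\mathrm d\Lambda_v|+\sum_i\int_0^\cdot|\mathrm dV_i|\big)$ is a bounded submartingale; since submartingales are semimartingales, the claim follows. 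That Jensen/submartingale comparison is the actual mechanism of Theorem~VII.31 in \cite{MR745449}, so your framing of your own argument as embodying that mechanism is a mischaracterisation: the ``constant-sign second-order term'' you exploit only appears after mollifying and invoking It\^o, Koml\'os and the whole apparatus of Section~A.1, which is precisely what the paper's alternative proof is designed to avoid. In exchange, your route delivers more (an explicit decomposition with $D_iG=f_i$ and a finite-variation $\Gamma^G$, i.e.\ regularity), whereas the paper's proof yields only the semimartingale property but in a few lines and without any smoothing or convex-combination arguments.
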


\begin{proof}
Assume that the semimartingale $\mu(\cdot)$ has the Doob-Meyer decomposition $\mu(\cdot)=\mu(0)+M(\cdot)+V(\cdot)$, where $M(\cdot)$ is a $d$-dimensional continuous local martingale and $V(\cdot)$ is a $d$-dimensional finite variation process with $M(0)=V(0)=0$.

Let $(\mathcal{W}_{n})_{n\in\mathbb{N}}$ be a non-decreasing sequence of open sets such that the closure of $\mathcal{W}_{n}$ is in $\mathcal{W}$, for all $n\in\mathbb{N}$. For each $\kappa\in\mathbb{N}$, we consider the stopping time $\tau_{\kappa}$ given in \eqref{eq tauk}. Without loss of generality, let us assume again that $\big(\Lambda(\cdot),\mu(\cdot)\big)=\big(\Lambda(\cdot\wedge\tau_{\kappa}),\mu(\cdot\wedge\tau_{\kappa})\big)$, for some $\kappa\in\mathbb{N}$.

Since $f$ is locally Lipschitz in both arguments (see Theorem~10.4 in \cite{MR0274683}), we can find a Lipschitz constant $L$ such that, for all $s,t\geq0$ with $s\leq t$, we have
\begin{equation}\label{eq L}
\begin{aligned}
\big|f&\big(\Lambda(t),\mu(t)\big)-f\big(\Lambda(s),\mu(0)+M(t)+V(s)\big)\big|\\
&\leq L\left(\sum_{v=1}^{m}\big|\Lambda_{v}(t)-\Lambda_{v}(s)\big|+\sum_{i=1}^{d}\big|V_{i}(t)-V_{i}(s)\big|\right)\\
&\leq L\left(\sum_{v=1}^{m}\int_{s}^{t}\big|\mathrm{d}\Lambda_{v}(u)\big|
+\sum_{i=1}^{d}\int_{s}^{t}\big|\mathrm{d}V_{i}(u)\big|\right).
\end{aligned}
\end{equation}

Let
\[
Z(\cdot)=-f\big(\Lambda(\cdot),\mu(\cdot)\big)+L\left(\sum_{v=1}^{m}\int_{0}^{\cdot}\big|\mathrm{d}\Lambda_{v}(t)\big|
+\sum_{i=1}^{d}\int_{0}^{\cdot}\big|\mathrm{d}V_{i}(t)\big|\right),
\]
then $Z(\cdot)$ is bounded. Hence we have
\[
\begin{aligned}
\mathbb{E}\left[Z(t)-Z(s)|\mathcal{F}(s)\right]&=\mathbb{E}\left[f\big(\Lambda(s),\mu(s)\big)-f\big(\Lambda(s),\mu(0)+M(t)+V(s)\big)|\mathcal{F}(s)\right]\\
&+\mathbb{E}\Bigg[f\big(\Lambda(s),\mu(0)+M(t)+V(s)\big)-f\big(\Lambda(t),\mu(t)\big)\\
&+L\left(\sum_{v=1}^{m}\int_{s}^{t}\big|\mathrm{d}\Lambda_{v}(u)\big|
+\sum_{i=1}^{d}\int_{s}^{t}\big|\mathrm{d}V_{i}(u)\big|\right)\Big|\mathcal{F}(s)\Bigg]\\
&\geq\mathbb{E}\left[f\big(\Lambda(s),\mu(s)\big)-f\big(\Lambda(s),\mu(0)+M(t)+V(s)\big)|\mathcal{F}(s)\right]\geq0,
\end{aligned}
\]
where the first inequality is by \eqref{eq L} and the second inequality holds by Jensen's inequality. Therefore, $Z(\cdot)$ is a submartingale, which makes $f\big(\Lambda(\cdot),\mu(\cdot)\big)$ a semimartingale.
\end{proof}

\vspace{0.5cm}

\bibliographystyle{apalike}
\bibliography{Lyapunov}

\begin{thebibliography}{}

\bibitem[Banner and Ghomrasni, 2008]{MR2428716}
Banner, A.~D. and Ghomrasni, R. (2008).
\newblock Local times of ranked continuous semimartingales.
\newblock {\em Stochastic Process. Appl.}, 118(7):1244--1253.

\bibitem[Bouleau, 1981]{MR610155}
Bouleau, N. (1981).
\newblock Semi-martingales \`a valeurs {${\bf R}\sp{d}$}\ et fonctions
  convexes.
\newblock {\em C. R. Acad. Sci. Paris S\'er. I Math.}, 292(1):87--90.

\bibitem[Delbaen and Schachermayer, 1999]{MR1712239}
Delbaen, F. and Schachermayer, W. (1999).
\newblock A compactness principle for bounded sequences of martingales with
  applications.
\newblock In {\em Seminar on {S}tochastic {A}nalysis, {R}andom {F}ields and
  {A}pplications ({A}scona, 1996)}, volume~45 of {\em Progr. Probab.}, pages
  137--173. Birkh\"auser, Basel.

\bibitem[Dellacherie and Meyer, 1982]{MR745449}
Dellacherie, C. and Meyer, P.-A. (1982).
\newblock {\em Probabilities and {P}otential. {B}}, volume~72 of {\em
  North-Holland Mathematics Studies}.
\newblock North-Holland Publishing Co., Amsterdam.
\newblock Theory of martingales, Translated from the French by J. P. Wilson.

\bibitem[Evans, 1998]{MR1625845}
Evans, L.~C. (1998).
\newblock {\em Partial {D}ifferential {E}quations}, volume~19 of {\em Graduate
  Studies in Mathematics}.
\newblock American Mathematical Society, Providence, RI.

\bibitem[Fernholz, 2002]{MR1894767}
Fernholz, E.~R. (2002).
\newblock {\em Stochastic {P}ortfolio {T}heory}, volume~48 of {\em Applications
  of Mathematics (New York)}.
\newblock Springer-Verlag, New York.
\newblock Stochastic Modelling and Applied Probability.

\bibitem[Fernholz, 1999]{F_generating}
Fernholz, R. (1999).
\newblock Portfolio generating functions.
\newblock In Avellaneda, M., editor, {\em Quantitative Analysis in Financial
  Markets}. World Scientific.

\bibitem[Fernholz, 2001]{MR1861997}
Fernholz, R. (2001).
\newblock Equity portfolios generated by functions of ranked market weights.
\newblock {\em Finance Stoch.}, 5(4):469--486.

\bibitem[Fernholz and Karatzas, 2009]{FK_survey}
Fernholz, R. and Karatzas, I. (2009).
\newblock Stochastic {P}ortfolio {T}heory: an overview.
\newblock In Bensoussan, A., editor, {\em Handbook of Numerical Analysis},
  volume Mathematical Modeling and Numerical Methods in Finance. Elsevier.

\bibitem[Fernholz et~al., 2017]{Volatility}
Fernholz, R., Karatzas, I., and Ruf, J. (2017).
\newblock Volatility and arbitrage.
\newblock {\em Annals of Applied Probability}, forthcoming.

\bibitem[Jaber et~al., 2016]{Jaber_2016}
Jaber, E.~A., Bouchard, B., and Illand, C. (2016).
\newblock Stochastic invariance of closed sets with non-lipschitz coefficients.
\newblock Preprint, arXiv:1607.08717.

\bibitem[Karatzas and Ruf, 2017]{MR3663643}
Karatzas, I. and Ruf, J. (2017).
\newblock Trading strategies generated by {L}yapunov functions.
\newblock {\em Finance Stoch.}, 21(3):753--787.

\bibitem[Krylov, 2009]{MR2723141}
Krylov, N.~V. (2009).
\newblock {\em Controlled {D}iffusion {P}rocesses}, volume~14 of {\em
  Stochastic Modelling and Applied Probability}.
\newblock Springer-Verlag, Berlin.
\newblock Translated from the 1977 Russian original by A. B. Aries, Reprint of
  the 1980 edition.

\bibitem[Rockafellar, 1970]{MR0274683}
Rockafellar, R.~T. (1970).
\newblock {\em Convex {A}nalysis}.
\newblock Princeton Mathematical Series, No. 28. Princeton University Press,
  Princeton, N.J.

\bibitem[Schied et~al., 2016]{Schied_2016}
Schied, A., Speiser, L., and Voloshchenko, I. (2016).
\newblock Model-free portfolio theory and its functional master formula.
\newblock Preprint, arXiv:1606.03325.

\bibitem[Strong, 2014]{MR3246899}
Strong, W. (2014).
\newblock Generalizations of functionally generated portfolios with
  applications to statistical arbitrage.
\newblock {\em SIAM J. Financial Math.}, 5(1):472--492.

\bibitem[Wong, 2017]{Wong_2017}
Wong, T. K.~L. (2017).
\newblock On portfolios generated by optimal transport.
\newblock Preprint, arXiv:1709.03169.

\end{thebibliography}

\end{document}